\documentclass[10pt]{article}
\usepackage[utf8]{inputenc}
\usepackage{comment}
\usepackage{graphicx} % Required for inserting images
\usepackage{booktabs}
\usepackage[left=1in, bottom=1in, right=1in, top=1in]{geometry}
\usepackage{amsmath, bm}
\usepackage{longtable}
\usepackage{multirow}
\usepackage{algorithm}
\usepackage{mathrsfs}
\usepackage{algpseudocode}
\usepackage{amsfonts}

\usepackage[dvipsnames]{xcolor}
\usepackage{enumitem}
\usepackage{wrapfig}
\usepackage{algpseudocode}
\usepackage{pgfplots}
\pgfplotsset{compat=1.18}
\usepackage{amsthm}
\usepackage{caption}
\usepackage{csquotes}
\usepackage{hyperref}

\usepackage{tikz}
\usetikzlibrary{positioning}
\usepackage{setspace}
\newcommand{\barsim}{%
  \mathrel{%
    \tikz[baseline=-0.4ex]{
      \node (b) {$\lvert\mkern2mu\lvert$};
      \node[below=-1.5ex of b] {$\sim$};
    }%
  }%
}

\usepackage{rotating}

\newtheorem{theorem}{\noindent Theorem}

\newtheorem{corollary}{\noindent Corollary}

\newtheorem{lem}{\noindent Lemma}[section]
\newtheorem{cor}{\noindent Corollary}[section]

\newtheorem{assumpt}{\noindent Assumption}[section]

{\newtheorem{example}{\noindent Example}}
{}
{}
{}
{}
{\newtheorem{remark}{\noindent Remark}}
{}
{}
{}
{ }
{}
{}
{}
{}
{}
{}
{}
{}

\numberwithin{equation}{section}
\newcommand{\fst}[1]{\left(#1 \right)}
\newcommand{\secnd}[1]{\left\{#1 \right\}}
\newcommand{\thrd}[1]{\left[#1 \right]}
\newcommand{\E}{\mathbb{E}}

\newcommand{\vech}{\operatorname{vech}}

\hypersetup{
	colorlinks=true,
	linkcolor=blue,
	filecolor=magenta,      
	urlcolor=blue,
	citecolor=blue
}

\usepackage[citestyle=authoryear-comp,
    maxcitenames=2,
    giveninits=true,
    uniquename=init,
    uniquelist=false,
	ibidtracker=context,
    bibstyle=authoryear,
    dashed=false, % dashed: substitute rep. author with -
    date=year,
	sorting=nyt,
    minbibnames=3,
	hyperref=true,
	backref=true,
	citecounter=true,
	citetracker=true,
    labelnumber=true,
    natbib=true, % natbib compatibility mode (\citep and \citet still work)
	backend=biber, % Compile the bibliography with biber
]{biblatex}

\DeclareFieldFormat{citehyperref}{%
  \DeclareFieldAlias{bibhyperref}{noformat}% Avoid nested links
  \bibhyperref{#1}}

\DeclareFieldFormat{textcitehyperref}{%
  \DeclareFieldAlias{bibhyperref}{noformat}% Avoid nested links
  \bibhyperref{%
    #1%
    \ifbool{cbx:parens}
      {\bibcloseparen\global\boolfalse{cbx:parens}}
      {}}}

\savebibmacro{cite}
\savebibmacro{textcite}

\renewbibmacro*{cite}{%
  \printtext[citehyperref]{%
    \restorebibmacro{cite}%
    \usebibmacro{cite}}}

\renewbibmacro*{textcite}{%
  \ifboolexpr{
    ( not test {\iffieldundef{prenote}} and
      test {\ifnumequal{\value{citecount}}{1}} )
    or
    ( not test {\iffieldundef{postnote}} and
      test {\ifnumequal{\value{citecount}}{\value{citetotal}}} )
  }
    {\DeclareFieldAlias{textcitehyperref}{noformat}}
    {}%
  \printtext[textcitehyperref]{%
    \restorebibmacro{textcite}%
    \usebibmacro{textcite}}}
\DeclareAutoCiteCommand{inline}{\textcite}{\parencite}

\title{Robust Inferential Methodology for \\Multidimensional Diffusion Processes}
\author{Sourojyoti Barick \\ Interdisciplinary Statistical Research Unit, Indian Statistical Institute, Kolkata}       
\begin{document}

\maketitle
\begin{abstract}
We investigate robust parameter estimation and testing procedure for multivariate diffusion processes observed at high frequency via the minimum density power divergence estimator (MDPDE). Within a general diffusion framework and under standard regularity conditions, we establish consistency and asymptotic normality for the estimators of both drift and diffusion parameters. The drift estimator converges at the $\sqrt{n h_n}$ rate, whereas the diffusion estimator attains the standard $\sqrt{n}$ rate, and the two estimators are shown to be asymptotically independent. The proposed methodology constitutes a robust alternative to quasi-likelihood and ordinary least squares based approaches, offering resilience against outliers, local contamination, and mild model misspecification, while remaining asymptotically equivalent to classical methods in the absence of contamination. Simulation studies demonstrate that the MDPDE achieves reliable finite-sample performance and enhanced numerical stability relative to likelihood-based estimators. These results underscore the practical relevance of divergence-based estimation for high-frequency diffusion models and point to natural extensions to more complex continuous-time settings.

\textbf{Keywords:}  Diffusion, High-frequency, Robust-Inference, Multivariate
\end{abstract}

\section{Introduction}

Stochastic models form a cornerstone of  statistical inference and play a fundamental role in a wide range of applications, including finance, econometrics, queuing systems, epidemiology and the natural sciences. Inference for such models is traditionally based on likelihood or contrast-based methods, which typically assume a correct specification of the underlying data-generating mechanism; see, for instance, \cite{IbragimovHasminskii1981,Kessler1997,Uchida2010Contrast}. Under ideal conditions, these procedures enjoy desirable efficiency properties. However, their performance may deteriorate substantially in realistic settings where the observed data deviate from the assumed model, particularly due to the presence of outliers.

In such settings, outliers frequently arise as a consequence of sudden structural shocks or rare extreme events. Representative examples include abrupt market downturns triggered by large-scale disruptions such as the COVID--19 pandemic, sharp corrections following speculative bubbles in technology-driven markets, or sporadic high-magnitude jumps induced by unexpected policy interventions or systemic failures. Such atypical observations may appear as sudden level shifts, unusually large increments, or isolated extreme jumps in the observed trajectories. Although these events occur infrequently, their influence on inference can be disproportionate, often resulting in severe bias and instability of classical likelihood-based estimators, especially for drift and volatility parameters.

These considerations naturally motivate the development of robust inference procedures that remain stable under small departures from the assumed model while retaining high efficiency under the nominal specification. Robust statistical methodology is well established in the framework of independent and identically distributed (i.i.d.) observations. However, extending robustness principles to dependent data structures, and in particular to stochastic processes and diffusion models, presents substantial theoretical and methodological challenges. These difficulties arise from temporal dependence, continuous-time dynamics, and the frequent unavailability of closed-form transition densities; see, among others, \cite{GenonCatalotJacod1993,Kessler1997}.

Among modern robust estimation techniques, divergence-based methods have emerged as a particularly versatile and powerful class of procedures. A prominent example is the density power divergence introduced by \cite{Basu1998MDPDE}, which gives rise to the minimum density power divergence estimator (MDPDE). This estimator is indexed by a tuning parameter that provides a continuous trade-off between efficiency and robustness, recovering the maximum likelihood estimator as a limiting case. While the theoretical properties of the MDPDE are now well understood in the i.i.d.\ setting, their extension to dependent data and continuous-time stochastic models is a more recent and active area of research.

Motivated by these developments, the present paper investigates robust divergence-based inference for multidimensional diffusion processes observed at discrete time points. Our primary objective is to develop an MDPDE framework tailored to this setting, establish its large-sample properties, and demonstrate its robustness advantages relative to classical likelihood-based estimators in the presence of outliers and model deviations.

The rest of the paper is organized as follows. Section~\ref{sec:lit_survey} presents a review of the existing literature relevant
to the present study. In Section~\ref{sec:mdpd}, after a brief introduction to the minimum density power
divergence estimator (MDPDE) in subection~\ref{subsec:mdpd_intro}, we develop the estimation methodology and state the main theoretical results in subection~\ref{subsec:mdpd_estm_main_results}, including all principal theorems and
corollaries. The finite-sample performance of the proposed estimators is examined through simulation studies in section~\ref{sec:simulation}. Finally, proofs of all results are collected in the Appendix and relevant desiderata in a Supplementary section.

\section{Literature Survey}\label{sec:lit_survey}

Early developments in robust inference for stochastic processes were primarily driven
by quasi-likelihood and M-estimation methodologies. A foundational contribution was
made by \cite{KulkarniHeyde1987}, who introduced a general framework for constructing
robust quasi-likelihood estimating functions for discrete-time stochastic processes.
Subsequently, robust M-estimators for Poisson processes with periodic intensities were
proposed in \cite{YoshidaHayashi1990}, together with a detailed analysis of their
asymptotic properties. More general point process models were  explored
in \cite{AssuncaoGuttorp1995,AssuncaoGuttorp1999}, where robust estimation procedures were
developed for inhomogeneous Poisson and general point processes. In the context of
branching processes, \cite{Stoimenova2005} introduced robust and efficient estimators
under power series offspring distributions, with extensions provided in
\cite{StoimenovaAtanasov2011}. Robust estimation of the memory parameter for an infinite
source Poisson process, corresponding to a special case of the $M/G/\infty$ queue, was
studied in \cite{Fay2007}. 

A major advance in robust statistical methodology was the introduction of the density
power divergence (DPD) by \cite{Basu1998MDPDE}, which provides a flexible class of
divergence measures indexed by a tuning parameter controlling the trade-off between
efficiency and robustness. The resulting minimum density power divergence estimator
(MDPDE) has been shown to possess strong robustness properties while retaining high
efficiency under the assumed model. The MDPDE framework has since been extended beyond the i.i.d.\ setting. For discretely
observed diffusion processes, initial developments appeared in \cite{Song2007MDPDE},
with a rigorous asymptotic theory, including consistency and asymptotic normality,
established in \cite{LeeSong2013}. Further contributions include robust estimation of
dispersion parameters in univariate diffusion models \cite{Song2017RobustDiffusion} and
divergence-based inference for more general discretely observed univariate stochastic processes
\cite{HoreGhosh2022MDPDEstochastic}. A broad overview of divergence-based robust
procedures is provided in \cite{PardoMartin2021Divergence}. Collectively, these works
demonstrate the effectiveness of divergence-based methods as a tractable and robust
alternative to likelihood-based inference in the presence of contamination or model
misspecification.

Robust estimation has also received considerable
attention within the diffusion framework. Early work by \cite{Yoshida1988} investigated efficient robust estimation via
M-estimation techniques. Optimally robust asymptotic linear estimators for the
Ornstein--Uhlenbeck process were developed in \cite{Rieder2012}, while robust parameter
estimation for the Weibull process was considered in \cite{WangBebbington2013}. These
studies laid the groundwork for robust inference in continuous-time stochastic
models.

Statistical inference for multidimensional diffusion processes introduces additional
challenges due to high dimensionality and complex dependence structures. Early work on
estimation of diffusion coefficients in multivariate settings can be traced back to
\cite{GenonCatalotJacod1993}. Nonparametric approaches for estimating diffusion
coefficients in multidimensional time-varying models were proposed in
\cite{WangChen2020MultidimDiffusionCoeff}, while recent Bayesian nonparametric
developments for multidimensional diffusion models were studied in
\cite{HoffmannRay2024MultidimDiffusion}. 

% Standard references on multivariate analysis
% and matrix calculus relevant to multidimensional diffusion modeling include
% \cite{MardiaKentBibby1979,Anderson2003,MagnusNeudecker1988,PetersenPedersen2012}, with
% probabilistic foundations summarized in \cite{SchillingPartzsch2014}.

In parallel, a substantial body of work has contributed to robust estimation and hypothesis testing for multivariate
dependent data, providing important methodological foundations for inference in multidimensional
diffusion settings. Robust Wald-type tests and divergence-based testing procedures for
multivariate parametric models were developed in
\cite{BasuGhoshMandol2016,ghosh2013robust}, while extensions to composite hypotheses and
multivariate estimating equations were studied in \cite{GhoshBasuPardo2013}. Robust
testing and estimation for multivariate normality, covariance structures were further examined in
\cite{FujisawaEguchi2008,GhoshBasuPardo2013,BasuChakrabortyGhoshPardo2022}. Although these
contributions are primarily formulated for static or discrete-time multivariate
models, their methodological insights naturally extend to discretely observed
multidimensional diffusion processes through Gaussian or locally Gaussian
approximations of transition densities.

Despite the expanding literature on robust inference for diffusion processes,
divergence-based estimation for multidimensional diffusion models remains
comparatively underdeveloped. Existing results for the MDPDE are largely confined to
univariate diffusions or restrictive parametric structures. The present work aims to
bridge this gap by developing a unified MDPDE framework for multidimensional diffusion
processes, establishing its asymptotic properties and demonstrating its robustness
advantages over classical likelihood-based inference.

\section{MDPDE for Multidimensional Diffusion Process}\label{sec:mdpd}
\subsection{Minimum Density Power Divergence Estimator (MDPDE)}\label{subsec:mdpd_intro}
We begin by recalling the minimum density power divergence estimation (MDPDE)
framework, which provides the methodological foundation for the robust inference
procedures developed in this paper. This brief review is included both for
completeness and to motivate the subsequent extension to discretely observed
multidimensional diffusion processes.

% \subsection{Density Power Divergence}

The density power divergence (DPD), introduced by
\citet{Basu1998MDPDE}, defines a parametric family of discrepancy measures between
two probability density functions $g$ and $f$, indexed by a tuning parameter
$\alpha \ge 0$. It is given by
\begin{equation}\label{eq:dpd_def}
d_\alpha(g,f)
=
\begin{cases}
\displaystyle
\int
\left\{
f^{1+\alpha}(z)
-
\left(1+\frac{1}{\alpha}\right)
g(z) f^\alpha(z)
+
\frac{1}{\alpha} g^{1+\alpha}(z)
\right\}
\,dz,
& \alpha>0,
\\[1.2ex]
\displaystyle
\int g(z)\{\log g(z)-\log f(z)\}\,dz,
& \alpha = 0.
\end{cases}
\end{equation}
Although $d_\alpha(\cdot,\cdot)$ is not directly defined at $\alpha=0$, the limiting
case coincides with the Kullback--Leibler divergence. At the other extreme,
$\alpha=1$ yields the squared $L_2$ distance between densities. The tuning
parameter $\alpha$ thus governs the efficiency--robustness trade-off: larger
values of $\alpha$ downweight observations that are unlikely under the postulated
model, thereby enhancing robustness against outliers and model misspecification.

% \subsection{MDPDE for Parametric Models}

Let $\{F_{\bm\theta} : \bm\theta \in \Theta \subset \mathbb{R}^p\}$ be a parametric
family of distributions with corresponding densities
$\{f(\cdot;\bm\theta)\}$, and let $G$ denote a distribution with density $g$.
~\cite{Basu1998MDPDE} defines the \emph{minimum density power divergence
functional} $T_\alpha(\cdot)$ by
\begin{equation}\label{eq:T_alpha_def}
d_\alpha\!\left(g, f(\cdot;T_\alpha(g))\right)
=
\min_{\bm\theta \in \Theta}
d_\alpha\!\left(g, f(\cdot;\bm\theta)\right).
\end{equation}
If the true distribution $G$ belongs to the model family, then
$T_\alpha(g)=\bm\theta_0$ for the corresponding true parameter
$\bm\theta_0\in\Theta$.

Given an i.i.d.\ sample $Y_1,\ldots,Y_n$ from density $g$, the empirical version of $T_\alpha(g)$ leads to the MDPDE
\begin{equation}\label{eq:mdpde_iid}
\hat{\bm\theta}_{\alpha,n}
=
\arg\min_{\bm\theta \in \Theta}
H_{n,\alpha}(\bm\theta),
\end{equation}
where the objective function takes the form
\[
H_{n,\alpha}(\bm\theta)
=
\frac{1}{n}
\sum_{i=1}^n
\left[
\int f^{1+\alpha}(y;\bm\theta)\,dy
-
\left(1+\frac{1}{\alpha}\right)
f^\alpha(Y_i;\bm\theta)+\frac{1}{\alpha}
\right],
\qquad \alpha>0,
\]
with the limiting case $\alpha=0$ corresponding to the negative log-likelihood.
A notable feature of this formulation is that it avoids explicit nonparametric
density estimation: the term involving $g^{1+\alpha}$ in
\eqref{eq:dpd_def} does not depend on $\bm\theta$ and can be omitted, while the
remaining integral with respect to $g$ is evaluated via the empirical
distribution.

Under suitable regularity conditions, $\hat{\bm\theta}_{\alpha,n}$ is weakly
consistent and asymptotically normal for $T_\alpha(g)$
\citep{Basu1998MDPDE,Basu2011}. Moreover, for $\alpha>0$, the estimator exhibits
strong robustness properties with only a modest loss in asymptotic efficiency
relative to the maximum likelihood estimator.

% \subsection{Independent Non-Homogeneous and Regression-Type Extensions}

The MDPDE framework extends naturally to independent but non-homogeneous (INH)
settings \citep{ghosh2013robust}. In this case, the observations
$Y_1,\ldots,Y_n$ are independent with $Y_i \sim g_i$, and the model is specified
through families
\[
\mathscr F_i
=
\left\{
f_i(\cdot;\bm\theta) : \bm\theta \in \Theta
\right\},
\qquad i=1,\ldots,n,
\]
sharing a common parameter $\bm\theta$. The corresponding estimator minimizes the
average density power divergence, leading to the objective function
\begin{equation}\label{eq:mdpde_inh}
H_{n,\alpha}(\bm\theta)
=
\frac{1}{n}
\sum_{i=1}^n
\left[
\int f_i^{1+\alpha}(y;\bm\theta)\,dy
-
\left(1+\frac{1}{\alpha}\right)
f_i^\alpha(Y_i;\bm\theta)+\frac{1}{\alpha}
\right].
\end{equation}

A particularly important special case is that of regression-type models. Let
$\{f_\theta(y \mid x)\}$ be a parametric family of conditional densities for $Y$
given $X=x$, and let $g(y \mid x)$ denote the true conditional density. Replacing
$f$ and $g$ in \eqref{eq:dpd_def} by $f_\theta(\cdot\mid x)$ and
$g(\cdot\mid x)$ yields a conditional density power divergence. Given observations
$(X_i,Y_i)$, the resulting estimator is defined by
\[
\arg\min_{\bm\theta \in \Theta}
\begin{cases}
\displaystyle
\frac{1}{n}\sum_{i=1}^n
\int f_\theta^{1+\alpha}(y \mid X_i)\,dy
-
\left(1+\frac{1}{\alpha}\right)
\frac{1}{n}\sum_{i=1}^n
f_\theta^\alpha(Y_i \mid X_i)+\frac{1}{\alpha},
& \alpha>0,
\\[2ex]
\displaystyle
-\frac{1}{n}\sum_{i=1}^n
\log f_\theta(Y_i \mid X_i),
& \alpha=0.
\end{cases}
\]

Compared to other density-based divergence methods, such as those relying on the
Hellinger distance \citep{Beran1977,TamuraBoos1986,Simpson1987} or smoothed
likelihood-type divergences \citep{BasuLindsay1994,Cao1995}, the density power
divergence enjoys the practical advantage of not requiring any smoothing
procedures. In particular, it avoids bandwidth selection and other numerical
difficulties inherent in kernel-based approaches.

This regression-type and INH formulation provides the conceptual basis for the
robust estimation procedures developed in this paper for discretely observed
multidimensional diffusion processes.

% \section{Hhhhhhh}
\subsection{Estimation and Main Results}\label{subsec:mdpde-mv}\label{subsec:mdpd_estm_main_results}

We now extend the MDPDE framework
to multidimensional diffusion processes observed at discrete time points. The
primary objective is to develop a robust estimation procedure for both the drift
and diffusion components in a multivariate setting, allowing for high-frequency
sampling over an increasing time horizon.

Throughout this section, we work under the \emph{null hypothesis} that the data
are generated by a continuous diffusion model with constant diffusion matrix.
Specifically, let $(X_t)_{t \ge 0}$ be a $d$-dimensional diffusion process defined
on a filtered probability space
$(\Omega,\mathcal{F},(\mathcal{F}_t)_{t\ge0},\mathbb{P})$ and satisfying the
stochastic differential equation
\begin{equation}\label{eq:gen_multivariate_diffusion_gen}
dX_t
=
a(X_t,\bm\beta_0)\,dt
+
\Sigma_0^{1/2}\,dW_t,
\end{equation}
where $a:\mathbb{R}^d \times \Theta_{\bm\beta} \to \mathbb{R}^d$ is a measurable
drift function depending on an unknown parameter
$\bm\beta_0 \in \Theta_{\bm\beta} \subset \mathbb{R}^p$,
$\Sigma_0 \in \mathbb{R}^{d\times d}$ is a symmetric positive definite diffusion
matrix, and $(W_t)_{t \ge 0}$ is a $d$-dimensional standard Brownian motion.

In the statistical model, the diffusion matrix $\Sigma$ is treated as an
\emph{unknown parameter} to be estimated from the data, while $\Sigma_0$
denotes its true (but unobserved) value under the data-generating mechanism.
We write
\[
\Lambda = \Sigma^{1/2}
\quad\text{and}\quad
\Lambda_0 = \Sigma_0^{1/2}
\]
for the unique symmetric positive definite square roots of $\Sigma$ and
$\Sigma_0$, respectively. We further assume that $(X_t)_{t \ge 0}$ is ergodic
with invariant probability measure $\mu_0$ corresponding to the true parameter
value $\bm\theta_0$.

The full parameter vector is defined as
\[
\bm\theta
=
\bigl(
\bm\beta,
\operatorname{vech}(\Sigma)
\bigr)
\in \Theta,
\]
with true value
\[
\bm\theta_0
=
\bigl(
\bm\beta_0,
\operatorname{vech}(\Sigma_0)
\bigr).
\]
Hence, the total dimension of the parameter space is
$p + d^*$ where $d^* = \frac{d(d+1)}{2}$.

The process is observed at discrete time points
$t_i^n = i h_n$, $i = 0,1,\ldots,n$, where the sampling scheme satisfies the
high-frequency asymptotic regime
\[
h_n \to 0,
\qquad
n h_n \to \infty,
\qquad
n \to \infty.
\]

Under the null diffusion model \eqref{eq:gen_multivariate_diffusion_gen}, applying
the Euler--Maruyama approximation yields
\begin{equation}\label{eq:euler_multivariate}
X_{t_i^n}
=
X_{t_{i-1}^n}
+
a(X_{t_{i-1}^n},\bm\beta_0)\,h_n
+
\Lambda_0 \sqrt{h_n}\, Z_{n,i}
+
\Delta_{n,i},
\end{equation}
where
\[
Z_{n,i}
=
\frac{W_{t_i^n} - W_{t_{i-1}^n}}{\sqrt{h_n}}
\sim \mathcal{N}_d(0,I_d),
\]
and the remainder term is given by
\[
\Delta_{n,i}
=
\int_{t_{i-1}^n}^{t_i^n}
\bigl\{
a(X_s,\bm\beta_0)
-
a(X_{t_{i-1}^n},\bm\beta_0)
\bigr\}\,ds.
\]

Consequently, under suitable smoothness and regularity conditions on the drift
function $a$ 
% which will be 
(stated below), the remainder term $\Delta_{n,i}$ is
negligible uniformly in $\boldsymbol{\theta}$. As a result, the conditional
distribution of $X_{t_i^n}$ given $X_{t_{i-1}^n}$ admits a valid local Gaussian
approximation. In particular, $X_{t_i^n} \mid X_{t_{i-1}^n}$ is well approximated by
a multivariate normal distribution with mean
\[
X_{t_{i-1}^n} + a\!\left(X_{t_{i-1}^n}, \bm\beta_0\right) h_n
\]
and covariance matrix $h_n \Sigma_0$.

% Under standard regularity conditions on $a$,stated later, the remainder term $\Delta_{n, i}$ is negligible uniformly in $\boldsymbol{\theta}$, yielding a valid local Gaussian approximation for the conditional distribution of $X_{t_i^n} \mid X_{t_{i-1}^n}$ with mean $X_{t_{i-1}^n}+a\left(X_{t_{i-1}^n}, \beta_0\right) h_n$ and covariance $h_n \Sigma_0$.

Let $f_{\bm\theta}(\cdot\mid X_{t_{i-1}^n})$ denote the corresponding Gaussian
conditional density. Based on this local Gaussian approximation, the MDPDE for
multivariate diffusions is defined as
\begin{equation}\label{eq:mdpde_mv_min}
\hat{\bm\theta}_n^\alpha
=
\arg\min_{\bm\theta\in\Theta}
\frac{1}{n}
\sum_{i=1}^n
V_{n,i}^\alpha(\bm\theta),
\end{equation}
where, for $\alpha>0$,
\begin{align}\label{eq:mdpd_multivariate}
V_{n,i}^\alpha(\bm\theta)
&=
\frac{1}{(1+\alpha)^{d/2}}
|\Sigma|^{-\alpha/2}
\nonumber\\
&\quad
-
\left(1+\frac{1}{\alpha}\right)
|\Sigma|^{-\alpha/2}
\exp\!\left(
-\frac{\alpha}{2h_n}
R_i(\bm\beta)^\top
\Sigma^{-1}
R_i(\bm\beta)
\right),
\end{align}
with residuals
\[
R_i(\bm\beta)
=
X_{t_i^n}
-
X_{t_{i-1}^n}
-
a(X_{t_{i-1}^n},\bm\beta)h_n.
\]
Equivalently,
\begin{align}
R_i(\bm\beta)
=
\sqrt{h_n}\Lambda_0 Z_{n,i}
+
h_n D_i(\bm\beta)
+
\Delta_{n,i},
\qquad
D_i(\bm\beta)
=
a(X_{t_{i-1}^n},\bm\beta_0)
-
a(X_{t_{i-1}^n},\bm\beta).
\label{eqn:R_i_D_i}
\end{align}
Terms depending only on $(2\pi)$ and $h_n$ have been omitted, as they do not affect
the minimization problem in \eqref{eq:mdpde_mv_min}. The criterion
$V_{n,i}^\alpha(\bm\theta)$ therefore corresponds to the density power divergence
constructed from the local Gaussian approximation to the transition density.

For later use, define the class of functions with polynomial growth as
\[
    \mathscr P
    :=
    \Bigl\{
        f : \mathbb{R}^{d} \times \Theta \to \mathbb{R}^{d_1}
        :
        \exists\, C_{d_1} > 0 \text{ independent of } {\bm\theta}
        \text{ such that }
        \|f(x,{\bm\theta})\|
        \le
        C_{d_1}
        \left(1 + \|x\|\right)^{C_{d_1}},
        \ \forall\, {\bm\theta}
    \Bigr\}.
\]

\begin{assumpt}\label{assm:all_assumption}
Let $\{X_t\}_{t\ge0}$ be a $d$-dimensional diffusion process with true parameter
${\bm\theta}_0=({\bm\beta_0},\operatorname{vech}(\Sigma_0))$. Assume:

\begin{itemize}
\item[(A1)]
The drift $a(\cdot,{\bm\beta_0})$ is globally Lipschitz, i.e.,
\[
\|a(x,{\bm\beta_0})-a(y,{\bm\beta_0})\|\le C\|x-y\|,
\qquad x,y\in\mathbb{R}^d .
\]

\item[(A2)]
Under ${\bm\theta}_0$, the process is ergodic with invariant probability measure
$\mu_0$, and
\[
\int_{\mathbb{R}^d}\|x\|^k\,\mu_0(dx)<\infty,
\qquad \forall\,k\ge0 .
\]

\item[(A3)]
For all $k\ge0$,
\[
\sup_{t\ge0}\mathbb{E}_{{\bm\theta}_0}\!\left[\|X_t\|^k\right]<\infty .
\]

\item[(A4)]
For each $\beta\in\Theta$, the mapping $x\mapsto a(x,\bm\beta)$ is continuously
differentiable, and $a(\cdot,\bm\beta)$ together with its first-order derivatives
belongs to the class $\mathscr{P}$.

\item[(A5)]
For each $x\in\mathbb{R}^d$, the mapping $\beta\mapsto a(x,\bm\beta)$ and its derivative with respect to x are three times
continuously differentiable, and all partial derivatives up to order three belong
to $\mathscr{P}$.

\item[(A6)]
(Identifiability) If
\[
a(x,\bm\beta)=a(x,{\bm\beta_0})
\quad \text{for }\mu_0\text{-a.e. }x,
\]
then $\bm \beta={\bm\beta_0}$.

\item[(A7)]
The diffusion matrix $\Sigma_0$ is symmetric and positive definite, and the matrix
\[
\mathcal{B}
=
\int
\fst{\frac{\partial a(x,{\bm\beta_0})}{\partial{\bm\beta}} }^{\top}
\Sigma_0^{-1}
\fst{\frac{\partial a(x,{\bm\beta_0})}{\partial{\bm \beta}} }
\,\mu_0(dx)
\]
is positive definite.
\end{itemize}
\end{assumpt}

\begin{remark}
Assumptions \textup{(A1)}--\textup{(A3)} ensure the existence of a unique ergodic
solution possessing sufficiently high-order moments, while
Assumptions \textup{(A4)}--\textup{(A5)} provide the smoothness and regularity
conditions required for uniform laws of large numbers. Identifiability of the
model parameters is guaranteed by Assumption \textup{(A6)}.
\end{remark}

\begin{example}
An important example of the model \eqref{eq:gen_multivariate_diffusion_gen}
is the multivariate Ornstein--Uhlenbeck process
\begin{align*}
    dX_t = A X_t\,dt + \Lambda_0\, dW_t ,
\end{align*}
where $A\in\mathbb{R}^{d\times d}$ and $\Lambda_0\Lambda_0^\top=\Sigma_0$.
If the eigenvalues of $A$ have strictly negative real parts, then the
process admits a unique stationary Gaussian distribution and is ergodic.
The drift $a(x,\beta)=Ax$ is globally Lipschitz and infinitely
differentiable in both $x$ and $\beta$. 
Moreover, identifiability reduces to the condition that the parameter
matrix $A$ is uniquely determined from $a(x,\beta)=Ax$,
which holds whenever the parameterization of $A$ has full column rank.
Hence Assumptions~\ref{assm:all_assumption} (A1)--(A7) are satisfied.
\end{example}

Under Assumptions~\ref{assm:all_assumption}, the diffusion process
\eqref{eq:gen_multivariate_diffusion_gen} is ergodic and the drift
function satisfies the required smoothness and identifiability
conditions. These properties allow us to establish the consistency
of the proposed estimator. The following theorem states that the
MDPDE estimator converges in probability to the true parameter.

\begin{theorem}[Consistency of parameters]\label{thm:mdpde-consistency}
Suppose that Assumptions~\ref{assm:all_assumption} \textup{(A1)}--\textup{(A6)}
hold. Let $\alpha>0$ be fixed. If $h_n \to 0$, $n h_n \to \infty$ and $nh_n^q\to0$ for some $q>1$, then
\[
    \hat{{\bm\theta}}_n^\alpha
    \xrightarrow{\mathbb P}
    {\bm\theta}_0.
\]
\end{theorem}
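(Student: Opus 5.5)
We would follow the standard two-scale argument for diffusion contrasts: first establish consistency of the diffusion parameter $\Sigma$ using the $O(1)$ part of the criterion, and then consistency of the drift parameter $\bm\beta$ using a rescaled difference of the criterion. Throughout we assume, as is implicit, that $\Theta$ is compact and that $\Sigma$ ranges over a compact set of positive definite matrices. Write $U_n(\bm\theta)=\frac1n\sum_{i=1}^n V_{n,i}^\alpha(\bm\theta)$.

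\textbf{Step 1: uniform limit for the diffusion part.} By \eqref{eqn:R_i_D_i},
\[
h_n^{-1/2}R_i(\bm\beta)=\Lambda_0 Z_{n,i}+\sqrt{h_n}\,D_i(\bm\beta)+h_n^{-1/2}\Delta_{n,i}.
\]
Assumption (A1) together with (A3) gives $\mathbb E\|\Delta_{n,i}\|^k\le C h_n^{3k/2}$. Assumptions (A4)--(A5) give $\sup_{\bm\beta}\|D_i(\bm\beta)\|\le C(1+\|X_{t_{i-1}^n}\|)^C$. The map $y\mapsto \exp(-\frac{\alpha}{2}y^\top\Sigma^{-1}y)$ is bounded and Lipschitz uniformly over compact $\Sigma$-sets. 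Hence
\[
\sup_{\bm\theta}\bigl|U_n(\bm\theta)-\tfrac1n\textstyle\sum_i g_\Sigma(Z_{n,i})\bigr|=O_P(\sqrt{h_n}),
\qquad
g_\Sigma(z)=|\Sigma|^{-\alpha/2}\bigl[(1+\alpha)^{-d/2}-(1+\tfrac1\alpha)e^{-\frac\alpha2 z^\top\Lambda_0\Sigma^{-1}\Lambda_0 z}\bigr].
\]
The $Z_{n,i}$ are i.i.d.\ $\mathcal N_d(0,I_d)$, so a uniform LLN (pointwise LLN plus equicontinuity in $\Sigma$) yields
\[
U_n(\bm\theta)\to U(\Sigma)=|\Sigma|^{-\alpha/2}\Bigl[(1+\alpha)^{-d/2}-(1+\tfrac1\alpha)\,|I+\alpha\Sigma^{-1}\Sigma_0|^{-1/2}\Bigr]
\]
uniformly in $\bm\theta$. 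This $U$ is, up to constants, the DPD between $\mathcal N(0,\Sigma_0)$ and $\mathcal N(0,\Sigma)$. By the nonnegativity of $d_\alpha$, with equality iff the densities coincide, $U$ has a unique minimum at $\Sigma=\Sigma_0$. The usual argmin argument (Van der Vaart, Thm 5.7) then gives $\hat\Sigma_n\xrightarrow{P}\Sigma_0$.

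\textbf{Step 2: drift part.} Consider
\[
\Gamma_n(\bm\beta)=\frac{1}{h_n}\bigl[U_n(\bm\beta,\hat\Sigma_n)-U_n(\bm\beta_0,\hat\Sigma_n)\bigr].
\]
Expand $e^{-\frac{\alpha}{2h_n}R_i^\top\Sigma^{-1}R_i}$ around $\bm\beta_0$ using $R_i(\bm\beta)=R_i(\bm\beta_0)+h_nD_i(\bm\beta)$. The difference of exponents is
\[
-\tfrac{\alpha}{h_n}\,h_nD_i^\top\Sigma^{-1}R_i(\bm\beta_0)-\tfrac{\alpha}{2}h_nD_i^\top\Sigma^{-1}D_i .
\]
After dividing by $h_n$, the first-order term is a martingale-difference sum of the form
\[
\frac{1}{n\sqrt{h_n}}\sum_i \phi(X_{t_{i-1}^n},\bm\beta,\Sigma)\,\psi(Z_{n,i}),
\qquad \mathbb E\psi(Z)=0
\]
(by symmetry, $\mathbb E[Ze^{-cZ^\top MZ}]=0$), plus negligible $\Delta$-terms. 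Its variance is $O(1/(nh_n))\to0$ pointwise. Uniformity in $\bm\beta$ follows from a Sobolev/Kolmogorov-type bound using the derivatives in (A5). The second-order term, by the ergodic theorem (A2) combined with a Riemann-sum approximation controlled by (A1),(A3), converges to
\[
\Gamma(\bm\beta)=c_\alpha\,|\Sigma_0|^{-\alpha/2}\int (a(x,\bm\beta)-a(x,\bm\beta_0))^\top M_\alpha\,(a(x,\bm\beta)-a(x,\bm\beta_0))\,\mu_0(dx),
\]
where $c_\alpha>0$ and $M_\alpha$ is positive definite (proportional to $(\Sigma_0+\alpha\Sigma_0)^{-1}$-type after Gaussian integration). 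Remainders from higher-order Taylor terms and $\Delta_{n,i}$ are $O_P(h_n^{1/2})$ plus $O_P(\sqrt{nh_n^{q}})$-type quantities. Here the assumption $nh_n^q\to0$ is used to kill the bias from $h_n^{-1/2}\Delta_{n,i}$ contributions accumulated over $n$ terms.

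By (A6), $\Gamma(\bm\beta)>0$ for $\bm\beta\neq\bm\beta_0$. Since $\Gamma_n(\hat{\bm\beta}_n)\le0$ and $\Gamma_n\to\Gamma$ uniformly, we get $\hat{\bm\beta}_n\xrightarrow{P}\bm\beta_0$.

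\textbf{Main obstacle.} The main difficulty is the uniform-in-$\bm\beta$ control in Step 2. There the division by $h_n$ magnifies the remainder terms, and the drift of $X$ within $[t_{i-1},t_i]$ must be handled carefully. We would first bound each remainder's $L^2$ norm using conditional moments (Itô/Burkholder inequalities with (A3)), then pass to uniformity via the derivative bounds in (A5) and a chaining or Sobolev embedding argument on compact $\Theta_{\bm\beta}$. A secondary technical point is the dependence on $\hat\Sigma_n$, which we would handle by proving all convergences uniformly over a compact neighborhood of $\Sigma_0$.
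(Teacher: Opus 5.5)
Your proposal is correct and uses the same two-scale structure as the paper. A $\bm\beta$-free uniform limit $\Psi(\Sigma)$ of $\frac1n\sum_i V_{n,i}^\alpha$ gives $\hat\Sigma_n\to\Sigma_0$. Then the $h_n^{-1}$-rescaled contrast difference at $\hat\Sigma_n$ has a vanishing martingale part and a quadratic part tending to $\frac12(1+\alpha)^{-d/2}|\Sigma_0|^{-\alpha/2}\int D(\bm\beta)^\top\Sigma_0^{-1}D(\bm\beta)\,\mu_0(dx)$, and (A6) concludes. Two ingredients differ.

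\emph{Uniqueness of the minimiser of $\Psi$.} You use the identity $\Psi(\Sigma)=(2\pi)^{d\alpha/2}\bigl\{d_\alpha(\phi_{\Sigma_0},\phi_\Sigma)-\alpha^{-1}\int\phi_{\Sigma_0}^{1+\alpha}\bigr\}$, where $\phi_\Sigma$ is the $\mathcal N_d(0,\Sigma)$ density, together with nonnegativity of the DPD. The paper instead proves Lemma~\ref{lem:app_sigma_min} via $M=\Lambda_0^{-1}\Sigma\Lambda_0^{-1}$ and a stationary-point analysis of the eigenvalues. That analysis by itself only identifies the critical point, so your argument is shorter and gives global uniqueness at once.

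\emph{Approximations.} You use a global Lipschitz bound and $L^1$ averages. The paper instead controls $\sup_{\bm\theta}\max_i|K_i(\bm\theta)|$ through Lemma~\ref{lem:poly-negligible-mv}. That union bound over $i$ is the only place where $nh_n^q\to0$ enters, to keep $\exp(\max_i|K_i|)=O_P(1)$ in the Taylor remainders.

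Your remark about $O_P(\sqrt{nh_n^q})$ terms is therefore misplaced. In your route every $\Delta$-contribution is $O_P(h_n^{1/2})$ in $L^1$ after the $1/(nh_n)$ normalisation. Writing $\delta_i=\frac\alpha2\{Q_i(\bm\beta)-Q_i(\bm\beta_0)\}$, the cubic remainder of $e^{-\delta_i}$ is harmless because the intermediate-point factor obeys $e^{-\frac\alpha2 Q_i(\bm\beta_0)}e^{-\xi_i}\le\max\{e^{-\frac\alpha2 Q_i(\bm\beta_0)},e^{-\frac\alpha2 Q_i(\bm\beta)}\}\le1$. State this bound explicitly. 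The cruder bound $e^{|\delta_i|}$ would force either exponential moments of $(1+\|X_{t_{i-1}^n}\|)^C\|Z_{n,i}\|$, which (A3) does not supply, or the paper's max device. With it, your argument needs only $h_n\to0$ and $nh_n\to\infty$.
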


To establish the asymptotic normality of the proposed multivariate estimator,
we impose an additional condition on the sampling frequency of the discretely
observed data.

\begin{assumpt}\label{assm:h_n^2}
In addition to $h_n \to 0$ and $n h_n \to \infty$, we assume that
\[
n h_n^2 \to 0 .
\]
\end{assumpt}

Under this additional condition, the following asymptotic normality result holds.

\begin{theorem}[Asymptotic normality of the drift estimator]
\label{thm:beta-clt}
Suppose that the diffusion covariance matrix $\Sigma$ is known.
Under Assumptions~\ref{assm:all_assumption}~\textup{(A1)}--\textup{(A7)}
together with Assumption~\ref{assm:h_n^2}, we have
\begin{align}
\sqrt{n h_n}\bigl(\hat{\bm\beta}_n^\alpha - \bm\beta_0\bigr)
\;\Rightarrow\;
\mathcal{N}\!\left(0,\Sigma_{\bm\beta}\right),
\label{eqn:thm_2}
\end{align}
where
\[
\Sigma_{\bm\beta}
=
\frac{(1+\alpha)^{d+2}}{(1+2\alpha)^{\frac d2+1}}
\,\mathcal B^{-1}.
\]
\end{theorem}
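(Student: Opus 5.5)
The plan is the standard M-estimator argument: Taylor-expand the estimating equation around $\bm\beta_0$, show that the gradient satisfies a martingale CLT at rate $\sqrt{nh_n}$, and show that the Hessian converges to a deterministic multiple of $\mathcal B$.

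\textbf{Setup.} Since $\Sigma$ is known, we take $\Sigma=\Sigma_0$ in \eqref{eq:mdpd_multivariate}. The first term of $V^\alpha_{n,i}$ then does not depend on $\bm\beta$. Write
\[
\psi_i(\bm\beta)=\exp\bigl(-\tfrac{\alpha}{2h_n}R_i(\bm\beta)^\top\Sigma_0^{-1}R_i(\bm\beta)\bigr),
\]
so that minimizing the criterion is equivalent to maximizing $\sum_i\psi_i(\bm\beta)$. Using $\partial R_i/\partial\bm\beta=-h_n\,\partial a(X_{t_{i-1}^n},\bm\beta)/\partial\bm\beta=:-h_n\dot a_i(\bm\beta)$, the score is
\[
S_n(\bm\beta)=\frac{1}{\sqrt{nh_n}}\sum_{i=1}^n \alpha\,\psi_i(\bm\beta)\,\dot a_i(\bm\beta)^\top\Sigma_0^{-1}R_i(\bm\beta).
\]
By Theorem~\ref{thm:mdpde-consistency} (with $\Sigma$ fixed, so only the $\bm\beta$ part is needed), $\hat{\bm\beta}_n^\alpha\to\bm\beta_0$ in probability. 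Hence, with probability tending to one, $S_n(\hat{\bm\beta}_n^\alpha)=0$. A mean-value expansion then gives
\[
0=S_n(\bm\beta_0)+\frac{1}{nh_n}\,\nabla S_n^{\rm raw}(\tilde{\bm\beta})\,\sqrt{nh_n}(\hat{\bm\beta}_n^\alpha-\bm\beta_0),
\]
where $S_n^{\rm raw}=\sqrt{nh_n}\,S_n$.

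\textbf{Score at $\bm\beta_0$.} At $\bm\beta_0$ we have $R_i=\sqrt{h_n}\Lambda_0Z_{n,i}+\Delta_{n,i}$. First replace $\Delta_{n,i}$ by zero. Under (A1), (A3) and (A4), $\|\Delta_{n,i}\|=O_{L^p}(h_n^{3/2})$, and after expansion its conditional mean is $O(h_n^2)$. The resulting error in $S_n$ is therefore $O(\sqrt{nh_n}\,h_n)=O(\sqrt{nh_n^2})\to0$ by Assumption~\ref{assm:h_n^2}; the fluctuating part is smaller. The main term is
\[
\sum_i \xi_{n,i},\qquad \xi_{n,i}=\frac{\alpha}{\sqrt n}\,e^{-\alpha\|Z_{n,i}\|^2/2}\,\dot a_i^\top\Lambda_0^{-1}Z_{n,i},
\]
a martingale difference array, since $Z_{n,i}$ is independent of $\mathcal F_{t_{i-1}}$ and the summand is odd in $Z_{n,i}$. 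For $Z\sim\mathcal N_d(0,I)$, a Gaussian computation gives
\[
\mathbb E\bigl[e^{-\alpha\|Z\|^2}ZZ^\top\bigr]=(1+2\alpha)^{-d/2-1}I_d.
\]
The conditional variance is therefore $\alpha^2(1+2\alpha)^{-d/2-1}\frac1n\sum_i\dot a_i^\top\Sigma_0^{-1}\dot a_i$, which tends to $\alpha^2(1+2\alpha)^{-d/2-1}\mathcal B$ by the ergodic theorem for discretely sampled diffusions (A2), with moment control from (A3)–(A5). A Lyapunov condition follows from the moment bounds, and the martingale CLT yields
\[
S_n(\bm\beta_0)\Rightarrow\mathcal N\bigl(0,\alpha^2(1+2\alpha)^{-d/2-1}\mathcal B\bigr).
\]

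\textbf{Hessian and conclusion.} Differentiating once more gives
\[
\frac{1}{nh_n}\nabla S^{\rm raw}_n(\bm\beta)=\frac1n\sum_i\alpha\psi_i\Bigl[-\dot a_i^\top\Sigma_0^{-1}\dot a_i+\alpha\,\dot a_i^\top\Sigma_0^{-1}R_iR_i^\top\Sigma_0^{-1}\dot a_i/h_n+\ddot a_i[\Sigma_0^{-1}R_i]\Big/h_n\Bigr],
\]
where the factor $h_n$ is absorbed appropriately. The term involving $\ddot a_i$ has mean zero and is $O_p((nh_n)^{-1/2})$. At $\bm\beta_0$, $R_i/\sqrt{h_n}\approx\Lambda_0Z_{n,i}$. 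We use $\mathbb E[e^{-\alpha\|Z\|^2/2}]=(1+\alpha)^{-d/2}$ and $\mathbb E[e^{-\alpha\|Z\|^2/2}ZZ^\top]=(1+\alpha)^{-d/2-1}I$, together with the ergodic LLN. The limit is
\[
-\alpha(1+\alpha)^{-d/2}\mathcal B+\alpha^2(1+\alpha)^{-d/2-1}\mathcal B=-\alpha(1+\alpha)^{-d/2-1}\mathcal B.
\]
Combining this with the CLT for the score and Slutsky's lemma gives the asymptotic variance
\[
\frac{\alpha^2(1+2\alpha)^{-d/2-1}}{\alpha^2(1+\alpha)^{-d-2}}\,\mathcal B^{-1},
\]
which is exactly $\Sigma_{\bm\beta}$. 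Positive definiteness from (A7) ensures invertibility.

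\textbf{Main obstacle.} The hardest step is showing that the Hessian evaluated at the intermediate point $\tilde{\bm\beta}$ has the same limit as at $\bm\beta_0$. At $\tilde{\bm\beta}$, $R_i/\sqrt{h_n}$ contains the extra term $\sqrt{h_n}D_i(\tilde{\bm\beta})$. This term is $O_p(\sqrt{h_n}\|\tilde{\bm\beta}-\bm\beta_0\|)$, uniformly via the polynomial-growth bounds of (A5), but it enters through the exponential weight. My first attempt would be a uniform bound of the form
\[
\sup_{\|\bm\beta-\bm\beta_0\|\le\delta}\Bigl\|\frac{1}{nh_n}\nabla S^{\rm raw}_n(\bm\beta)-\frac{1}{nh_n}\nabla S^{\rm raw}_n(\bm\beta_0)\Bigr\|\le C\delta\cdot\frac1n\sum_i(1+\|X_{t_{i-1}}\|)^C(1+\|Z_{n,i}\|)^C+o_p(1),
\]
obtained from the third derivatives in (A5) and the boundedness of $x^k e^{-\alpha x^2/2}$. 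The factor $\frac1n\sum_i(\cdots)$ is $O_p(1)$ by (A3), so the bound tends to zero as $\delta\to0$, using the consistency of $\hat{\bm\beta}_n^\alpha$. The same bound also controls the $\Delta_{n,i}$ contributions inside the Hessian.
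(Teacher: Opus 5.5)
Your architecture is the paper's, and your constants are right. You strip $\Delta_{n,i}$ from the score at cost $O_P(\sqrt{n}\,h_n)$, apply the martingale CLT with conditional variance $\alpha^2(1+2\alpha)^{-d/2-1}\mathcal B$, obtain the Hessian limit $-\alpha(1+\alpha)^{-d/2-1}\mathcal B$, and take the sandwich.

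The gap is in the step you flagged yourself. The proposed local bound with right-hand side $C\delta\,\frac1n\sum_i(1+\|X_{t_{i-1}^n}\|)^C(1+\|Z_{n,i}\|)^C+o_P(1)$ is false for the Hessian term $\frac1n\sum_i\alpha\psi_i(\bm\beta)\,\ddot a_i(\bm\beta)^\top\Sigma_0^{-1}R_i(\bm\beta)/h_n$. Since $R_i(\bm\beta)/h_n=\Lambda_0Z_{n,i}/\sqrt{h_n}+D_i(\bm\beta)+\Delta_{n,i}/h_n$, differentiating $\ddot a_i$ in $\bm\beta$ produces $\frac{1}{n\sqrt{h_n}}\sum_i\alpha\psi_i\,\partial_{\bm\beta}\ddot a_i(\bm\beta)^\top\Sigma_0^{-1}\Lambda_0Z_{n,i}$. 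A pointwise third-derivative bound controls this only as $O_P(h_n^{-1/2})$, so you get $O_P(\delta h_n^{-1/2})$ rather than $O_P(\delta)$. Consistency gives only $\delta=o_P(1)$. Even the final rate $\delta\asymp(nh_n)^{-1/2}$ gives $\delta h_n^{-1/2}=(nh_n^2)^{-1/2}\to\infty$ under $nh_n^2\to0$. So no Lipschitz-type argument can close this: the term is small only through martingale cancellation, and you need that cancellation uniformly in $\bm\beta$.

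Relatedly, your remark that the $\ddot a_i$ term ``has mean zero'' holds only at $\bm\beta_0$. For $\bm\beta\neq\bm\beta_0$ the weight $\psi_i(\bm\beta)$ carries the shift $\sqrt{h_n}\Lambda_0^{-1}D_i(\bm\beta)$. Hence $\mathbb{E}[\psi_i(\bm\beta)Z_{n,i}\mid\mathcal F_{t_{i-1}^n}]$ is of order $\sqrt{h_n}\|D_i(\bm\beta)\|$, which gives an $O(\|\bm\beta-\bm\beta_0\|)$ contribution to the limit. These are the second-derivative terms in the paper's limiting display; they are harmless but must be accounted for.

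The paper's remedy has two parts. First, it expands $\exp\{-K_i(\bm\theta)\}$ about zero, so the $\bm\beta$-dependence leaves the Gaussian weight and appears only in polynomial factors. These are the $J_{0,i}$ and $-\alpha\sqrt{h_n}\,J_{0,i}Z_{n,i}^\top\Lambda_0\Sigma^{-1}D_i(\bm\beta)$ pieces, plus a remainder that is $o_P(h_n^r)$ for $r<3/2$. Second, it invokes Lemma~\ref{lem:linear_form_uniform}: $\frac{1}{n\sqrt{h_n}}\sum_i f(X_{t_{i-1}^n},\bm\theta)^\top Z_{n,i}\exp(-\frac{\alpha}{2}Z_{n,i}^\top\Lambda_0\Sigma^{-1}\Lambda_0Z_{n,i})=o_P(1)$ uniformly in $\bm\theta$. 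That lemma is proved by an increment moment bound plus Theorem~20 of Ibragimov--Has'minskii. The summands stay odd in $Z_{n,i}$ for every $\bm\theta$, so their increments are of order $(nh_n)^{-1/2}\|\bm\theta_1-\bm\theta_2\|$ in $L^p$. This yields uniform convergence of the Hessian to a limit continuous in $\bm\theta$, which is then evaluated at the intermediate point.

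You need this, or an equivalent chaining or Sobolev-embedding bound on the centred array, for the $\ddot a_i$ term. Your Lipschitz argument is fine for the other two Hessian terms, where $\psi_i\|R_i\|^k h_n^{-k/2}$ is bounded and no $h_n^{-1/2}$ prefactor appears.

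A minor side point on the score: $\sqrt{nh_n}\,h_n=\sqrt{nh_n^3}$, not $\sqrt{nh_n^2}$. Also, the conditional-mean $O(h_n^2)$ claim is unnecessary, and it would need more $x$-regularity of $a$ than (A4) provides. The crude bound $\|\Delta_{n,i}\|=O_{L^p}(h_n^{3/2})$ already gives a score error $O(\sqrt n\,h_n)$, which is how Lemma~\ref{lem:u_v_deriv_conver_p} proceeds.
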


We now turn to statistical inference for the drift parameter. 
The asymptotic normality established in Theorem~\ref{thm:beta-clt}
naturally leads to a Wald-type test statistic. 
Since the estimator $\hat{\bm\beta}_n^\alpha$ is consistent and the drift
function $a(x,\bm\beta)$ is continuously differentiable with respect to
$\bm\beta$, the following corollary follows directly from standard
Wald theory.

\begin{corollary}[Wald-type test for the drift parameter]
\label{cor:wald}
Under the assumptions of Theorem~\ref{thm:beta-clt},
\[
    n h_n\,(\hat{\bm\beta}_n^\alpha - {\bm\beta}_0)^{\top}
    \hat\Sigma_\beta^{-1}
    (\hat{\bm\beta}_n^\alpha - {\bm\beta}_0)
    \;\Rightarrow\;
    \chi^2_p,
\]
where $p=\dim(\bm\beta)$ and $\hat\Sigma_\beta$ is a consistent estimator of the
asymptotic covariance matrix $\Sigma_\beta$, given by
\[
    \hat\Sigma_\beta
    =
    \frac{(1+\alpha)^{d+2}}{(1+2\alpha)^{\frac d2+1}}
    \,\hat{\mathcal B}^{-1}.
\]
Here, the matrix $\hat{\mathcal B}=(\hat{\mathcal B}_{u,v})_{1\le u,v\le p}$ is defined by
\[
    \hat{\mathcal B}_{u,v}
    =
    \int
    \left(\frac{\partial a(x,\hat{\bm\beta}_n^\alpha)}{\partial \beta_u}\right)^{\!\top}
    \hat\Sigma^{-1}
    \left(\frac{\partial a(x,\hat{\bm\beta}_n^\alpha)}{\partial \beta_v}\right)
    \,\mu_0(dx),
\]
where $\hat\Sigma$ denotes a consistent estimator of the true covariance matrix
$\Sigma_0$.
\end{corollary}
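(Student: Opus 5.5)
The plan is to combine Theorem~\ref{thm:beta-clt} with a consistency argument for $\hat\Sigma_\beta$, and then apply the continuous mapping theorem together with Slutsky's lemma. Write $Y_n=\sqrt{nh_n}\,(\hat{\bm\beta}_n^\alpha-\bm\beta_0)$. By Theorem~\ref{thm:beta-clt}, $Y_n\Rightarrow \mathcal N(0,\Sigma_{\bm\beta})$. Since $\Sigma_{\bm\beta}$ is a positive multiple of $\mathcal B^{-1}$ and $\mathcal B$ is positive definite by (A7), $\Sigma_{\bm\beta}$ is positive definite. Hence $\Sigma_{\bm\beta}^{-1/2}Y_n\Rightarrow\mathcal N(0,I_p)$, and the continuous mapping $y\mapsto y^\top\Sigma_{\bm\beta}^{-1}y$ gives $Y_n^\top\Sigma_{\bm\beta}^{-1}Y_n\Rightarrow\chi^2_p$.

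The main work is to show $\hat\Sigma_\beta\xrightarrow{\mathbb P}\Sigma_{\bm\beta}$, which reduces to $\hat{\mathcal B}\xrightarrow{\mathbb P}\mathcal B$.

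\begin{itemize}
\item First, Theorem~\ref{thm:mdpde-consistency} (or the $\sqrt{nh_n}$-consistency implied by Theorem~\ref{thm:beta-clt}) gives $\hat{\bm\beta}_n^\alpha\xrightarrow{\mathbb P}\bm\beta_0$. We take $\hat\Sigma\xrightarrow{\mathbb P}\Sigma_0$ as assumed; one valid choice is the MDPDE component from Theorem~\ref{thm:mdpde-consistency}. Because $\Sigma_0$ is positive definite and matrix inversion is continuous on positive definite matrices, $\hat\Sigma^{-1}\xrightarrow{\mathbb P}\Sigma_0^{-1}$.
\item Next, consider the deterministic function
\[
\phi_{u,v}(\bm\beta,S)=\int \left(\frac{\partial a(x,\bm\beta)}{\partial\beta_u}\right)^{\!\top} S\, \left(\frac{\partial a(x,\bm\beta)}{\partial\beta_v}\right)\mu_0(dx).
\]
This function is continuous at $(\bm\beta_0,\Sigma_0^{-1})$. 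To see this, fix a compact neighbourhood $K$ of $\bm\beta_0$ and a bounded set of matrices $S$. By (A5), $\partial a/\partial\bm\beta$ is continuous in $\bm\beta$ and dominated by $C(1+\|x\|)^C$ uniformly in $\bm\beta$. By (A2), this envelope, squared, is $\mu_0$-integrable. Dominated convergence then yields continuity of $\phi_{u,v}$.
\item The continuous mapping theorem therefore gives $\hat{\mathcal B}_{u,v}=\phi_{u,v}(\hat{\bm\beta}_n^\alpha,\hat\Sigma^{-1})\xrightarrow{\mathbb P}\mathcal B_{u,v}$. Since $\mathcal B$ is invertible, $\hat{\mathcal B}$ is invertible with probability tending to one, and $\hat{\mathcal B}^{-1}\xrightarrow{\mathbb P}\mathcal B^{-1}$. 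It follows that $\hat\Sigma_\beta^{-1}\xrightarrow{\mathbb P}\Sigma_{\bm\beta}^{-1}$.
\end{itemize}

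Finally, the statistic equals
\[
Y_n^\top\hat\Sigma_\beta^{-1}Y_n=Y_n^\top\Sigma_{\bm\beta}^{-1}Y_n+Y_n^\top(\hat\Sigma_\beta^{-1}-\Sigma_{\bm\beta}^{-1})Y_n .
\]
The second term is bounded by $\|Y_n\|^2\,\|\hat\Sigma_\beta^{-1}-\Sigma_{\bm\beta}^{-1}\|$. Here $\|Y_n\|^2=O_{\mathbb P}(1)$ because $Y_n$ converges in distribution, so the second term is $o_{\mathbb P}(1)$. Slutsky's lemma then yields the $\chi^2_p$ limit.

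The only delicate step is the continuity of $\phi_{u,v}$, which relies on the uniform domination. If (A5) were read pointwise rather than locally uniformly in $\bm\beta$, this step would fail. That is why the plan uses the $\mathscr P$ bound, whose constant is independent of $\bm\theta$.
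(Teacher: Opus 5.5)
Your proof is correct and follows the route the paper intends. The paper gives no separate argument: it only says the corollary follows from standard Wald theory, given consistency of $\hat{\bm\beta}_n^\alpha$ and continuous differentiability of $a$ in $\bm\beta$. You supply exactly those details, namely continuity of $\hat{\mathcal B}$ via the $\bm\theta$-uniform $\mathscr P$ envelope, (A2) and dominated convergence, followed by Slutsky and the continuous mapping theorem. One small simplification: $\hat\Sigma_\beta^{-1}$ is a fixed multiple of $\hat{\mathcal B}$, so you need $\hat{\mathcal B}\xrightarrow{\mathbb P}\mathcal B$ together with invertibility with probability tending to one, not the double inversion.
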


We next investigate the weak convergence of the estimator of the diffusion
covariance matrix. To explicitly characterize the asymptotic
variance--covariance structure, we analyze the first-order derivatives of
$V_{n,i}^\alpha$. After establishing their joint asymptotic normality, the result
follows from an application of Slutsky's theorem. For notational convenience and
clarity, the estimator is vectorized via the $\operatorname{vech}$ operator, and
the limiting variance--covariance matrix is derived elementwise.

\begin{theorem}[Asymptotic normality of the covariance estimator]
\label{thm:asymp-sigma}
Assuming $\beta$ is known and let $\hat\Sigma_n^{\alpha}$ denote the MDPDE of the covariance matrix $\Sigma_0$.
Under the regularity conditions stated in
Assumptions~\ref{assm:all_assumption} \textup{(A1)}--\textup{(A7)}, together with Assumption~\ref{assm:h_n^2}, it holds that
\[
\sqrt{n}\Bigl(
\operatorname{vech}(\hat\Sigma_n^{\alpha})
-
\operatorname{vech}(\Sigma_0)
\Bigr)
\;\Rightarrow\;
\mathcal{N}\!\left(
0,\;
\mathscr{L}^{-\top}\,\Xi\,\mathscr{L}^{-1}
\right).
\]

where the matrices
$\Xi = (\Xi_{kl,rs})$ and $\mathscr L = (\mathscr L_{kl,rs})$ are indexed by
$1 \le k \le l \le d$ and $1 \le r \le s \le d$, and are given elementwise by
\begin{align*}
\Xi_{kl,rs}
&=
(1+\alpha)^{d+2}(1+2\alpha)^{-d/2-2}
\Bigl[
    \alpha^2 \operatorname{tr}(A_{kl}) \operatorname{tr}(A_{rs})
    +
    \frac{1}{2}\operatorname{tr}(A_{kl}A_{rs})
\Bigr]
-
\frac{\alpha^2}{4}
\operatorname{tr}(A_{kl}) \operatorname{tr}(A_{rs}),
\\
\mathscr L_{kl,rs}
&=
\frac{\alpha^2}{4(1+\alpha)}
\operatorname{tr}(A_{kl}) \operatorname{tr}(A_{rs})
+
\frac{1}{2(1+\alpha)}
\operatorname{tr}(A_{kl}A_{rs}).
\end{align*}
Here,
\[
    A_{kl} = \Sigma_0^{-1} S_{kl},
    \qquad
    S_{kl} =
    \begin{cases}
        E_{kk}, & k = l,\\[0.2cm]
        E_{kl} + E_{lk}, & k \neq l,
    \end{cases}
\]
where $E_{kl} = e_k e_l^\top$ and $e_k$ denotes the $k$th standard basis vector in
$\mathbb R^d$.
\end{theorem}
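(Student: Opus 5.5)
The proof follows the standard M-estimation route. We expand the score of the averaged criterion around $\Sigma_0$ and show that the normalized score is asymptotically normal with covariance $\Xi$. We then show that the normalized Hessian converges in probability to $\mathscr L$, and conclude by Slutsky's theorem. Write $\sigma=\operatorname{vech}(\Sigma)$ and $\mathcal V_n(\sigma)=\frac1n\sum_{i=1}^n V_{n,i}^\alpha(\bm\beta_0,\sigma)$. By Theorem~\ref{thm:mdpde-consistency} (with $\bm\beta$ known), $\hat\Sigma_n^\alpha\to\Sigma_0$ in probability, so with probability tending to one $\nabla_\sigma\mathcal V_n(\hat\sigma_n)=0$. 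A mean-value expansion then gives
\[
0=\sqrt n\,\nabla_\sigma\mathcal V_n(\sigma_0)+\nabla^2_\sigma\mathcal V_n(\tilde\sigma_n)\,\sqrt n(\hat\sigma_n-\sigma_0).
\]
For the derivatives we use $\partial\Sigma/\partial\sigma_{kl}=S_{kl}$, $\partial|\Sigma|^{-\alpha/2}/\partial\sigma_{kl}=-\tfrac\alpha2|\Sigma|^{-\alpha/2}\operatorname{tr}(\Sigma^{-1}S_{kl})$ and $\partial\Sigma^{-1}/\partial\sigma_{kl}=-\Sigma^{-1}S_{kl}\Sigma^{-1}$. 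Setting $U_i=h_n^{-1/2}\Lambda_0^{-1}R_i(\bm\beta_0)=Z_{n,i}+h_n^{-1/2}\Lambda_0^{-1}\Delta_{n,i}$, the score contribution at $\Sigma_0$ becomes
\[
\psi_{kl}(U_i)=|\Sigma_0|^{-\alpha/2}\Bigl[-\tfrac{\alpha}{2(1+\alpha)^{d/2}}\operatorname{tr}A_{kl}+\tfrac{1+\alpha}{2}e^{-\alpha\|U_i\|^2/2}\bigl(\operatorname{tr}A_{kl}-U_i^\top\Lambda_0 A_{kl}\Lambda_0^{-1}U_i\bigr)\Bigr].
\]
(The quadratic form arises as $U^\top\Lambda_0\Sigma_0^{-1}S_{kl}\Sigma_0^{-1}\Lambda_0 U$ divided by $h_n$, written through $A_{kl}$.)

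\textbf{Step 1: replace $U_i$ by $Z_{n,i}$.} Using (A1), (A3) and (A4), a Gronwall-type estimate gives $\mathbb E\|\Delta_{n,i}\|^k\le Ch_n^{3k/2}$, so $U_i-Z_{n,i}=O_{L^k}(h_n)$. The map $\psi_{kl}$ is smooth with derivative of polynomial growth, and $e^{-\alpha\|u\|^2/2}$ damps everything. Hence
\[
\frac1{\sqrt n}\sum_i\bigl[\psi_{kl}(U_i)-\psi_{kl}(Z_{n,i})\bigr]
\]
splits into a martingale-difference part of order $O_P(h_n)$ and a conditional-mean part of order $O_P(\sqrt n\,h_n)$. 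The conditional-mean part needs care. The first-order term $\mathbb E[\nabla\psi_{kl}(Z)\cdot\delta\mid\mathcal F_{t_{i-1}}]$ vanishes at the leading order, because $\nabla\psi_{kl}$ is odd in $Z$ and the leading piece of $\delta$, namely $\sqrt{h_n}\,\Lambda_0^{-1}\int(a(X_s)-a(X_{t_{i-1}}))\,ds/h_n$, is correlated with $Z$ only through a term of size $O(h_n)$. Bounding it crudely by $\sqrt n\,h_n=\sqrt{nh_n^2}\to0$ already suffices under Assumption~\ref{assm:h_n^2}. I expect this remainder control, which must handle the conditional bias of $\Delta_{n,i}$ uniformly, to be the main technical obstacle.

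\textbf{Step 2: CLT for the leading term.} The variables $\psi_{kl}(Z_{n,i})$ are i.i.d.\ across $i$, since the $Z_{n,i}$ are independent standard normals. A direct Gaussian computation using $\mathbb E[e^{-\alpha\|Z\|^2/2}]=(1+\alpha)^{-d/2}$ and $\mathbb E[e^{-\alpha\|Z\|^2/2}Z^\top MZ]=(1+\alpha)^{-d/2-1}\operatorname{tr}M$ shows that they have mean zero, so the estimating equation is unbiased. The classical CLT then applies (the martingale CLT if one prefers the filtration viewpoint). The covariance is obtained from the Gaussian moments
\[
\mathbb E\bigl[e^{-\alpha\|Z\|^2}(Z^\top MZ)(Z^\top NZ)\bigr]=(1+2\alpha)^{-d/2-2}\bigl[\operatorname{tr}M\operatorname{tr}N+2\operatorname{tr}(MN)\bigr],
\]
applied with $M=\Lambda_0A_{kl}\Lambda_0^{-1}$ and $N=\Lambda_0A_{rs}\Lambda_0^{-1}$, so that $\operatorname{tr}M=\operatorname{tr}A_{kl}$ and $\operatorname{tr}(MN)=\operatorname{tr}(A_{kl}A_{rs})$. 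Collecting terms gives $\Xi_{kl,rs}$, up to the common factor $|\Sigma_0|^{-\alpha}$, which cancels against the same factor in the Hessian. The subtracted term $-\tfrac{\alpha^2}{4}\operatorname{tr}A_{kl}\operatorname{tr}A_{rs}$ is the square of the mean part.

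\textbf{Step 3: Hessian convergence.} Differentiate once more in $\sigma$. Each entry of $\nabla^2\mathcal V_n(\sigma)$ is an average of smooth functions of $U_i$ and $\Sigma$, with polynomially bounded derivatives in $\Sigma$ on a compact neighbourhood of $\Sigma_0$. A uniform LLN, combined with Step 1's replacement of $U_i$ by $Z_{n,i}$, gives
\[
\sup_{\|\sigma-\sigma_0\|\le\varepsilon}\bigl\|\nabla^2\mathcal V_n(\sigma)-\mathbb E\,\nabla^2 v(Z;\sigma)\bigr\|\xrightarrow{\mathbb P}0.
\]
Since $\tilde\sigma_n\to\sigma_0$, this yields $\nabla^2\mathcal V_n(\tilde\sigma_n)\to\mathscr L$, where $\mathscr L$ is computed from the same Gaussian moments evaluated at $\Sigma_0$. $\mathscr L$ is positive definite as a positive combination of $\operatorname{tr}(A_{kl}A_{rs})$, a Gram matrix in the $\Sigma_0^{-1/2}\cdot\Sigma_0^{-1/2}$ inner product, and a rank-one PSD term. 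Therefore
\[
\sqrt n(\hat\sigma_n-\sigma_0)=-\mathscr L^{-1}\sqrt n\,\nabla\mathcal V_n(\sigma_0)+o_P(1),
\]
and Slutsky's theorem together with the symmetry of $\mathscr L$ gives the limit $\mathcal N(0,\mathscr L^{-\top}\Xi\mathscr L^{-1})$.
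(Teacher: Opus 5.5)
Your proposal is correct and follows essentially the same route as the paper. The paper removes the discretization error from the score through the auxiliary $U_{n,i}^\alpha$ and Lemma~\ref{lem:u_v_deriv_conver_p}; you do this with a direct Lipschitz bound of order $\sqrt{n}h_n$. Both then apply a CLT to the resulting Gaussian score, obtain the Hessian limit from Gaussian moment identities, and conclude via positive definiteness (Lemma~\ref{lem:PD_M}) and Slutsky.

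One bookkeeping slip: the score covariance is $(1+\alpha)^{-d}|\Sigma_0|^{-\alpha}\,\Xi$ and the Hessian limit is $(1+\alpha)^{-d/2}|\Sigma_0|^{-\alpha/2}\,\mathscr L$, not $\mathscr L$ itself. These scalar factors cancel exactly in the sandwich, so the stated limit is unaffected.
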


It has been seen in \cite{LeeSong2013}, in the univariate diffusion setting, that the estimator of the
drift parameter is asymptotically independent of the estimator of the diffusion
coefficient. This property continues to hold in the present multivariate
framework, as stated in the following result.
\begin{theorem}[Asymptotic independence]
\label{thm:asymp-indep}
Let $\hat{\bm\beta}_n^{\alpha}$ and $\hat{\Sigma}_n^{\alpha}$ denote the
MDPDEs of the drift parameter $\bm\beta_0$ and the covariance matrix
$\Sigma_0$, respectively.
Under Assumptions~\ref{assm:all_assumption}~\textup{(A1)--(A7)} together with
Assumption~\ref{assm:h_n^2}, we have
\[
\sqrt{n h_n}\bigl(\hat{\bm\beta}_n^{\alpha}-\bm\beta_0\bigr)
\;\barsim\;
\sqrt{n}\,
\operatorname{vech}\!\bigl(\hat{\Sigma}_n^{\alpha}-\Sigma_0\bigr),
\]
where $\barsim$ denotes asymptotic independence.
\end{theorem}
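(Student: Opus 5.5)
Our plan is to establish joint asymptotic normality of the suitably scaled score vector with a block-diagonal limiting covariance, and then to transfer this to the estimators through a joint Taylor expansion with a block-diagonal limiting Hessian. Write $H_n(\bm\theta)=\frac1n\sum_{i=1}^n V_{n,i}^\alpha(\bm\theta)$ and introduce the scaling matrix $D_n=\operatorname{diag}\bigl(\sqrt{nh_n}\,I_p,\sqrt n\,I_{d^*}\bigr)$. Theorem~\ref{thm:mdpde-consistency} gives $\hat{\bm\theta}_n^\alpha\to\bm\theta_0$ in probability. Since $\bm\theta_0$ is interior, we may expand the estimating equation $\nabla H_n(\hat{\bm\theta}_n^\alpha)=0$ around $\bm\theta_0$:
\[
0 = n D_n^{-1}\nabla H_n(\bm\theta_0) + \bigl(n D_n^{-1}\nabla^2 H_n(\tilde{\bm\theta}_n) D_n^{-1}\bigr) D_n(\hat{\bm\theta}_n^\alpha-\bm\theta_0).
\]
The argument then has three parts: (i) the normalized Hessian converges in probability to a block-diagonal matrix $\operatorname{diag}(\mathcal J_\beta,\mathscr L)$; (ii) the normalized score converges jointly to a Gaussian vector with block-diagonal covariance $\operatorname{diag}(\mathcal K_\beta,\Xi)$; (iii) joint normality combined with zero cross-covariance yields independence.

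\textbf{Diagonal blocks and the Hessian.} The diagonal blocks are already handled in the proofs of Theorems~\ref{thm:beta-clt} and~\ref{thm:asymp-sigma}. Those proofs identify $\mathcal J_\beta$ and $\mathcal K_\beta$, which are multiples of $\mathcal B$, and they identify $\mathscr L$ and $\Xi$. What is new is the off-diagonal block of the Hessian, namely $\frac{1}{n\sqrt{h_n}}\sum_i \partial_{\bm\beta}\partial_{\vech\Sigma}V_{n,i}^\alpha$. Differentiating the exponential term gives summands of the form
\[
|\Sigma|^{-\alpha/2}e^{-\frac{\alpha}{2}U_i^\top U_i}\,\bigl(\text{polynomial in }U_i\bigr)\,\partial_{\bm\beta}a(X_{t_{i-1}},\bm\beta_0),
\qquad U_i=\Lambda_0^{-1}R_i/\sqrt{h_n}\approx Z_{n,i}.
\]
Each such polynomial has odd degree in $U_i$. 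Because $Z_{n,i}$ is symmetric, the conditional mean given $\mathcal F_{t_{i-1}}$ is $O(\sqrt{h_n})$, coming only from the drift and $\Delta_{n,i}$ corrections. The block is therefore $O_P(1/\sqrt n)\to0$ after scaling. Uniformity over a shrinking neighbourhood of $\bm\theta_0$ follows from (A4)--(A5) and the moment bounds (A3), exactly as in the earlier theorems.

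\textbf{Cross-covariance of the scores (main obstacle).} We write the scores as sums of martingale differences plus negligible terms. The drift score is $\xi_i^{\beta}=\frac{1}{\sqrt{nh_n}}\,c_\alpha\,e^{-\frac\alpha2 Z^\top Z}(\partial_\beta a)^\top\Sigma_0^{-1}\Lambda_0 Z\sqrt{h_n}$, and the covariance score is $\xi_i^{\Sigma}=\frac{1}{\sqrt n}\,g(Z_{n,i})$, where $g$ is an \emph{even} function of $Z$ (a centred combination of $e^{-\frac\alpha2 Z^\top Z}$ and $e^{-\frac\alpha2 Z^\top Z}Z^\top A_{kl}Z$). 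The bias terms from $\Delta_{n,i}$ and the drift are removed using Assumption~\ref{assm:h_n^2}, as in the earlier proofs. We then apply the martingale CLT to the stacked vector $(\xi_i^\beta,\xi_i^\Sigma)$; the Lindeberg condition holds through the polynomial moment bounds. The key quantity is the conditional cross-covariance
\[
\sum_i \E\bigl[\xi_i^\beta(\xi_i^\Sigma)^\top\mid\mathcal F_{t_{i-1}}\bigr]=\frac{1}{n}\sum_i(\partial_\beta a)^\top\Sigma_0^{-1}\Lambda_0\,\E\bigl[Z\,\text{even}(Z)\bigr]+o_P(1)=o_P(1),
\]
which vanishes because odd moments of $\mathcal N_d(0,I_d)$ are zero. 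The delicate point is to control the corrections where $U_i$ differs from $Z_{n,i}$. These contribute terms of order $\sqrt{h_n}$ times ergodic averages, which are bounded by (A3) and the ergodic theorem (A2).

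\textbf{Conclusion.} The stacked score converges to $\mathcal N\bigl(0,\operatorname{diag}(\mathcal K_\beta,\Xi)\bigr)$. Since the normalized Hessian is asymptotically block diagonal and invertible (by (A7) and the invertibility of $\mathscr L$ established in Theorem~\ref{thm:asymp-sigma}), Slutsky's theorem gives that $D_n(\hat{\bm\theta}_n^\alpha-\bm\theta_0)$ is jointly Gaussian in the limit with block-diagonal covariance. For Gaussian vectors, zero covariance implies independence, which establishes the stated result.
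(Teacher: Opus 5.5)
Your proposal is correct and follows the paper's own proof. The paper's Step~1 is your computation: the conditional cross-covariance of the $U$-scores vanishes because $e^{-\frac{\alpha}{2}Z^\top Z}(\partial_{\beta}a)^\top\Lambda_0^{-1}Z$ times an even function of $Z$ has zero Gaussian mean. The paper's Step~2 is your argument that the $\frac{1}{n\sqrt{h_n}}$-scaled cross-Hessian is an average of odd functions of $Z_{n,i}$ plus negligible remainders. You also write out the final step the paper leaves implicit: a block-diagonal invertible Hessian and Slutsky give a jointly Gaussian limit, and zero covariance then gives independence.

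One small slip: you say the conditional mean of each odd summand is $O(\sqrt{h_n})$, so the off-diagonal block is $O_P(n^{-1/2}+\sqrt{h_n})$, not $O_P(n^{-1/2})$. It still tends to zero, so the conclusion is unaffected.
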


Combining the marginal asymptotic normality results for the drift and diffusion
estimators with their asymptotic independence yields the following joint
Gaussian limit.

\begin{theorem}[Joint asymptotic normality]
\label{thm:joint-clt}
Under Assumptions~\ref{assm:all_assumption}~\textup{(A1)--(A7)} and
Assumption~\ref{assm:h_n^2}, it holds that
\begin{align}
\begin{pmatrix}
\sqrt{n h_n}\bigl(\hat{\bm\beta}_n^\alpha - \bm\beta_0\bigr) \\
\sqrt{n}\Bigl(
\operatorname{vech}(\hat\Sigma_n^{\alpha})
-
\operatorname{vech}(\Sigma_0)
\Bigr)
\end{pmatrix}
\;\Rightarrow\;
\mathcal{N}\!\left(0,\mathscr{B}\right),
\end{align}
where the asymptotic covariance matrix $\mathscr{B}$ is block diagonal with
$\mathscr{B}_{11}^{p\times p}=\Sigma_{\bm\beta}$ and
\[
\mathscr{B}_{22}^{d^*\times d^*}
= \mathscr{L}^{-\top}\,\Xi\,\mathscr{L}^{-1}.
\]
The off-diagonal blocks vanish, reflecting the asymptotic independence of the
drift and diffusion estimators. The explicit expressions for
$\Sigma_{\bm\beta}$, $\mathscr{L}$, and $\Xi$ are provided in
Theorems~\ref{thm:beta-clt} and~\ref{thm:asymp-sigma}.
\end{theorem}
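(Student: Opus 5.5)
We will derive the joint limit from a Taylor expansion of the joint estimating equation, rescaled block by block, followed by a martingale central limit theorem. Write $\mathcal H_n(\bm\theta)=\frac1n\sum_{i=1}^n V_{n,i}^\alpha(\bm\theta)$ and use the rate matrix $D_n=\operatorname{diag}\bigl(\sqrt{nh_n}\,I_p,\ \sqrt n\, I_{d^*}\bigr)$. Theorem~\ref{thm:mdpde-consistency} gives $\hat{\bm\theta}_n^\alpha\to\bm\theta_0$ in probability, so $\hat{\bm\theta}_n^\alpha$ is an interior point with high probability and $\partial_{\bm\theta}\mathcal H_n(\hat{\bm\theta}_n^\alpha)=0$. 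A mean-value expansion then gives
\[
0 = D_n^{-1}\,\partial_{\bm\theta}\mathcal H_n(\bm\theta_0)\cdot n\,\tilde c_n + \Bigl[n\,\tilde C_n D_n^{-1}\,\partial^2_{\bm\theta}\mathcal H_n(\bar{\bm\theta}_n)\,D_n^{-1}\Bigr] D_n(\hat{\bm\theta}_n^\alpha-\bm\theta_0),
\]
where the scalar normalisations $\tilde c_n,\tilde C_n$ absorb the factor $h_n^{-1}$ present in the $\bm\beta$-score. Concretely, the $\bm\beta$-score carries a factor $(\alpha/h_n)R_i^\top\Sigma^{-1}\partial_{\bm\beta}a\,h_n$, i.e.\ it is of order $\sqrt{h_n}$ per term, and the $\Sigma$-score is of order one per term.

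The key steps, in order, are as follows.

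\emph{Step 1 (score vector).} Form the vector
\[
U_n=\Bigl(\tfrac{1}{\sqrt{nh_n}}\sum_i \sqrt{h_n}\,\psi_{\beta,i},\ \tfrac1{\sqrt n}\sum_i\psi_{\Sigma,i}\Bigr),
\]
where $\psi_{\beta,i}$ and $\psi_{\Sigma,i}$ are the normalised scores evaluated at $\bm\theta_0$. Each is a function of $(X_{t_{i-1}},Z_{n,i})$ plus a remainder involving $\Delta_{n,i}$. The bounds on $\Delta_{n,i}$ already used in Theorems~\ref{thm:beta-clt} and~\ref{thm:asymp-sigma}, under $nh_n^2\to0$ and (A1), (A3) and (A4), show that the remainder contributes $o_P(1)$. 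The leading terms form martingale difference arrays with respect to $\mathcal F_{t_i}$: the $\bm\beta$-part is odd in $Z_{n,i}$ and hence centred, and the $\Sigma$-part is centred by the very construction of the DPD (the constant $(1+\alpha)^{-d/2}$ term).

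\emph{Step 2 (joint martingale CLT).} Apply the martingale CLT for triangular arrays (a Cramér--Wold device combined with the Hall--Heyde conditions). The diagonal blocks of the conditional quadratic variation converge to $\frac{(1+\alpha)^{d+2}}{(1+2\alpha)^{d/2+1}}\mathcal B$ and to $\Xi$; these are exactly the computations behind the two marginal theorems, using ergodicity (A2) for the $\bm\beta$ block. The Lindeberg condition follows from moment bounds (A3) together with the Gaussian moments of $Z_{n,i}$.

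\emph{Step 3 (Hessian).} The rescaled Hessian converges in probability to $\operatorname{diag}(\mathcal K_\beta,\mathscr L)$. Its diagonal blocks are those used in the marginal proofs. Its off-diagonal block $\frac{1}{n\sqrt{h_n}}\sum_i\partial_{\bm\beta}\partial_{\Sigma}V_{n,i}^\alpha$ consists of terms that are odd in $Z_{n,i}$ at leading order, so it is $O_P(1/\sqrt{n})\to0$. Uniform convergence over a shrinking neighbourhood, via (A5) and the polynomial growth class $\mathscr P$, lets us replace $\bar{\bm\theta}_n$ by $\bm\theta_0$.

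\emph{Step 4 (assembly).} Combine Steps 2 and 3 with Slutsky's theorem to obtain the block-diagonal limit $\mathscr B$. This agrees with Theorem~\ref{thm:asymp-indep}.

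The main obstacle is the vanishing of the off-diagonal block of the score covariance in Step 2. Its generic entry is
\[
\frac{1}{n}\sum_i E\bigl[\psi_{\beta,i}\psi_{\Sigma,i}^\top\mid\mathcal F_{t_{i-1}}\bigr]\cdot\sqrt{h_n}\cdot(\text{normalisation}).
\]
At leading order the conditional expectation is
\[
E\Bigl[Z^\top\Lambda_0\Sigma_0^{-1}\partial_\beta a\; e^{-\alpha|Z|^2}\,g(Z)\Bigr],
\]
where $g$ is an even function of $Z\sim\mathcal N_d(0,I)$. This vanishes by symmetry, since $Z\mapsto-Z$ preserves the Gaussian law and the weight while flipping the sign of the integrand. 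The correction terms carry an extra factor of $\sqrt{h_n}$ times polynomially bounded functions of $X_{t_{i-1}}$, so they are $O(\sqrt{h_n})$ and therefore negligible. These extra factors come from the drift contribution $h_nD_i$ and from $\Delta_{n,i}$, which is $O(h_n^{3/2})$ in $L^2$. I would verify this carefully by expanding the exponential weight to first order in the drift contribution, which is $O(\sqrt{h_n})$ in the exponent.
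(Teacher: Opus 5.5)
Your proposal is correct, but it is organised differently from the paper's proof of this theorem.

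\textbf{The paper's route.} The paper argues modularly. It quotes the marginal limits of Theorems~\ref{thm:beta-clt} and~\ref{thm:asymp-sigma}. It then invokes Theorem~\ref{thm:asymp-indep} to assert that the joint characteristic function factorises as $\phi_\beta(t_1)\phi_\Sigma(t_2)+o(1)$, and concludes by L\'evy's continuity theorem.

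\textbf{Your route.} You work with the joint estimating equation directly, in three pieces:
\begin{itemize}
\item a single block-rescaled Taylor expansion, which is the paper's expansion \eqref{eq:taylor_expansion} with the full matrix $C_n^\alpha$;
\item a Cram\'er--Wold martingale CLT for the stacked score, whose cross conditional covariance vanishes by the $Z\mapsto -Z$ symmetry;
\item a block-diagonal limit of the rescaled Hessian, whose off-diagonal block vanishes for the same parity reason.
\end{itemize}
Those two cross-term computations are exactly the content of the paper's proof of Theorem~\ref{thm:asymp-indep}. In effect you fold Theorems~\ref{thm:beta-clt}--\ref{thm:asymp-indep} into one argument.

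\textbf{What each buys.} The paper's route is shorter given the earlier results. Yours treats the actual joint MDPDE, with both $\bm\beta$ and $\Sigma$ unknown.
\begin{itemize}
\item Theorems~\ref{thm:beta-clt} and~\ref{thm:asymp-sigma} are stated with the other parameter known. It is the vanishing cross block of the Hessian that guarantees each component of the joint estimator has the same limit as its known-nuisance counterpart.
\item In your version, independence is a consequence of a joint Gaussian limit with zero cross-covariance. By contrast, deducing a factorised characteristic function from ``asymptotic independence'' tacitly presupposes the joint limit that is being proved.
\end{itemize}

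\textbf{Small points of presentation.}
\begin{itemize}
\item The scalar factors $\tilde c_n,\tilde C_n$ are superfluous, and a scalar could not repair a block-specific rate anyway. The $h_n^{-1}$ in the $\bm\beta$-score cancels against $\partial R_i/\partial\bm\beta=-h_n\,\partial a/\partial\bm\beta$. So with $l_n^\alpha=\sum_i V_{n,i}^\alpha$, the relation $0=D_n^{-1}\partial_{\bm\theta}l_n^\alpha(\bm\theta_0)+\bigl[D_n^{-1}\partial^2_{\bm\theta}l_n^\alpha D_n^{-1}\bigr]D_n(\hat{\bm\theta}_n^\alpha-\bm\theta_0)$ is already correctly scaled. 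Use row-wise intermediate points, or the integral form the paper uses, rather than a single $\bar{\bm\theta}_n$.
\item The constants you quote, $(1+\alpha)^{d+2}(1+2\alpha)^{-d/2-1}\mathcal B$ and $\Xi$, are those of the estimating equation multiplied by $(1+\alpha)^{d/2}|\Sigma_0|^{\alpha/2}$. This is the normalisation under which the paper states $\Xi$ and $\mathscr L$. Under it the $\bm\beta$-block of the Hessian is $\mathcal K_\beta=\mathcal B$; say this explicitly.
\item At $\bm\theta_0$ one has $D_i\equiv 0$, so the exponent is perturbed only through $\Delta_{n,i}$, by $O_P(h_n)$. Your cruder $O(\sqrt{h_n})$ bound still suffices.
\item The $V$-versus-$U$ replacement costs $O_P(\sqrt{n}\,h_n)$. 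This is exactly where $nh_n^2\to0$ enters.
\end{itemize}
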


\begin{remark}
The block-diagonal structure of the asymptotic covariance matrix reflects the
fact that drift and diffusion parameters are estimated at different rates and
are asymptotically independent under high-frequency sampling.
\end{remark}

\section{Simulation Study}
\label{sec:simulation}

This section presents a comprehensive simulation study designed to evaluate the finite-sample performance and robustness of the proposed MDPDE for multivariate diffusion-type models under contamination. The simulation framework is deliberately constructed to exploit the close connection between discretely observed diffusion processes and multivariate linear regression or vector autoregressive (VAR) models.

Let $\{X_t\}_{t=0}^{n}$ be a $d$-dimensional stochastic process observed at equidistant time points, where we focus on the bivariate case $d=2$. The data-generating mechanism is based on the Euler--Maruyama discretization of a linear diffusion process,
\begin{equation}
\Delta X_t := X_t - X_{t-1}
= (B X_{t-1} + b)\, h + \varepsilon_t,
\qquad
\varepsilon_t \sim \mathcal{N}(0, \Sigma h),
\label{eq:euler_sim}
\end{equation}
for $t=1,\ldots,n$, where $B \in \mathbb{R}^{2\times2}$ is the drift matrix, $b \in \mathbb{R}^2$ is an intercept vector, $\Sigma$ is a positive-definite diffusion matrix, and $h$ denotes the discretization step size.

Dividing \eqref{eq:euler_sim} by $h$ yields the regression-type representation
\begin{equation}
Y_t := \frac{\Delta X_t}{h}
= B X_{t-1} + b + \eta_t,
\qquad
\eta_t \sim \mathcal{N}\!\left(0, \frac{\Sigma}{h}\right),
\label{eq:var_form}
\end{equation}
which shows that the discretely observed diffusion model admits a natural interpretation as a multivariate linear regression or, equivalently, a VAR(1)-type model with heteroskedastic innovations. Such representations are standard in the analysis of discretely observed diffusions and multivariate time series; see, for example, \citet{Hamilton1994}, \citet{Lutkepohl2005}, and \citet{UchidaYoshida2012}. In empirical work, VAR models of the form \eqref{eq:var_form} are routinely estimated using standard software implementations, such as those provided in the \texttt{vars} package in \textsf{R} \citep{varsR}.

Importantly, instead of estimating \eqref{eq:var_form} via classical Gaussian likelihood or least squares—as is customary in standard VAR analysis—we adopt a robust regression-based approach using the MDPDE. This allows us to retain the VAR structure induced by the diffusion dynamics while achieving robustness against deviations from Gaussianity and the presence of outliers.

The discretization step size is chosen as a decreasing function of the sample size,
$h_n = n^{-0.55}$, reflecting an increasingly fine temporal resolution as $n$
grows. In our setting, this choice satisfies Assumption~\ref{assm:h_n^2} and
maintains an appropriate balance between numerical stability and the asymptotic
scaling of the drift and diffusion components. The true parameter values used throughout the simulation study are
\[
B_{\text{true}} =
\begin{pmatrix}
-0.6 & -0.2 \\
0.1 & -0.4
\end{pmatrix},
\qquad
b_{\text{true}} =
\begin{pmatrix}
2.0 \\
1.0
\end{pmatrix},
\qquad
\Sigma_{\text{true}} =
\begin{pmatrix}
1.0 & 0.5 \\
0.5 & 0.7
\end{pmatrix}.
\]
These values correspond to a stable multivariate autoregressive structure with nontrivial cross-dependence and correlated innovations.

To assess robustness, contamination is introduced additively into the generated trajectories. Let $X_t^{\text{clean}}$ denote the uncontaminated process generated from \eqref{eq:euler_sim}. The contaminated observations are defined as
\begin{equation}
X_t^{\text{contaminated}}
=
X_t^{\text{clean}} + \kappa Z_t,
\qquad
Z_t \sim \mathcal{N}(\bm{0}, I_d),
\label{eq:contamination}
\end{equation}
for a randomly selected fraction $\epsilon$ of time points, where
\[
\epsilon \in \{0, 0.05, 0.10, 0.20\},
\]
and the contamination magnitude is fixed at $\kappa = 5$.

For each simulated dataset, the response and design matrices are constructed as
\[
Y_t = {X_t - X_{t-1}},
\qquad
Z_t = \begin{pmatrix} X_{t-1}^\top & 1 \end{pmatrix},
\quad t=1,\ldots,n.
\]
The resulting multivariate regression model is then estimated using the MDPDE with tuning parameters
\[
\alpha \in \{0, 0.1, 0.3, 0.5\}.
\]
The case $\alpha=0$ corresponds to the classical Gaussian maximum likelihood estimator, while $\alpha>0$ yields increasing robustness against contamination.

\begin{table}
\centering
\resizebox{1\textwidth}{!}{
\begin{tabular}{crrrrrrrrrrr}
  \toprule
$\epsilon$ & alpha & B11 & B12 & B21 & B22 & b1 & b2 & S11 & S12 & S21 & S22 \\ 
  \midrule
0\% & 0.00 & -0.7559 & -0.0735 & 0.1564 & -0.4574 & 1.3993 & 1.3891 & 1.0339 & 0.2131 & 0.2131 & 0.2962 \\ 
  0\% & 0.10 & -0.7046 & -0.1246 & 0.1622 & -0.4631 & 1.6155 & 1.4273 & 1.0388 & 0.2145 & 0.2145 & 0.3015 \\ 
  0\% & 0.30 & -0.6296 & -0.1992 & 0.1708 & -0.4720 & 1.9359 & 1.4996 & 1.0456 & 0.2120 & 0.2120 & 0.3041 \\ 
  0\% & 0.50 & -0.5771 & -0.2537 & 0.1749 & -0.4772 & 2.1940 & 1.5621 & 1.0567 & 0.2034 & 0.2034 & 0.2967 \\ \midrule
  5\% & 0.00 & -2.2574 & 1.1019 & 3.3789 & -4.0304 & -3.0000 & 19.8483 & 18.8673 & 7.0632 & 7.0632 & 37.6210 \\ 
  5\% & 0.10 & -0.6862 & -0.1391 & 0.2194 & -0.5230 & 1.6244 & 1.6735 & 1.0284 & 0.1920 & 0.1920 & 0.3000 \\ 
  5\% & 0.30 & -0.6128 & -0.2103 & 0.2285 & -0.5330 & 1.9090 & 1.7387 & 1.0364 & 0.1886 & 0.1886 & 0.3092 \\ 
  5\% & 0.50 & -0.5684 & -0.2531 & 0.2321 & -0.5377 & 2.0922 & 1.7842 & 1.0504 & 0.1819 & 0.1819 & 0.3119 \\ \midrule
  10\% & 0.00 & -5.2368 & 3.1727 & 2.6575 & -3.5577 & -9.0508 & 18.4406 & 56.2796 & 15.2271 & 15.2271 & 53.1060 \\ 
  10\% & 0.10 & -0.7586 & -0.1472 & 0.1556 & -0.4861 & 2.0517 & 1.6211 & 0.9669 & 0.1682 & 0.1682 & 0.3021 \\ 
  10\% & 0.30 & -0.7050 & -0.1958 & 0.1569 & -0.4880 & 2.2714 & 1.6653 & 1.0179 & 0.1774 & 0.1774 & 0.3222 \\ 
  10\% & 0.50 & -0.6802 & -0.2234 & 0.1570 & -0.4887 & 2.4186 & 1.7039 & 1.1086 & 0.1815 & 0.1815 & 0.3348 \\ \midrule
  20\% & 0.00 & -4.9382 & 2.7166 & 3.7332 & -5.0909 & -7.4022 & 28.8417 & 85.3125 & 7.4743 & 7.4743 & 115.4907 \\ 
  20\% & 0.10 & -5.2480 & 3.3204 & 4.5024 & -5.5042 & -9.2552 & 28.1360 & 58.1067 & 12.1681 & 12.1681 & 72.6244 \\ 
  20\% & 0.30 & -0.6565 & -0.1190 & 0.3440 & -0.6207 & 1.4426 & 1.8727 & 1.1155 & 0.2877 & 0.2877 & 0.4527 \\ 
  20\% & 0.50 & -0.6262 & -0.1409 & 0.3517 & -0.6276 & 1.5058 & 1.9099 & 1.2827 & 0.3382 & 0.3382 & 0.5271 \\ 
   \bottomrule
\end{tabular}}
\caption{MDPDE estimates of drift matrix $B$, intercept $b$, and diffusion $Sigma$ ($n=100$)} 
\label{tab:mdpde_all_n100}
\end{table}

\begin{table}
\centering
\resizebox{1\textwidth}{!}{
\begin{tabular}{crrrrrrrrrrr}
  \toprule
$\epsilon$ & alpha & B11 & B12 & B21 & B22 & b1 & b2 & S11 & S12 & S21 & S22 \\ 
  \midrule
0\% & 0.00 & -0.7051 & 0.0250 & 0.0793 & -0.3383 & 1.0675 & 0.7754 & 1.4103 & 0.4447 & 0.4447 & 0.4707 \\ 
  0\% & 0.10 & -0.7118 & 0.0439 & 0.0921 & -0.3424 & 0.8521 & 0.7337 & 1.4218 & 0.4484 & 0.4484 & 0.4673 \\ 
  0\% & 0.30 & -0.7058 & 0.0521 & 0.1278 & -0.3656 & 0.6197 & 0.7441 & 1.4513 & 0.4550 & 0.4550 & 0.4608 \\ 
  0\% & 0.50 & -0.7041 & 0.0601 & 0.1512 & -0.3786 & 0.4239 & 0.7233 & 1.4825 & 0.4594 & 0.4594 & 0.4557 \\ \midrule
  5\% & 0.00 & -4.1536 & 2.4890 & 3.2993 & -4.0289 & -4.5583 & 16.7692 & 42.4082 & 18.5921 & 18.5921 & 51.6905 \\ 
  5\% & 0.10 & -0.6745 & 0.0374 & 0.1567 & -0.4015 & 0.6975 & 0.8605 & 1.4334 & 0.4467 & 0.4467 & 0.4609 \\ 
  5\% & 0.30 & -0.6500 & 0.0284 & 0.1835 & -0.4199 & 0.5407 & 0.8898 & 1.5083 & 0.4612 & 0.4612 & 0.4662 \\ 
  5\% & 0.50 & -0.6276 & 0.0162 & 0.2027 & -0.4330 & 0.4507 & 0.9108 & 1.5870 & 0.4759 & 0.4759 & 0.4739 \\ \midrule
  10\% & 0.00 & -7.8755 & 5.4985 & 4.3082 & -5.2371 & -12.2016 & 21.7219 & 94.4123 & 20.3333 & 20.3333 & 72.4413 \\ 
  10\% & 0.10 & -0.6431 & 0.0104 & 0.0547 & -0.3218 & 0.6144 & 0.5970 & 1.5191 & 0.4619 & 0.4619 & 0.5062 \\ 
  10\% & 0.30 & -0.6190 & 0.0125 & 0.0848 & -0.3367 & 0.3307 & 0.5451 & 1.5904 & 0.4817 & 0.4817 & 0.5302 \\ 
  10\% & 0.50 & -0.6082 & 0.0148 & 0.1050 & -0.3466 & 0.1548 & 0.5079 & 1.7039 & 0.5094 & 0.5094 & 0.5569 \\ \midrule
  20\% & 0.00 & -8.6224 & 5.6377 & 5.4207 & -7.0913 & -8.6557 & 30.0396 & 125.0633 & 27.9376 & 27.9376 & 133.3203 \\ 
  20\% & 0.10 & -0.7554 & 0.1067 & 0.0949 & -0.3126 & 0.2123 & 0.2343 & 1.5787 & 0.5997 & 0.5997 & 0.6613 \\ 
  20\% & 0.30 & -0.7017 & 0.0780 & 0.1280 & -0.3351 & 0.1142 & 0.2543 & 1.7199 & 0.5822 & 0.5822 & 0.6057 \\ 
  20\% & 0.50 & -0.6718 & 0.0643 & 0.1462 & -0.3465 & 0.0085 & 0.2347 & 1.9330 & 0.6422 & 0.6422 & 0.6718 \\ 
   \bottomrule
\end{tabular}}
\caption{MDPDE estimates of drift matrix $B$, intercept $b$, and diffusion $Sigma$ ($n= 200$)} 
\label{tab:mdpde_all_n200}
\end{table}

\begin{table}
\centering
\resizebox{1\textwidth}{!}{
\begin{tabular}{crrrrrrrrrrr}
  \toprule
$\epsilon$ & alpha & B11 & B12 & B21 & B22 & b1 & b2 & S11 & S12 & S21 & S22 \\ 
  \midrule
0\% & 0.00 & -0.5123 & -0.2830 & 0.1825 & -0.4409 & 2.1266 & 1.0641 & 1.2485 & 0.3669 & 0.3669 & 0.4797 \\ 
  0\% & 0.10 & -0.4814 & -0.3051 & 0.1923 & -0.4517 & 2.1531 & 1.1085 & 1.2219 & 0.3578 & 0.3578 & 0.4722 \\ 
  0\% & 0.30 & -0.4334 & -0.3367 & 0.2099 & -0.4704 & 2.2025 & 1.1846 & 1.1961 & 0.3526 & 0.3526 & 0.4582 \\ 
  0\% & 0.50 & -0.3960 & -0.3589 & 0.2247 & -0.4856 & 2.2472 & 1.2439 & 1.1911 & 0.3559 & 0.3559 & 0.4457 \\ \midrule
  5\% & 0.00 & -6.1694 & 2.8960 & 3.7834 & -4.5292 & 0.3436 & 14.7355 & 58.8943 & 13.8092 & 13.8092 & 74.2389 \\ 
  5\% & 0.10 & -0.5959 & -0.2343 & 0.1768 & -0.4526 & 2.1174 & 1.1631 & 1.2555 & 0.3727 & 0.3727 & 0.4873 \\ 
  5\% & 0.30 & -0.5314 & -0.2789 & 0.1930 & -0.4668 & 2.2088 & 1.1896 & 1.2748 & 0.3793 & 0.3793 & 0.4723 \\ 
  5\% & 0.50 & -0.4880 & -0.3075 & 0.2050 & -0.4774 & 2.2986 & 1.2155 & 1.3049 & 0.3956 & 0.3956 & 0.4726 \\ \midrule
  10\% & 0.00 & -9.7854 & 5.2484 & 5.0744 & -5.7864 & -3.5724 & 19.0504 & 112.1633 & 1.6783 & 1.6783 & 97.4943 \\ 
  10\% & 0.10 & -0.5451 & -0.2507 & 0.1662 & -0.4301 & 2.1068 & 1.1428 & 1.2780 & 0.3871 & 0.3871 & 0.5185 \\ 
  10\% & 0.30 & -0.4974 & -0.2787 & 0.1830 & -0.4464 & 2.1009 & 1.2087 & 1.3083 & 0.3987 & 0.3987 & 0.5138 \\ 
  10\% & 0.50 & -0.4693 & -0.2919 & 0.1965 & -0.4585 & 2.0879 & 1.2504 & 1.3797 & 0.4272 & 0.4272 & 0.5311 \\ \midrule
  20\% & 0.00 & -16.0131 & 8.2888 & 6.6667 & -9.7005 & -3.2650 & 37.9938 & 243.3430 & 33.3199 & 33.3199 & 209.0705 \\ 
  20\% & 0.10 & -0.4522 & -0.3286 & 0.1760 & -0.4836 & 2.3428 & 1.4389 & 1.4151 & 0.3415 & 0.3415 & 0.5109 \\ 
  20\% & 0.30 & -0.4024 & -0.3702 & 0.1604 & -0.4796 & 2.4067 & 1.4839 & 1.5206 & 0.3754 & 0.3754 & 0.5410 \\ 
  20\% & 0.50 & -0.3957 & -0.3788 & 0.1595 & -0.4819 & 2.4347 & 1.5081 & 1.7105 & 0.4261 & 0.4261 & 0.5964 \\ 
   \bottomrule
\end{tabular}}
\caption{MDPDE estimates of drift matrix $B$, intercept $b$, and diffusion $Sigma$ ($n=500$)} 
\label{tab:mdpde_all_n500}
\end{table}

\begin{table}
\centering
\resizebox{1\textwidth}{!}{
\begin{tabular}{crrrrrrrrrrr}
  \toprule
$\epsilon$ & alpha & B11 & B12 & B21 & B22 & b1 & b2 & S11 & S12 & S21 & S22 \\ 
  \midrule
0\% & 0.00 & -0.4550 & -0.2438 & 0.2546 & -0.4835 & 1.8413 & 0.8322 & 1.2340 & 0.3494 & 0.3494 & 0.4221 \\ 
  0\% & 0.10 & -0.4539 & -0.2414 & 0.2372 & -0.4666 & 1.8464 & 0.7919 & 1.2157 & 0.3474 & 0.3474 & 0.4238 \\ 
  0\% & 0.30 & -0.4524 & -0.2375 & 0.2082 & -0.4395 & 1.8624 & 0.7365 & 1.1805 & 0.3443 & 0.3443 & 0.4261 \\ 
  0\% & 0.50 & -0.4558 & -0.2324 & 0.1851 & -0.4191 & 1.8799 & 0.7021 & 1.1539 & 0.3426 & 0.3426 & 0.4281 \\ \midrule
  5\% & 0.00 & -13.1782 & 7.5944 & 10.8178 & -9.9883 & 3.5234 & 14.2734 & 84.8650 & -7.8879 & -7.8879 & 107.9517 \\ 
  5\% & 0.10 & -0.4283 & -0.2197 & 0.2201 & -0.4430 & 1.6085 & 0.7380 & 1.2399 & 0.3555 & 0.3555 & 0.4317 \\ 
  5\% & 0.30 & -0.4372 & -0.2121 & 0.1894 & -0.4129 & 1.6402 & 0.6549 & 1.2332 & 0.3611 & 0.3611 & 0.4396 \\ 
  5\% & 0.50 & -0.4500 & -0.2026 & 0.1656 & -0.3904 & 1.6628 & 0.5946 & 1.2353 & 0.3685 & 0.3685 & 0.4519 \\ \midrule
  10\% & 0.00 & -17.8379 & 9.3182 & 10.3345 & -11.7449 & 11.2862 & 23.8553 & 169.1094 & 25.1268 & 25.1268 & 191.2546 \\ 
  10\% & 0.10 & -0.4585 & -0.2446 & 0.2039 & -0.4701 & 1.7840 & 0.8952 & 1.2317 & 0.3384 & 0.3384 & 0.4338 \\ 
  10\% & 0.30 & -0.4703 & -0.2339 & 0.1748 & -0.4409 & 1.7918 & 0.8009 & 1.2431 & 0.3473 & 0.3473 & 0.4483 \\ 
  10\% & 0.50 & -0.4798 & -0.2272 & 0.1545 & -0.4219 & 1.8034 & 0.7457 & 1.2830 & 0.3622 & 0.3622 & 0.4728 \\ \midrule
  20\% & 0.00 & -25.9607 & 11.8961 & 12.2063 & -17.6399 & 21.2474 & 48.8700 & 364.0903 & 41.5058 & 41.5058 & 404.2682 \\ 
  20\% & 0.10 & -0.5561 & -0.2070 & 0.1782 & -0.4609 & 1.7540 & 0.7925 & 1.2866 & 0.3760 & 0.3760 & 0.4604 \\ 
  20\% & 0.30 & -0.5738 & -0.1963 & 0.1419 & -0.4285 & 1.7691 & 0.7133 & 1.3972 & 0.4303 & 0.4303 & 0.5250 \\ 
  20\% & 0.50 & -0.5957 & -0.1812 & 0.1155 & -0.4050 & 1.7599 & 0.6634 & 1.5590 & 0.4889 & 0.4889 & 0.5972 \\ 
   \bottomrule
\end{tabular}}
\caption{MDPDE estimates of drift matrix $B$, intercept $b$, and diffusion $Sigma$ ($n= 1000$)} 
\label{tab:mdpde_all_n1000}
\end{table}

\begin{table}
\centering
\resizebox{1\textwidth}{!}{
\begin{tabular}{crrrrrrrrrrr}

  \toprule
$\epsilon$ & alpha & B11 & B12 & B21 & B22 & b1 & b2 & S11 & S12 & S21 & S22 \\ 
  \midrule
0\% & 0.00 & -0.6430 & -0.2488 & 0.0245 & -0.3805 & 2.1538 & 1.1627 & 1.3182 & 0.3671 & 0.3671 & 0.4569 \\ 
  0\% & 0.10 & -0.6522 & -0.2385 & 0.0227 & -0.3791 & 2.0995 & 1.1648 & 1.3328 & 0.3700 & 0.3700 & 0.4605 \\ 
  0\% & 0.30 & -0.6736 & -0.2227 & 0.0171 & -0.3773 & 2.0425 & 1.1756 & 1.3613 & 0.3751 & 0.3751 & 0.4663 \\ 
  0\% & 0.50 & -0.6877 & -0.2121 & 0.0117 & -0.3793 & 1.9936 & 1.2060 & 1.3850 & 0.3793 & 0.3793 & 0.4706 \\ \midrule
  5\% & 0.00 & -20.2670 & 10.5295 & 11.0288 & -12.2400 & -0.9734 & 29.8966 & 135.9442 & -1.0510 & -1.0510 & 135.9482 \\ 
  5\% & 0.10 & -0.7037 & -0.1959 & 0.0016 & -0.3589 & 2.0109 & 1.1133 & 1.3413 & 0.3691 & 0.3691 & 0.4606 \\ 
  5\% & 0.30 & -0.7122 & -0.1853 & -0.0030 & -0.3548 & 1.9266 & 1.1027 & 1.4029 & 0.3821 & 0.3821 & 0.4763 \\ 
  5\% & 0.50 & -0.7157 & -0.1811 & -0.0080 & -0.3530 & 1.8709 & 1.1022 & 1.4649 & 0.3964 & 0.3964 & 0.4916 \\ \midrule
  10\% & 0.00 & -29.3462 & 13.2961 & 11.8691 & -17.5687 & 6.8052 & 53.5519 & 292.9587 & 38.4244 & 38.4244 & 278.8074 \\ 
  10\% & 0.10 & -0.6842 & -0.2173 & 0.0230 & -0.3864 & 2.1081 & 1.2672 & 1.3756 & 0.3942 & 0.3942 & 0.4815 \\ 
  10\% & 0.30 & -0.7228 & -0.1819 & 0.0057 & -0.3733 & 2.0029 & 1.2482 & 1.4785 & 0.4223 & 0.4223 & 0.5087 \\ 
  10\% & 0.50 & -0.7508 & -0.1571 & -0.0076 & -0.3645 & 1.9287 & 1.2381 & 1.5833 & 0.4509 & 0.4509 & 0.5394 \\ \midrule
  20\% & 0.00 & -37.4225 & 14.9905 & 17.3959 & -29.5742 & 19.0351 & 97.7580 & 472.7275 & 57.8714 & 57.8714 & 556.7383 \\ 
  20\% & 0.10 & -0.6138 & -0.2176 & 0.0432 & -0.3753 & 1.5691 & 1.0632 & 1.4728 & 0.4414 & 0.4414 & 0.5248 \\ 
  20\% & 0.30 & -0.6216 & -0.2081 & 0.0577 & -0.3862 & 1.4789 & 1.0864 & 1.6472 & 0.4875 & 0.4875 & 0.5868 \\ 
  20\% & 0.50 & -0.6243 & -0.2039 & 0.0668 & -0.3958 & 1.4257 & 1.1141 & 1.8744 & 0.5542 & 0.5542 & 0.6676 \\ 
   \bottomrule
\end{tabular}}
\caption{MDPDE estimates of drift matrix $B$, intercept $b$, and diffusion $Sigma$ ($n= 2000$)} 
\label{tab:mdpde_all_n2000}
\end{table}
Simulations\footnote{All simulations are implemented in \textsf{R}, using the \texttt{parallel} and \texttt{doParallel} packages to accelerate computation. Each sample size is processed independently, and for fixed $n$, all contamination levels and tuning parameters are evaluated in parallel.} are conducted for sample sizes $n \in \{100, 200, 500, 1000, 2000\}$. For each $(n,\epsilon,\alpha)$ combination, parameter estimates of $(B,b,\Sigma)$ are obtained and summarized across repeated runs. Estimation accuracy for different n are shown in tables \ref{tab:mdpde_all_n100}, \ref{tab:mdpde_all_n200}, \ref{tab:mdpde_all_n500}, \ref{tab:mdpde_all_n1000} and \ref{tab:mdpde_all_n2000}; these results illustrate the progressive robustness gains of the MDPDE as $\alpha$ increases, particularly under moderate and severe contamination, while maintaining efficiency in the uncontaminated setting.

\section{Conclusion}
As discussed earlier, multivariate diffusion processes play an important role in diverse application settings. Hence, a unified inferential treatment for such processes is an important task.
This paper addressed robust parameter estimation and hypothesis testing procedure for multivariate diffusion
processes observed at high frequency by employing the minimum density power
divergence estimator (MDPDE).
Under a general diffusion framework and suitable regularity conditions, we
established consistency and asymptotic normality for both drift and diffusion
parameter estimators. Asymptotic test statistics are also developed for these parameters.
The drift estimator converges at the $\sqrt{n h_n}$ rate, while the diffusion
estimator achieves the standard $\sqrt{n}$ rate, and the two estimators are shown to be 
asymptotically independent.

From an applied perspective, the proposed methodology offers a stable and
reliable alternative to quasi-likelihood–based estimation in settings where
high-frequency data are affected by outliers, local contamination, or mild
model misspecification.
Simulation results demonstrate that the MDPDE maintains good finite-sample
accuracy while substantially improving robustness relative to likelihood-based
procedures.
This feature is particularly relevant in empirical applications, where
numerical optimization of quasi-likelihoods may lead to boundary solutions or
highly unstable estimates, even when the underlying continuous-time model is
correctly specified.
By downweighting extreme observations, the MDPDE-based quasi-likelihood approach
yields a more regular and reliable solution space for finite-sample inference.
These results underscore the practical relevance of divergence-based estimation for high-frequency diffusion models. It also points to natural extensions where we have more complex continuous-time settings. We aim to pursue this line of work in future.

\section*{Acknowledgements}
The author would like to express sincere gratitude to Prof.~Diganta Mukherjee\footnotemark~for his valuable guidance and supervision throughout the course of this research. 
The author is also grateful to Prof.~Abhik Ghosh\footnotemark[\value{footnote}]
for carefully reading the manuscript and providing insightful comments and suggestions 
that helped improve the quality of this work.

\footnotetext{Indian Statistical Institute, India}

% \paragraph{Future directions.}
% Several extensions of the present work are of practical and theoretical
% interest.
% First, the proposed framework can be extended to diffusion models with jumps or
% microstructure noise, which frequently arise in high-frequency financial and
% economic data.
% Second, data-driven selection of the divergence tuning parameter remains an
% important open problem, particularly in multivariate settings.
% Third, robust inference for functionals of the diffusion, such as integrated
% volatility or covariation, may be developed within the same divergence-based
% framework.
% Finally, applications to large-scale empirical studies, including financial
% markets and other high-frequency observational systems, provide a promising
% direction for assessing the practical impact of robust divergence-based
% estimation in continuous-time models.

% \section{Hi}

\printbibliography

\clearpage\newpage
\appendix

 \section{Appendix}\label{sec:app}

\setcounter{subsection}{0}
\renewcommand{\thesubsection}{A\arabic{subsection}}
\renewcommand{\theequation}{\thesection.\arabic{subsection}.\arabic{equation}}
\numberwithin{equation}{subsection}

\subsection{Proof of Theorems}
\subsubsection{Proof of Theorem \ref{thm:mdpde-consistency} (Consistency)}

 \begin{proof}
  We begin by recalling the expression of $V_{n,i}^\alpha({\bm\theta})$ from~\eqref{eq:mdpd_multivariate}. The proof is divided into the following steps.

\begin{enumerate}

\item[\textbf{Step 1.}]
We first establish the uniform convergence of the empirical objective function.
Specifically, we show that
\[
\sup_{\bm\theta \in \Theta}
\left|
\frac{1}{n}
\sum_{i=1}^n
V_{n,i}^\alpha(\bm\theta)
-
\Psi(\Sigma)
\right|
\xrightarrow{\mathbb{P}} 0,
\]
where the deterministic limit $\Psi(\Sigma)$ is given by
\begin{align}
\Psi(\Sigma)
=
\frac{1}{(1+\alpha)^{d/2}}
|\Sigma|^{-\alpha/2}
-
\left(1+\frac{1}{\alpha}\right)
|\Sigma|^{-\alpha/2}
\Bigl|
I_d+\alpha\,\Lambda_0 \Sigma^{-1} \Lambda_0
\Bigr|^{-1/2}.\label{eqn:Psi_Sigma}
\end{align}
This convergence is obtained by applying a uniform law of large numbers together
with suitable moment bounds. Lemmas
\ref{lem:poly-negligible-mv} and
\ref{lem:gaussian_tilt_multivariate}
are used repeatedly in this step.

\item[\textbf{Step 2.}]
We next analyze the contribution of the drift parameter $\bm\beta$ at the
$h_n^{-1}$ scale. We show that
\begin{align*}
&\frac{1}{n h_n}
\sum_{i=1}^n
V_{n,i}^\alpha(\bm\beta,\Sigma)
-
\frac{1}{n h_n}
\sum_{i=1}^n
V_{n,i}^\alpha(\bm\beta_0,\Sigma)
\\
&\hspace{2cm}
\xrightarrow{\mathbb{P}}
\frac{1+\alpha}{2}
|\Sigma|^{-\alpha/2}
\Bigl|
I_d + \alpha \Lambda_0^{\top} \Sigma^{-1} \Lambda_0
\Bigr|^{-1/2}
\\
&\hspace{3cm}
\times
\int
\Bigl[
D(\bm\beta)^{\top} \Sigma^{-1} D(\bm\beta)
-
\alpha D(\bm\beta)^{\top} \Sigma^{-1} \Lambda_0
\Bigl(
I_d + \alpha \Lambda_0^{\top} \Sigma^{-1} \Lambda_0
\Bigr)^{-1}
\Lambda_0^{\top} \Sigma^{-1} D(\bm\beta)
\Bigr]
\,\pi(dx),
\end{align*}
where $D(\bm\beta)=a(x,\bm\beta)-a(x,\bm\beta_0)$ and $\pi$ denotes the invariant
probability measure of the diffusion.

The above convergence follows from a second-order expansion of the drift term and
uniform control of the remainder. Lemmas
\ref{lem:delta-moment-mv},
\ref{lem:poly-negligible-mv} and \ref{lem:linear_form_uniform}
are used throughout this step.

\item[\textbf{Step 3.}]
Following the argument of \cite{Kessler1997}, we first establish the
consistency of the diffusion parameter.
From Step~1 we have
\[
\sup_{\bm\theta\in\Theta}
\left|
\frac{1}{n}\sum_{i=1}^n V_{n,i}^{\alpha}(\bm\theta)
-
\Psi(\Sigma)
\right|
\xrightarrow{\mathbb P}0 .
\]
The function $\Psi(\Sigma)$ is continuous in $\Sigma$, since it is
composed of determinants, matrix inverses, and matrix products,
which are continuous on the set of positive definite matrices.
By the identifiability assumption, $\Psi(\Sigma)$ is uniquely minimized
at $\Sigma_0$.

Since $\Theta$ is compact, any sequence of estimators
$(\hat{\bm\beta}_n,\hat{\Sigma}_n)$ admits a convergent subsequence.
Let $(\hat{\bm\beta}_{n_k},\hat{\Sigma}_{n_k}) \to
(\bm\beta_\infty,\Sigma_\infty)$.
By the uniform convergence above,
\[
\frac{1}{n_k}\sum_{i=1}^{n_k}
V_{n_k,i}^{\alpha}(\hat{\bm\beta}_{n_k},\hat{\Sigma}_{n_k})
\longrightarrow
\Psi(\Sigma_\infty).
\]
By the minimizing property of the estimator,
\[
\sum_{i=1}^n V_{n,i}^{\alpha}(\hat{\bm\beta}_n,\hat{\Sigma}_n)
\le
\sum_{i=1}^n V_{n,i}^{\alpha}(\bm\beta,\Sigma_0)
\]
for any $\bm\beta$. Dividing by $n$ and passing to the limit yields
\(
\Psi(\Sigma_\infty)\le \Psi(\Sigma_0).
\)
Since $\Psi$ has the unique minimizer $\Sigma_0$, we conclude that
$\Sigma_\infty=\Sigma_0$. Hence
\(
\hat{\Sigma}_n \xrightarrow{\mathbb P} \Sigma_0 .
\)

\item[\textbf{Step 4.}]
We now establish the consistency of the drift parameter.
From Step~3 we already have $\hat{\Sigma}_n \xrightarrow{\mathbb P} \Sigma_0$.
Consider a subsequence $(\hat{\bm\beta}_{n_k},\hat{\Sigma}_{n_k})$
such that
\(
(\hat{\bm\beta}_{n_k},\hat{\Sigma}_{n_k})
\rightarrow
(\bm\beta_\infty,\Sigma_0).
\) Using the convergence result in Step~2,
\begin{align*}
&\frac{1}{n_k h_{n_k}}
\sum_{i=1}^{n_k}
V_{n_k,i}^{\alpha}(\hat{\bm\beta}_{n_k},\hat{\Sigma}_{n_k})
-
\frac{1}{n_k h_{n_k}}
\sum_{i=1}^{n_k}
V_{n_k,i}^{\alpha}(\bm\beta_0,\hat{\Sigma}_{n_k})
\\
&\hspace{1cm}
\xrightarrow{\mathbb P}
\frac{1}{2}(1+\alpha)^{-d/2}|\Sigma_0|^{-\alpha/2}
\int
D(\bm\beta_\infty)^{\top}\Sigma_0^{-1}D(\bm\beta_\infty)
\,\pi(dx).
\end{align*}

By the minimizing property of the estimator, the left-hand side is
nonpositive, and hence the limit must also be nonpositive.
Since the integrand is nonnegative and vanishes only when
$D(\bm\beta_\infty)=0$, the identifiability condition implies
$\bm\beta_\infty=\bm\beta_0$.
Therefore,
\(
\hat{\bm\beta}_n \xrightarrow{\mathbb P} \bm\beta_0 .
\) Combining the two results yields
\(
(\hat{\bm\beta}_n,\hat{\Sigma}_n)
\xrightarrow{\mathbb P}
(\bm\beta_0,\Sigma_0),
\)
which establishes the consistency of the minimum density power divergence estimator.
\end{enumerate}

Hence, it suffices to verify Steps~1 and~2.
\paragraph{Step 1:}
 We recall that
\[
R_i(\bm\beta)
=
X_{t_i^n}-X_{t_{i-1}^n}-a(X_{t_{i-1}^n},\bm\beta)h_n .
\]
Using the Euler expansion at the true parameter value,
\[
X_{t_i^n}-X_{t_{i-1}^n}
=
a(X_{t_{i-1}^n},{\bm\beta_0})h_n
+
\Lambda_0\sqrt{h_n}\,Z_{n,i}
+
\Delta_{n,i},
\]
where
\[
\Delta_{n,i}
=
\int_{t_{i-1}^n}^{t_i^n}
\Big\{
a(X_s,{\bm\beta_0})-a(X_{t_{i-1}^n},{\bm\beta_0})
\Big\}\,ds .
\]

Hence, owing to equation~\eqref{eqn:R_i_D_i}, we get,
\begin{align}
Q_i(\bm\beta) := \frac{1}{h_n}
R_i(\bm\beta)^\top \Sigma^{-1} R_i(\bm\beta)
&=
Z_{n,i}^\top
\Lambda_0 \Sigma^{-1} \Lambda_0
Z_{n,i}
+2\sqrt{h_n}\,
Z_{n,i}^\top
\Lambda_0 \Sigma^{-1}
D_i(\bm\beta)
+\frac{2}{\sqrt{h_n}}\,
Z_{n,i}^\top
\Lambda_0 \Sigma^{-1}
\Delta_{n,i}
\nonumber\\
&\quad
+h_n\,
D_i(\bm\beta)^\top
\Sigma^{-1}
D_i(\bm\beta)
+2\,
D_i(\bm\beta)^\top
\Sigma^{-1}
\Delta_{n,i}
+\frac{1}{h_n}\,
\Delta_{n,i}^\top
\Sigma^{-1}
\Delta_{n,i}.
\label{eq:quad_expansion}
\end{align}
Define, \begin{align*}
    K_i({\bm\theta}):=K_i(\bm\beta,\Sigma) &= \frac{\alpha}{2}
Q_i(\bm\beta)-\frac{\alpha}{2}Z_{n,i}^\top
\Lambda_0 \Sigma^{-1} \Lambda_0Z_{n,i}\\
&= \alpha\sqrt{h_n}\,
Z_{n,i}^\top
\Lambda_0 \Sigma^{-1}
D_i(\bm\beta) +\frac{\alpha}{\sqrt{h_n}}\,
Z_{n,i}^\top
\Lambda_0 \Sigma^{-1}
\Delta_{n,i}
+\frac{\alpha}{2}h_n\,
D_i(\bm\beta)^\top
\Sigma^{-1}
D_i(\bm\beta)
\\&\qquad\qquad+\alpha\,
D_i(\bm\beta)^\top
\Sigma^{-1}
\Delta_{n,i}
+\frac{\alpha}{2h_n}\,
\Delta_{n,i}^\top
\Sigma^{-1}
\Delta_{n,i}.
\end{align*} Now note that from Lemma \ref{lem:poly-negligible-mv} each of the part of $K_i({\bm\theta})$ has probability convergence to 0 and hence $$\sup_{{\bm\theta}}\max_{1\le i \le n}{|K_i({\bm\theta})|} = o_P(h_n^{r})\;,0<r<\frac{1}{2}$$ then \begin{align*}
    &\Bigg|\frac{1}{n}\sum_{i=1}^nV_{n,i}^\alpha({\bm\theta})- \frac{1}{(1+\alpha)^{d/2}}
    |\Sigma|^{-\alpha/2}
    +\frac{1}{n}\sum_{i=1}^n
    \left(1+\frac{1}{\alpha}\right)
    |\Sigma|^{-\alpha/2}
    \exp\!\left(
    -\frac{\alpha}{2}
    Z_{n,i}^\top\Lambda_0
    \Sigma^{-1}\Lambda_0
    Z_{n,i}
    \right)\Bigg|\\
    &\le \fst{1+\frac{1}{\alpha}} |\Sigma|^{-\alpha/2}\fst{\exp\fst{\sup_{{\bm\theta}}\max_{1\le i\le n}|K_i({\bm\theta})|}-1}\\
    & \le C_{\alpha,1}\fst{\exp\fst{\sup_{{\bm\theta}}\max_{1\le i\le n}|K_i({\bm\theta})|}-1}\xrightarrow{P}0.
\end{align*}

Hence using compactness of $\Theta$ and Lemma \ref{lem:gaussian_tilt_multivariate} Step 1 follows.

\paragraph{Step 2:} 
 Using taylor expansion in both the series we get,\begin{align*}
     \begin{aligned}
\exp\fst{-K_i\left({\bm\beta_0}
,\Sigma\right)}-\exp\fst{-K_i(\bm\beta,\Sigma)}
=&\alpha \sqrt{h_n} Z_{n, i}^{\top} \Lambda_0^{\top} \Sigma^{-1} D_i(\bm\beta)+ h_n\frac{\alpha}{2}{D_i(\bm\beta)^\top\Sigma^{-1}D_i(\bm\beta)} + \alpha D_i(\bm\beta)^{\top} \Sigma^{-1} \Delta_{n, i}\\
& -\frac{1}{2}\fst{\alpha^2 h_n\left(Z_{n, i}^{\top} \Lambda_0^{\top} \Sigma^{-1} D_i(\bm\beta)\right)^2+K_i(\bm\beta,\Sigma)^2-\alpha^2 h_n\left(Z_{n, i}^{\top} \Lambda_0^{\top} \Sigma^{-1} D_i(\bm\beta)\right)^2}\\
& +\frac{K_i\left({\bm\beta}_0, \Sigma\right)^2}{2!} \exp\fst{\zeta_{1, i}}+\frac{K_i({\bm\beta}, \Sigma)^3}{3!} \exp\fst{\zeta_2, i}
\end{aligned}
 \end{align*}
where, $\left|\zeta_{1, i}\right| \leq\left|K_i\left({\bm\beta_0}, \Sigma\right)\right|,\left|\zeta_{2, i}\right| \leq\left|K_i(\bm\beta, \Sigma)\right|$ and, let,\begin{align*}
     H_{n,i}(\bm\beta,\Sigma) = \alpha D_i(\bm\beta)^{\top} \Sigma^{-1} \Delta_{n, i} - \frac{1}{2}\secnd{K_i(\bm\beta,\Sigma)^2-\alpha^2 h_n\left(Z_{n, i}^{\top} \Lambda_0^{\top} \Sigma^{-1} D_i(\bm\beta)\right)^2}
 \end{align*}
The first term in $H_{n,i}(\bm\beta,\Sigma)$ is of order $o_{\mathbb{P}}(h_n^{r_1})$ for any $0 \le r_1 < 1.5$, by Lemma~\ref{lem:delta-moment-mv}. 
For the second term, note that upon squaring $K_i(\bm\beta,\Sigma)$, the leading $h_n$ terms cancel out, leaving a remainder of order $h_n^{3/2}$. 
Consequently, this term is also of order $o_{\mathbb{P}}(h_n^{r_2})$ for any $0 \le r_2 < 1.5$.

Furthermore, by the polynomial growth condition and the fact that
$K_i({\bm\beta_0},\Sigma) = o_{\mathbb{P}}(h_n^{r})$ for some $0 \le r < 1$, 
an application of Lemma~\ref{lem:poly-negligible-mv} yields

\begin{align*}
    \begin{aligned}
\sup _\eta \max _{1 \leq i \leq n}\left|H_{n,i}({\bm\beta}, \Sigma)\right| & =o_P\left(h_n^{r_1}\right), \quad 0\le r_1<1.5,\\
\sup _\eta \max _{1 \leq i \leq n}\left|K_i({\bm\beta}, \Sigma)^3\right| & =o_P\left(h_n^{r_2}\right),\quad 0\le r_2<1.5, \\
\sup _\eta \max _{1 \leq i \leq n}\left|K_i\left({\bm\beta}_0, \Sigma\right)^2\right| & =o_P\left(h_n^{r_3}\right), \quad 0\le r_3<2 
\end{aligned}.
\end{align*}

Consequently using Lemma \ref{lem:gaussian_tilt_multivariate} and \ref{lem:linear_form_uniform}, 
\begin{align*}
&\frac{1}{n h_n} \sum_{i=1}^n V_{n, i}^\alpha({\bm\beta}, \Sigma)-\frac{1}{n h_n} \sum_{i=1}^n V_{n, i}^\alpha\left({\bm\beta}_0, \Sigma\right)\\=
&
\frac{1+\alpha}{2}|\Sigma|^{-\frac{\alpha}{2}} \frac{1}{n}  \sum_{i=1}^n D_i(\bm\beta)^{\top}\left[\Sigma^{-1}-\alpha \Sigma^{-1} \Sigma_0 Z_{n, i} Z_{n, i}^{\top} \Sigma_0 \Sigma^{-1}\right] D_i(\bm\beta) 
% \\
% & 
\times \exp \left(-\frac{\alpha}{2} Z_{n, i}^{\top} \Sigma_0^{\top} \Sigma^{-1} \Sigma_0 Z_{n, i}\right) + o(1)\\
\xrightarrow{P} & \frac{1+\alpha}{2}|\Sigma|^{-\frac{\alpha}{2}}\left|I+\alpha \Sigma_0^{\top} \Sigma^{-1} \Sigma_0\right|^{-1 / 2} \int\fst{D(\bm\beta)^{\top}\Sigma^{-1}D(\bm\beta)- \alpha D(\bm\beta)^{\top} \Sigma^{-1} \Sigma_0\left(I+\alpha \Sigma_0^{\top} \Sigma^{-1} \Sigma_0\right)^{-1} \Sigma_0^{\top} \Sigma^{-1} D(\bm\beta) }\pi(dx).
\end{align*}
Thus Step~2 is proved. 

Combining the results of Steps~1--4 completes the proof.
\end{proof}

\subsubsection{Proof of Normality}
In this part we establish the asymptotic normality of the proposed estimators.
The argument is notationally involved; therefore, we decompose the proof into
manageable steps. First of all define $$a_{i-1}(\bm \beta_0):= a(X_{t_{i-1}^n},{\bm\beta_0}) $$ and recall that the estimator $\hat{{\bm\theta}}_n^\alpha$ satisfies the estimating equation
\[
\frac{\partial l_n^\alpha(\hat{{\bm\theta}}_n^\alpha)}{\partial{{\bm\theta}}} =0,
\]
where
\[
l_n^\alpha({\bm\theta})=\sum_{i=1}^n V_{n,i}^\alpha({\bm\theta}),
\]
and $V_{n,i}^\alpha({\bm\theta})$ is defined in~\eqref{eq:mdpd_multivariate}.
When both the drift and diffusion parameters are unknown, the parameter space is
$\Theta$. If the diffusion matrix $\Sigma$ is known, the parameter space reduces to
$\Theta_{\bm\beta}$.
Throughout, when considering parameters of $\Sigma$, we work with the vector
$\vech(\Sigma)$, which contains the $d(d+1)/2$ distinct elements of the symmetric
matrix $\Sigma$.

To derive the asymptotic distribution of $\hat{\bm\theta}_n^\alpha$, we consider a
Taylor expansion of the estimating equation around the true parameter
${\bm\theta}_0^\top=({\bm\beta_0}^\top,\vech(\Sigma_0)^\top)$.
This yields
\begin{align}
0
=
-L_n^\alpha({\bm\theta}_0)
+
\int_0^1
C_n^\alpha\!\left(
{\bm\theta}_0+u(\hat{\bm\theta}_n^\alpha-{\bm\theta}_0)
\right)\,du
\binom{
\sqrt{nh_n}(\hat{\bm\beta}_n^\alpha-{\bm\beta_0})
}{
\sqrt{n}\left(\vech(\hat\Sigma_n^\alpha)-\vech(\Sigma_0)\right)
}.
\label{eq:taylor_expansion}
\end{align}

Here,
\begin{align*}
L_n^\alpha({\bm\theta}_1)
=
\begin{pmatrix}
-\dfrac{1}{\sqrt{n h_n}}
\dfrac{\partial\, l_n^\alpha({\bm\theta}_1)}
{\partial {\bm\beta}}\\[8pt]
-\dfrac{1}{\sqrt{n}}
\dfrac{\partial\, l_n^\alpha({\bm\theta}_1)}
{\partial \vech(\Sigma)}
\end{pmatrix},\qquad
% \\[8pt]
C_n^\alpha({\bm\theta}_1)
=
\begin{pmatrix}
\dfrac{1}{n h_n}
\dfrac{\partial^2\, l_n^\alpha({\bm\theta}_1)}
{\partial {\bm\beta}\,\partial {\bm\beta}^{\top}}
&
\dfrac{1}{n\sqrt{h_n}}
\dfrac{\partial^2\, l_n^\alpha({\bm\theta}_1)}
{\partial {\bm\beta}\,\partial \vech(\Sigma)^{\top}}
\\[10pt]
\dfrac{1}{n\sqrt{h_n}}
\dfrac{\partial^2\, l_n^\alpha({\bm\theta}_1)}
{\partial \vech(\Sigma)\,\partial {\bm\beta}^{\top}}
&
\dfrac{1}{n}
\dfrac{\partial^2\, l_n^\alpha({\bm\theta}_1)}
{\partial \vech(\Sigma)\,\partial \vech(\Sigma)^{\top}}
\end{pmatrix}.
\end{align*}

% \begin{align*}
% L_n^\alpha({\bm\theta}_1)
% &=
% \binom{
% -\frac{1}{\sqrt{nh_n}}\frac{\partial l_n^\alpha({\bm\theta}_1)}{\partial{\bm\beta}} 
% }{
% -\frac{1}{\sqrt{n}}\frac{\partial l_n^\alpha({\bm\theta}_1)}{\partial{\vech(\Sigma)} }
% },
% \\[6pt]
% C_n^\alpha({\bm\theta}_1)
% &=
% \begin{pmatrix}
% \frac{1}{nh_n}\partial^2_{{\bm\beta}{\bm\beta}^\top}l_n^\alpha({\bm\theta}_1)
% &
% \frac{1}{n\sqrt{h_n}}\partial^2_{{\bm\beta}\,\vech(\Sigma)^\top}l_n^\alpha({\bm\theta}_1)
% \\[6pt]
% \frac{1}{n\sqrt{h_n}}\partial^2_{\vech(\Sigma)\,{\bm\beta}^\top}l_n^\alpha({\bm\theta}_1)
% &
% \frac{1}{n}\partial^2_{\vech(\Sigma)\,\vech(\Sigma)^\top}l_n^\alpha({\bm\theta}_1)
% \end{pmatrix}.
% \end{align*}

As in the standard theory of $M$-estimators, the proof reduces to the analysis of
first- and second-order derivatives of the objective contributions
$V_{n,i}^\alpha$. In particular, we require control of
\[
\frac{\partial}{\partial\beta_j}V_{n,i}^\alpha,
\quad
\frac{\partial^2}{\partial\beta_j\partial\beta_k}V_{n,i}^\alpha,
\quad
\frac{\partial}{\partial\sigma_{rs}}V_{n,i}^\alpha,
\quad
\frac{\partial^2}{\partial\sigma_{rs}\partial\sigma_{kl}}V_{n,i}^\alpha,
\quad
\frac{\partial^2}{\partial\beta_{j}\partial\sigma_{kl}}V_{n,i}^\alpha.
\]
Explicit expressions for these derivatives are provided in
Supplementary~\ref{App:Deriv_Expect}; see
\eqref{eqn:1st_deriv_beta}, \eqref{eqn:1st_deriv_sigma},
\eqref{eqn:2nd_deriv_beta}, \eqref{eqn:2nd_deriv_sigma},
and \eqref{eqn:1st_deriv_cross}.

For technical convenience, we introduce the auxiliary quantity
\begin{align*}
U_{n,i}^\alpha
=&\,
\frac{1}{(1+\alpha)^{d/2}}|\Sigma|^{-\alpha/2}
\\
&\quad
-
\left(1+\frac{1}{\alpha}\right)
|\Sigma|^{-\alpha/2}
\exp\!\left(
-\frac{\alpha}{2h_n}
\left(
D_i(\bm\beta)+\sqrt{h_n}\Lambda_0Z_{n,i}
\right)^\top
\Sigma^{-1}
\left(
D_i(\bm\beta)+\sqrt{h_n}\Lambda_0Z_{n,i}
\right)
\right).
\end{align*}

The essential distinction between $V_{n,i}^\alpha$ and $U_{n,i}^\alpha$ is that
$V_{n,i}^\alpha$ contains the discretization error term $\Delta_{n,i}$, whereas
$U_{n,i}^\alpha$ does not.
Consequently, $U_{n,i}^\alpha$ behaves as if it were generated from a discrete-time
Gaussian model without discretization bias.
A key step in establishing asymptotic normality is to show that the first
derivative of $U_{n,i}^\alpha$ evaluated at ${\bm\theta}_0$ admits the same limiting
distribution as the corresponding derivative of $V_{n,i}^\alpha$ evaluated at
${\bm\theta}_0$.
The proof proceeds through the following steps.

% \begin{itemize}

% \item[\textbf{Step 1.}]
% Using Lemma~\ref{lem:u_v_deriv_conver_p}, we show that the first derivative of
% $U_{n,i}^\alpha$ evaluated at ${\bm\theta}_0$ converges in distribution.
% As a consequence, the score-type term $L_n^\alpha({\bm\theta}_0)$ admits a
% distributional limit for which Lemma~\ref{lem:poly-negligible-mv},~\ref{lem:gaussian_tilt_multivariate},~\ref{lem:quad_form_mean} and~\ref{lem:u_v_deriv_conver_p}.

% \item[\textbf{Step 2.}]
% We next prove that
% \begin{align}
% \int_0^1
% \left\{
% C_n^\alpha\!\left(
% {\bm\theta}_0+u(\hat{\bm\theta}_n^\alpha-{\bm\theta}_0)
% \right)
% -
% C_n^\alpha({\bm\theta}_0)
% \right\}
% \,du\le \sup_{|{\bm\theta}_1|\le \epsilon_n}|C_n^\alpha({\bm\theta}_0 +{\bm\theta}_1)-C_n^\alpha({\bm\theta}_0)|
% \;\xrightarrow{\mathbb{P}}\;0,\label{eqn:C_n^alpha}
% \end{align}
%  where $\{\varepsilon_n\}$  is any positive real sequence decaying to  0.
% This allows us to establish the convergence in probability of
% $C_n^\alpha({\bm\theta}_0)$ by using Lemma~\ref{lem:poly-negligible-mv} and \ref{lem:linear_form_uniform}.

% \item[\textbf{Step 3.}]
% Combining the results of Steps~1 and~2 with the expansion
% \eqref{eq:taylor_expansion}, we obtain the joint distributional convergence
% \[
% \binom{
% \sqrt{nh_n}(\hat{\bm\beta}_n^\alpha-{\bm\beta_0})
% }{
% \sqrt{n}\left(\vech(\hat\Sigma_n^\alpha)-\vech(\Sigma_0)\right)
% }
% \;\xrightarrow{\mathcal{L}}\;
% \mathcal{N}(0,\;\cdot).
% \]

% \end{itemize}

\begin{enumerate}

\item[\textbf{Step 1.}]
We first establish the asymptotic behavior of the score-type term evaluated at the
true parameter value. Using
Lemmas~\ref{lem:poly-negligible-mv},
\ref{lem:gaussian_tilt_multivariate},
\ref{lem:quad_form_mean}, and
\ref{lem:u_v_deriv_conver_p},
we show that the first derivative of $U_{n,i}^\alpha$ at ${\bm\theta}_0$ converges
in distribution. Consequently, the normalized score
$L_n^\alpha({\bm\theta}_0)$ admits a well-defined distributional limit.

\item[\textbf{Step 2.}]
We next control the remainder term arising from the Taylor expansion of the
estimating equation. Specifically, for any positive sequence
$\{\varepsilon_n\}$ such that $\varepsilon_n \to 0$, we show that
\begin{align}
\int_0^1
\Bigl\{
C_n^\alpha\!\left(
{\bm\theta}_0
+
u(\hat{\bm\theta}_n^\alpha-{\bm\theta}_0)
\right)
-
C_n^\alpha({\bm\theta}_0)
\Bigr\}
\,du
&\le
\sup_{\lvert {\bm\theta}_1 \rvert \le \varepsilon_n}
\bigl|
C_n^\alpha({\bm\theta}_0+{\bm\theta}_1)
-
C_n^\alpha({\bm\theta}_0)
\bigr|\xrightarrow{\mathbb{P}} 0.
\label{eqn:C_n^alpha}
\end{align}
This implies that $C_n^\alpha({\bm\theta}_0)$ converges in probability, again
using Lemmas~\ref{lem:poly-negligible-mv} and
\ref{lem:linear_form_uniform}.

\item[\textbf{Step 3.}]
Combining the results of Steps~1 and~2 with the Taylor expansion
\eqref{eq:taylor_expansion}, we obtain the joint asymptotic distribution
\[
\binom{
\sqrt{n h_n}\bigl(\hat{\bm\beta}_n^\alpha-{\bm\beta}_0\bigr)
}{
\sqrt{n}
\Bigl(
\vech(\hat{\Sigma}_n^\alpha)
-
\vech(\Sigma_0)
\Bigr)
}
\;\xrightarrow{\mathcal{L}}\;
\mathcal{N}\!\left(0,\;\cdot\right),
\]
where the limiting covariance matrix is given by the inverse of the corresponding
information-type matrix.

\end{enumerate}

The first step involves convergence in distribution and therefore requires a
vector-valued formulation. Once this is established, the remaining steps rely
on convergence in probability. These can be verified componentwise and hence
imply matrix convergence with respect to the Frobenius norm. Accordingly, we
first establish Steps~1 and~2 below; Step~3 then follows directly.

% \subsubsection{Proof of Theorem~\ref{thm:beta-clt}}

\begin{proof}[{Proof of Theorem~\ref{thm:beta-clt}}] 
~\\
\noindent\textbf{Step 1.}
From equation~\eqref{eqn:1st_deriv_true_u_beta}, we may write
\begin{align}
    \frac{1}{\sqrt{n h_n}}
\sum_{i=1}^n
\frac{\partial U_{n,i}^\alpha}{\partial \bm\beta}
=
\frac{1}{\sqrt{n h_n}}
\sum_{i=1}^n
\left.
\Bigl(
\frac{\partial U_{n,i}^\alpha}{\partial \beta_1},
\ldots,
\frac{\partial U_{n,i}^\alpha}{\partial \beta_p}
\Bigr)
\right|_{({\bm\beta_0},\Sigma_0)}^{\!\top}.\label{eqn:thm_2_deriv_beta}
\end{align}

By construction, each summand has conditional mean zero given
$\mathcal F_{t_{i-1}^n}$, and therefore the entire sum has expectation zero.

The covariance matrix of this vector is given by
\[
\frac{1}{n h_n}
\sum_{i=1}^n
\mathbb E_0
\left[
\frac{\partial U_{n,i}^\alpha}{\partial \bm\beta}
\frac{\partial U_{n,i}^\alpha}{\partial \bm\beta^\top}
\right].
\]
Using Lemmas~\ref{lem:gaussian_tilt_multivariate}
and~\ref{lem:quad_form_mean}, this matrix converges in probability to
\[
\frac{(1+\alpha)^2}{(1+2\alpha)^{d/2+1}}
|\Sigma_0|^{-\alpha}\,
\mathcal B,
\]
where $\mathcal B$ is defined in Assumption~\ref{assm:all_assumption}~(A7).

Indeed, for $u,v \in \{1,\ldots,p\}$, conditioning on
$\mathcal F_{t_{i-1}^n}$ yields
\begin{align}
\mathbb E_0\!\left[
\left.
\frac{\partial U_{n,i}^\alpha}{\partial \beta_u}
\frac{\partial U_{n,i}^\alpha}{\partial \beta_v}
\,\right|\,
\mathcal F_{t_{i-1}^n}
\right]
&=
(1+\alpha)^2 h_n
|\Sigma_0|^{-\alpha}
\left({\frac{\partial a_{i-1}({\bm\beta_0})}{\partial\beta_u}} \right)^\top
\Lambda_0^{-\top}\nonumber
\\
&\quad\times
\mathbb E_0\!\left[
Z_{n,i} Z_{n,i}^\top
\exp\!\left(-\alpha Z_{n,i}^\top Z_{n,i}\right)
\right]
\Lambda_0^{-1}
\left({\frac{a_{i-1}({\bm\beta_0})}{\partial\beta_v}} \right),\label{eqn:thm_2_expectation}
\end{align}
where $Z_{n,i}$ denotes the standardized Gaussian increment. Application of
Lemma~\ref{lem:quad_form_mean} yields the stated limit.

Moreover, the Lindeberg condition holds since
\[
\frac{1}{n^2 h_n^2}
\sum_{i=1}^n
\mathbb E_0
\!\left[
\left.
\biggl\|
\frac{\partial U_{n,i}^\alpha}{\partial \bm\beta}
\biggr\|^4
\,\right|\,
\mathcal F_{t_{i-1}^n}
\right]
\;\longrightarrow\; 0.
\]
Consequently, the asymptotic normality of $\frac{1}{\sqrt{nh_n}}\sum_{i=1}^n \frac{\partial U_{n,i}^\alpha}{\partial \bm\beta}$ is established.

\medskip
\noindent\textbf{Step 2.}
We next analyze the second-order derivatives of $V_{n,i}^\alpha$ and establish
their convergence in probability. Since convergence in probability can be verified
componentwise, it suffices to study elementwise limits, which in turn imply matrix
convergence in the Frobenius norm.

From~\eqref{eqn:2nd_deriv_beta}, we can write
\begin{align*}
\frac{\partial^2 V_{n,i}^\alpha}{\partial \beta_u \partial \beta_v}
=
\left(1+\frac{1}{\alpha}\right)
|\Sigma|^{-\alpha/2}
\exp\!\left(-K_i(\bm{\theta})\right)
\exp\!\left(
-\frac{\alpha}{2}
Z_{n,i}^\top \Lambda_0 \Sigma^{-1} \Lambda_0 Z_{n,i}
\right)
\left(J_{0,i}+J_{1,i}+J_{2,i}\right),
\end{align*}
where $J_{0,i}$, $J_{1,i}$, and $J_{2,i}$ are defined in
\eqref{eqn:J_0}, \eqref{eqn:J_1}, and \eqref{eqn:J_2}, respectively.
Applying a Taylor expansion of $\exp\{-K_i(\bm\theta)\}$ around zero and
retaining terms up to order $\sqrt{h_n}$ yields
\begin{align}
\frac{\partial^2 V_{n,i}^\alpha}{\partial \beta_u \partial \beta_v}
&=
\left(1+\frac{1}{\alpha}\right)
|\Sigma|^{-\alpha/2}
\exp\!\left(
-\frac{\alpha}{2}
Z_{n,i}^{\top}\Lambda_0 \Sigma^{-1}\Lambda_0 Z_{n,i}
\right)
\nonumber\\
&\quad\times
\Bigl[
1
- K_i(\bm\theta)
+\frac{1}{2}K_i(\bm\theta)^2 e^{\Upsilon_i}
\Bigr]
\left(J_{0,i}+J_{1,i}+J_{2,i}\right),
\label{eq:taylor_second_beta}
\end{align}
where $|\Upsilon_i|\le |K_i(\bm\theta)|$.

Consequently, the leading contribution can be written as
\begin{align}
\frac{\partial^2 V_{n,i}^\alpha}{\partial \beta_u \partial \beta_v}
&=
\left(1+\frac{1}{\alpha}\right)
|\Sigma|^{-\alpha/2}
\exp\!\left(
-\frac{\alpha}{2}
Z_{n,i}^{\top}\Lambda_0 \Sigma^{-1}\Lambda_0 Z_{n,i}
\right)
\nonumber\\
&\quad\times
\Bigl\{
J_{1,i}
+ J_{0,i}
-\alpha \sqrt{h_n}\,
J_{0,i}
Z_{n,i}^{\top}
\Lambda_0 \Sigma^{-1}
D_i(\bm\beta)
\Bigr\}
+
\mathscr R_{\ref{thm:beta-clt},i},
\label{eq:leading_second_beta}
\end{align}
where the remainder term is given by
\begin{align*}
\mathscr R_{\ref{thm:beta-clt},i}
&=
\left(1+\frac{1}{\alpha}\right)
|\Sigma|^{-\alpha/2}
\exp\!\left(
-\frac{\alpha}{2}
Z_{n,i}^{\top}\Lambda_0 \Sigma^{-1}\Lambda_0 Z_{n,i}
\right)
\\
&\quad\times
\Bigl(
\alpha \sqrt{h_n}\,
Z_{n,i}^{\top}\Lambda_0 \Sigma^{-1} D_i(\bm\beta)
- K_i(\bm\theta)
+\tfrac12 K_i(\bm\theta)^2 e^{\Upsilon_i}
\Bigr)
\left(J_{0,i}+J_{1,i}\right)
\\
&\quad+
\left(1+\frac{1}{\alpha}\right)
|\Sigma|^{-\alpha/2}
\exp\!\left(
-\frac{\alpha}{2}
Z_{n,i}^{\top}\Lambda_0 \Sigma^{-1}\Lambda_0 Z_{n,i}
\right)
e^{-K_i(\bm\theta)} J_{2,i}.
\end{align*}

Using the inequality $e^{-x}\le 1+x$ for $x\ge0$, together with the order
properties of $J_{0,i}$ and $K_i(\bm\theta)$, and applying
Lemmas~\ref{lem:poly-negligible-mv} and
\ref{lem:linear_form_uniform}, we obtain
\[
\sup_{\bm\theta\in\Theta}
\max_{1\le i\le n}
\bigl|\mathscr R_{\ref{thm:beta-clt},i}\bigr|
=
o_P(h_n^{\,r}),
\qquad 0\le r<\tfrac32.
\]
Consequently,
\[
\frac{1}{n h_n}\sum_{i=1}^n
\mathscr R_{\ref{thm:beta-clt},i}
\xrightarrow{P}0,
\qquad
\frac{1}{n h_n}\sum_{i=1}^n J_{0,i}
\xrightarrow{P}0,
\]
uniformly in $\bm\theta\in\Theta$. Hence,
\begin{align*}
&\frac{1}{n h_n}
\sum_{i=1}^n
\frac{\partial^2 V_{n,i}^\alpha}{\partial \beta_u \partial \beta_v}
\\\xrightarrow{P}&
\left(1+\frac{1}{\alpha}\right)
|\Sigma|^{-\alpha/2}
\left|
I + \alpha \Lambda_0 \Sigma^{-1} \Lambda_0
\right|^{-1/2}
% \\
% &
\int
\Bigg[
-\alpha
\left(\frac{\partial^2 a(x,\bm{\beta}_0)}{\partial{\beta_u}\partial \beta_v} \right)^{\top}
\Sigma^{-1}
D(\bm{\beta})
% \\[0.3em]
% &
+\alpha
\left({\frac{\partial a(x,\bm{\beta}_0)}{\partial\beta_u}} \right)^{\top}
\Sigma^{-1}
\left({\frac{\partial a(x,\bm{\beta}_0)}{\partial\beta_v}} \right)
\\[0.3em]
&\qquad
-\alpha^2
\left({\frac{\partial a(x,\bm{\beta}_0)}{\partial\beta_u}} \right)^{\top}
\Sigma^{-1}
\Lambda_0
\left(
I + \alpha \Lambda_0 \Sigma^{-1} \Lambda_0
\right)^{-1}
\Lambda_0^{\top}
\Sigma^{-1}
\left({\frac{\partial a(x,\bm{\beta}_0)}{\partial\beta_v}} \right)
\\[0.3em]
&\qquad
+\alpha^2
\left(\frac{\partial^2 a(x,\bm{\beta}_0)}{\partial{\beta_u}\partial \beta_v} \right)^{\top}
\Sigma^{-1}
\Lambda_0
\left(
I + \alpha \Lambda_0 \Sigma^{-1} \Lambda_0
\right)^{-1}
\Lambda_0
\Sigma^{-1}
D(\bm{\beta})
\Bigg]
\,\pi(dx),
\end{align*}
uniformly in $\bm{\theta}$. Evaluating the above limit at
$\bm{\theta}=\bm{\theta}_0$, it simplifies to
\[
(1+\alpha)^{-d/2}
|\Sigma_0|^{-\alpha/2}
\int
\left({\frac{\partial a(x,\bm{\beta}_0)}{\partial\beta_u}} \right)^{\top}
\Sigma_0^{-1}
\left({\frac{\partial a(x,\bm{\beta}_0)}{\partial\beta_v}} \right)
\,\pi(dx).
\]

Since the limiting expression is continuous in the parameters, condition
\eqref{eqn:C_n^alpha} follows. This completes the proof of the asymptotic
normality of $\hat{\bm{\beta}}_n^\alpha$, with asymptotic variance given in
\eqref{eqn:thm_2}.

\end{proof}

% \subsubsection
\begin{proof}[{Proof of Theorem \ref{thm:asymp-sigma}}]
~\\
\textbf{Step 1.}
Following the same strategy as in the previous proofs, we proceed analogously.
First, from the first derivative of $U_{n,i}^\alpha$, we obtain
\begin{align}\label{eqn:vech_deriv_u}
\frac{\partial U_{n,i}^\alpha}{\partial \vech(\Sigma)}
=
\Big(
\frac{\partial U_{n,i}^\alpha}{\partial \sigma_{11}},
\frac{\partial U_{n,i}^\alpha}{\partial \sigma_{21}},
\frac{\partial U_{n,i}^\alpha}{\partial \sigma_{31}},
\ldots,
\frac{\partial U_{n,i}^\alpha}{\partial \sigma_{d1}},
\frac{\partial U_{n,i}^\alpha}{\partial \sigma_{22}},
\ldots,
\frac{\partial U_{n,i}^\alpha}{\partial \sigma_{d2}},
\ldots,
\frac{\partial U_{n,i}^\alpha}{\partial \sigma_{(d-1)(d-1)}},
\frac{\partial U_{n,i}^\alpha}{\partial \sigma_{(d-1)d}},
\frac{\partial U_{n,i}^\alpha}{\partial \sigma_{dd}}
\Big)^\top .
\end{align}

At ${\bm\theta}_0 = c({\bm\beta_0},\vech(\Sigma_0))$, each component in
\eqref{eqn:vech_deriv_u} is explicitly given by
\eqref{eqn:1st_deriv_true_u_sigma}. A direct calculation shows that the
expectation of each component is zero. Consequently, the matrix
\begin{equation}\label{eqn:step_2_u_sigma_clt}
\frac{1}{n}\sum_{i=1}^n
\frac{\partial U_{n,i}^\alpha}{\partial \vech(\Sigma)}
\frac{\partial U_{n,i}^\alpha}{\partial \vech(\Sigma)^\top}
\end{equation}
converges in probability to a deterministic matrix $\Xi$, whose elements
are obtained by taking expectations in
\eqref{eqn:1st_deriv_true_u_sigma_sqr} using arguments identical to those
employed previously.

More precisely, for $(kl,rs)$ indexing the entries of $\Xi$, we have
\begin{align*}
((\Xi))_{kl,rs}
&=
\frac{(1+\alpha)^2}{2(1+2\alpha)^{d/2+2}}
|\Sigma_0|^{-\alpha}
\operatorname{tr}\!\left(
\Sigma_0^{-1}S_{kl}\Sigma_0^{-1}S_{rs}
\right)
% \\
% &\quad
+
\frac{\alpha^2(1+\alpha)^2}{(1+2\alpha)^{d/2+2}}
|\Sigma_0|^{-\alpha}
\operatorname{tr}(\Sigma_0^{-1}S_{kl})
\operatorname{tr}(\Sigma_0^{-1}S_{rs})
\\
&\qquad-
\frac{\alpha^2}{4(1+\alpha)^d}
|\Sigma_0|^{-\alpha}
\operatorname{tr}(\Sigma_0^{-1}S_{kl})
\operatorname{tr}(\Sigma_0^{-1}S_{rs}).
\end{align*}

Therefore, by the multivariate central limit theorem, the quantity in
\eqref{eqn:step_2_u_sigma_clt} is asymptotically normal, which establishes
Step~1 of the proof.

\textbf{Step 2:} As done for CLT of $\bm \beta$ step 2, here also to prove step 2 for $\Sigma$ we will show the convergence of the matrix element wise and the matrix convergence in probability by frobenious norm. Hence from the equation~\eqref{eqn:2nd_deriv_sigma} can be rewritten as 
\begin{align}\begin{aligned}\label{eqn:V_double_thm_2_step_2}
 \frac{\partial^2}{\partial\sigma_{rs}\partial\sigma_{kl}}V_{n,i}^\alpha
 =
 \mathcal{J}_{0,i}'+ \mathcal{J}_{1,i}'+\mathcal{J}_{2,i}'+\mathcal{J}_{3,i}' +\mathcal{J}_{4,i}',
\end{aligned}
\end{align}
where $\mathcal{J}_{m,i}'$ where $m\in \{0,1,2,3,4\}$ is defined in~\eqref{eqn:all_J_sigma} to \eqref{eqn:J4i-2}. Using, taylor series expansion of $\mathcal{C}_i$, of the following form, 
\begin{align*}
    &\left(1+\frac{1}{\alpha}\right)
|\Sigma|^{-\alpha/2}
\exp\!\left(-K_i({\bm\theta})\right)
\exp\!\left(-\frac{\alpha}{2}Z_{n,i}^\top \Lambda_0\Sigma^{-1}\Lambda_0Z_{n,i}\right)
\\
=&
\left(1+\frac{1}{\alpha}\right)
|\Sigma|^{-\alpha/2}
\fst{1 - K_i({\bm\theta}) \exp\fst{\upsilon_i}}
\exp\!\left(-\frac{\alpha}{2}Z_{n,i}^\top \Lambda_0\Sigma^{-1}\Lambda_0Z_{n,i}\right),
\end{align*}
where, $|\upsilon_i|\le |K_i({\bm\theta})|$. Hence using above expansion, we can write~\eqref{eqn:V_double_thm_2_step_2} structure as, 
\begin{align}
    \begin{aligned}
        &\mathcal{J}_{0,i}'+ \mathcal{J}_{1,i}'+\mathcal{J}_{2,i}'+\mathcal{J}_{3,i}' +\mathcal{J}_{4,i}'\\
        =&
        \mathcal{J}_{0,i}^0
        - \frac{\alpha^2}{4}\left(1+\frac{1}{\alpha}\right)
        |\Sigma|^{-\alpha/2}
        \exp\!\left(-\frac{\alpha}{2}Z_{n,i}^\top \Lambda_0\Sigma^{-1}\Lambda_0Z_{n,i}\right)
        \mathcal{J}_{0,i}^1
        \\
        &\qquad\qquad
        + \frac{\alpha}{2}\left(1+\frac{1}{\alpha}\right)
        |\Sigma|^{-\alpha/2}
        \exp\!\left(-\frac{\alpha}{2}Z_{n,i}^\top \Lambda_0\Sigma^{-1}\Lambda_0Z_{n,i}\right)
        \mathcal{J}_{0,i}^2
        \\
        &\qquad\qquad
        + \mathscr{R}_{\ref{thm:asymp-sigma},i}
\end{aligned}\label{eqn:expansion_sigma_prob_conv}
\end{align}
where $\mathscr{R}_{\ref{thm:asymp-sigma},i}$ is defined by the equation~\eqref{eqn:remainder_thm_sigma_asymp}. Using the fact that,
\[
\sup_{{\bm\theta}}\max_{1\le i \le n}{|K_i({\bm\theta})|} = o_P(h_n^{r})\;, \quad 0\le r<\frac{1}{2},
\]
and using Lemma~\ref{lem:poly-negligible-mv}, we obtain the corresponding range of $r$ for 
$\sup_{\bm{\theta}} \max_{1 \le i \le n} \bigl|\mathscr{R}_{\ref{thm:asymp-sigma},i}\bigr|$ 
as $0 \le r < \tfrac{1}{2}$; consequently, the remainder term satisfies,
\[
\frac{1}{n}\sum_{i=1}^n\mathscr{R}_{\ref{thm:asymp-sigma},i}
\xrightarrow{P}0,
\]
and then we get
\begin{align}
\begin{aligned}
    &\frac{1}{n}\sum_{i=1}^n
    \frac{\partial^2V_{n,i}^\alpha}{\partial\sigma_{rs}\partial\sigma_{kl}}\\
    & \xrightarrow[]{P}
    \frac{\alpha}{2(1+\alpha)^{d / 2}}
    |\Sigma|^{-\alpha / 2}
    \left[
    \frac{\alpha}{2}
    \operatorname{tr}\left(A_{kl}\right)
    \operatorname{tr}\left(A_{rs}\right)
    +
    \operatorname{tr}\left(A_{k l} A_{r s}\right)
    \right]
    % \\
    % &\quad
    -\frac{\alpha^2}{4}\left(1+\frac{1}{\alpha}\right)
    |\Sigma|^{-\alpha/2}
    |I_d+\alpha \mathcal{M}|^{-1/2}
    \\
    &\quad\times
    \Bigg[
    \operatorname{tr}(A_{kl})\operatorname{tr}(A_{rs})
    -
    \operatorname{tr}(A_{kl})
    \operatorname{tr}\!\big((I_d+\alpha \mathcal{M})^{-1}C_{rs}\big)
    -
    \operatorname{tr}(A_{rs})
    \operatorname{tr}\!\big((I_d+\alpha \mathcal{M})^{-1}C_{kl}\big)
    \\
    &\qquad
    +
    \operatorname{tr}\!\big((I_d+\alpha \mathcal{M})^{-1}C_{kl}\big)
    \operatorname{tr}\!\big((I_d+\alpha \mathcal{M})^{-1}C_{rs}\big)
    % \\
    % &\qquad\qquad
    +2\operatorname{tr}\!\big(
    (I_d+\alpha \mathcal{M})^{-1}
    C_{kl}
    (I_d+\alpha \mathcal{M})^{-1}
    C_{rs}
    \big)
    \Bigg]
    \\
    &\quad
    +\frac{\alpha}{2}\left(1+\frac{1}{\alpha}\right)
    |\Sigma|^{-\alpha/2}
    |I_d+\alpha \mathcal{M}|^{-1/2}
    % \\
    % &\qquad\qquad
    \times
    \Big[
    -\operatorname{tr}(A_{kl}A_{rs})
    +
    \operatorname{tr}\!\big(
    (I_d+\alpha \mathcal{M})^{-1}
    \Lambda_0 B_{rs,kl}\Lambda_0
    \big)
    \Big],
\end{aligned}
\end{align}
where,
\begin{equation*}
C_{k l}=\Lambda_0^{\top} \Sigma^{-1} S_{k l} \Sigma^{-1} \Lambda_0, \quad
C_{r s}=\Lambda_0^{\top} \Sigma^{-1} S_{r s} \Sigma^{-1} \Lambda_0, \quad
\mathcal{M } = \Lambda_0^T\Sigma^{-1}\Lambda_0.
\end{equation*}

Hence at $\Sigma=\Sigma_0$, the above converges boils down to,
\begin{align*}
  &\frac{\alpha}{2(1+\alpha)^{d / 2}}|\Sigma_0|^{-\alpha / 2}
  \left[
  \frac{\alpha}{2} \operatorname{tr}\left(A_{kl}\right)
  \operatorname{tr}\left(A_{rs}\right)
  +
  \operatorname{tr}\left(A_{k l} A_{r s}\right)
  \right]
  \\
&\quad
-\frac{\alpha^2}{4}\left(1+\frac{1}{\alpha}\right)
|\Sigma_0|^{-\alpha/2}
(1+\alpha)^{-d/2}
\left[
\operatorname{tr}(A_{kl})\operatorname{tr}(A_{rs})
\left\{1 -\frac{2}{1+\alpha}+\frac{1}{(1+\alpha)^2}\right\}
+\frac{2}{(1+\alpha)^2}\operatorname{tr}(A_{kl}A_{rs})
\right]
\\
&\qquad
+\frac{\alpha}{2}\left(1+\frac{1}{\alpha}\right)
|\Sigma_0|^{-\alpha/2}
(1+\alpha )^{-d/2}
\operatorname{tr}(A_{kl}A_{rs})
\times\thrd{\frac{2}{(1+\alpha)}-1}
\\
=&
\left|\Sigma_0\right|^{-\alpha / 2}
(1+\alpha)^{-d / 2}
\left[
\frac{\alpha^2}{4(1+\alpha)}
\operatorname{tr}\left(A_{k l}\right)
\operatorname{tr}\left(A_{r s}\right)
+
\frac{1}{2(1+\alpha)}
\operatorname{tr}\left(A_{k l} A_{r s}\right)
\right].
\end{align*}
Finally owing to Lemma~\ref{lem:PD_M}, we get invertibility and the theorem is proved.

\end{proof}

\subsubsection{Proof of Theorem~\ref{thm:asymp-indep} (Asymptotic Independence)}

\begin{proof}
We show that all cross-product terms arising from the joint expansion of the score functions vanish in probability. This establishes asymptotic independence. The proof proceeds in two steps.

\medskip
\noindent
\textbf{Step 1 (First-order cross terms).}
Arguing as in Step~1 of Theorems~\ref{thm:beta-clt} and~\ref{thm:asymp-sigma}, consider the joint score vector
\begin{equation}\label{eqn:joint_normal_all_params_u_rewrite}
\left(
\frac{1}{\sqrt{n h_n}}
\sum_{i=1}^n
\frac{\partial U_{n,i}^\alpha}{\partial \bm\beta},
\;
\frac{1}{\sqrt{n}}
\sum_{i=1}^n
\frac{\partial U_{n,i}^\alpha}{\partial \vech(\Sigma)}
\right)^\top
\Bigg|_{({\bm\beta_0},\Sigma_0)} .
\end{equation}
Using the same moment bounds and Lindeberg-type conditions as before, we compute the conditional expectation of the cross-product term~\eqref{eqn:1st_deriv_true_u_cross} and obtain
\begin{align}
    \mathbb{E}\!\left[
\left.
\frac{\partial U_{n,i}^\alpha}{\partial \beta_u}
\right|_{({\bm\beta_0},\Sigma_0)}
\left.
\frac{\partial U_{n,i}^\alpha}{\partial \sigma_{kl}}
\right|_{({\bm\beta_0},\Sigma_0)}
\Bigg|\mathcal{F}_{t_{i-1}^n}
\right]
=0 \label{eqn:thm_4_u_expc}
\end{align}

Consequently, the limiting covariance between the $\bm\beta$- and $\vech(\Sigma)$-components are zero, and the vector in \eqref{eqn:joint_normal_all_params_u_rewrite} is asymptotically normal with vanishing first-order cross terms.

\medskip
\noindent
\textbf{Step 2 (Second-order cross terms).}
We next consider the cross-derivative term in \eqref{eqn:1st_deriv_cross}. Expanding the second derivative as in \eqref{eqn:expansion_cross_deriv}, we obtain
\begin{align*}
\frac{1}{n\sqrt{h_n}}\sum_{i=1}^n
\left.
\frac{\partial^2 V_{n,i}^\alpha}{\partial \beta_u\,\partial \sigma_{rs}}
\right.
&=
\left(1+\frac{1}{\alpha}\right)
|\Sigma|^{-\alpha/2}
\frac{1}{n}\sum_{i=1}^n
\exp\!\left(
-\frac{\alpha}{2}
Z_{n,i}^\top \Lambda_0\Sigma^{-1}\Lambda_0 Z_{n,i}
\right)
\\
&\quad\times
\alpha
\Bigg[
\fst{\frac{\partial a_{i-1}}{\partial\beta_u}}^\top
\Sigma^{-1}\Lambda_0 Z_{n,i}
\Bigg\{
\frac{\alpha}{2}
\operatorname{tr}(\Sigma^{-1}S_{rs})
-
\frac{\alpha}{2}
Z_{n,i}^\top
\Lambda_0^\top
\Sigma^{-1}S_{rs}\Sigma^{-1}
\Lambda_0 Z_{n,i}
\Bigg\}
\\
&\qquad\qquad
+
\fst{\frac{\partial a_{i-1}}{\partial\beta_u}}^\top
\Sigma^{-1}S_{rs}\Sigma^{-1}\Lambda_0 Z_{n,i}
\Bigg]
+
\frac{1}{n\sqrt{h_n}}
\sum_{i=1}^n
\mathscr{R}_{\ref{thm:asymp-indep},i},
\end{align*}
where the remainder term $\mathscr{R}_{\ref{thm:asymp-indep},i}$ is defined in \eqref{eqn:asymp_indep_rem}.  
Using the uniform bound
\[
\sup_{{\bm\theta}}\max_{1\le i\le n}|K_i({\bm\theta})|
=o_P(h_n^{r}),
\qquad 0<r<\tfrac12,
\]
together with polynomial growth of the derivatives and Lemma~\ref{lem:poly-negligible-mv}, we obtain
\[
\frac{1}{n\sqrt{h_n}}
\sum_{i=1}^n
\mathscr{R}_{\ref{thm:asymp-indep},i}
\xrightarrow{P}0 .
\]

Moreover, conditioning on $\mathcal{F}_{t_{i-1}^n}$ and applying Lemmas~\ref{lem:gaussian_tilt_multivariate} and~\ref{lem:linear_form_uniform}, we have
\begin{align*}
\mathbb{E}\Bigg[
&\exp\!\left(
-\frac{\alpha}{2}
Z_{n,i}^\top \Lambda_0^\top \Sigma^{-1}\Lambda_0 Z_{n,i}
\right)
\alpha
\Bigg\{
\fst{\frac{\partial a_{i-1}}{\partial\beta_u}}^\top
\Sigma^{-1}\Lambda_0 Z_{n,i}
\Bigg(
\frac{\alpha}{2}
\operatorname{tr}(\Sigma^{-1}S_{rs})
-
\frac{\alpha}{2}
Z_{n,i}^\top
\Lambda_0^\top
\Sigma^{-1}S_{rs}\Sigma^{-1}
\Lambda_0 Z_{n,i}
\Bigg)
\\
&\qquad\qquad
+
\fst{\frac{\partial a_{i-1}}{\partial\beta_u}}^\top
\Sigma^{-1}S_{rs}\Sigma^{-1}\Lambda_0 Z_{n,i}
\Bigg\}
\Bigg|\mathcal{F}_{t_{i-1}^n}
\Bigg]
=0 .
\end{align*}
Therefore,
\[
\frac{1}{n\sqrt{h_n}}
\sum_{i=1}^n
\frac{\partial^2 V_{n,i}^\alpha}{\partial \beta_u\,\partial \sigma_{rs}}
\xrightarrow{P}0 .
\]

Combining Steps~1 and~2 completes the proof of asymptotic independence.
\end{proof}

\subsubsection{Proof of Theorem \ref{thm:joint-clt} (Joint Normality)}

\begin{proof}
By Theorem~\ref{thm:beta-clt},
\[
\sqrt{n h_n}\bigl(\hat{\bm\beta}_n^\alpha - \bm\beta_0\bigr)
\;\Rightarrow\;
\mathcal{N}\!\left(0,\Sigma_\beta\right),
\]
and by Theorem~\ref{thm:asymp-sigma},
\[
\sqrt{n}\Bigl(
\operatorname{vech}(\hat\Sigma_n^{\alpha})
-
\operatorname{vech}(\Sigma_0)
\Bigr)
\;\Rightarrow\;
\mathcal{N}\!\left(0,\mathscr{L}^{-\top}\,\Xi\,\mathscr{L}^{-1}\right).
\]

Since the true parameters $\bm\beta_0$ and $\Sigma_0$ are non-random, conditioning
on them is immaterial and does not affect the marginal limiting distributions.
Moreover, asymptotic independence of the two estimators follows from
Theorem~\ref{thm:asymp-indep}.

Consequently, for any conformable vectors $t_1$ and $t_2$, the joint
characteristic function factorizes asymptotically as
\[
\mathbb{E}\exp\!\left\{
i t_1^\top \sqrt{n h_n}(\hat{\bm\beta}_n^\alpha - \bm\beta_0)
+
i t_2^\top \sqrt{n}\bigl(
\operatorname{vech}(\hat\Sigma_n^{\alpha})
-
\operatorname{vech}(\Sigma_0)
\bigr)
\right\}
=
\phi_\beta(t_1)\,\phi_\Sigma(t_2) + o(1),
\]
where $\phi_\beta$ and $\phi_\Sigma$ denote the characteristic functions of the
corresponding Gaussian limits.
By Lévy’s continuity theorem, the joint convergence follows.
\end{proof}

\subsection{Some Lemmas with Proofs}

The proofs of lemmas~\ref{lem:delta-moment-mv}, \ref{lem:poly-negligible-mv},
\ref{lem:ergodic_poly_mv}, \ref{lem:gaussian_tilt_multivariate},
\ref{lem:linear_form_uniform}, \ref{lem:quad_form_mean}, and
\ref{lem:u_v_deriv_conver_p} in our multivariate setting are extensions of the corresponding univariate
results established in \cite{LeeSong2013}.

\begin{lem}[Uniform moment bound]
\label{lem:moment-bound}
Under the Assumption~\ref{assm:all_assumption} (A1)--(A3) for the process define by the equation \ref{eq:gen_multivariate_diffusion_gen}, have following bound,
\[
\mathbb{E}
\left[
\sup_{0\le r\le t}\|X_r\|^p
\right]
\le
C_{p,T}(1+\E \|X_0\|^p).
\]
\end{lem}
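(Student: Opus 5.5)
The plan is to prove the bound (for a fixed horizon $T$, with $t\le T$) by combining the integral form of \eqref{eq:gen_multivariate_diffusion_gen}, the linear growth implied by (A1), the Burkholder--Davis--Gundy inequality, and Gronwall's lemma. It suffices to treat $p\ge 2$; the case $p<2$ then follows from Jensen's inequality applied to the $L^2$ bound, using $(1+\E\|X_0\|^2)^{p/2}\le C(1+\E\|X_0\|^p)$ when $\E\|X_0\|^p$ is finite; otherwise the bound is trivial.

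\textbf{Step 1: linear growth.} By (A1), $\|a(x,\bm\beta_0)\|\le \|a(0,\bm\beta_0)\|+C\|x\|\le K(1+\|x\|)$.

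\textbf{Step 2: pathwise decomposition.} Write
\[
X_r=X_0+\int_0^r a(X_s,\bm\beta_0)\,ds+\Lambda_0W_r .
\]
The elementary inequality $\|x+y+z\|^p\le 3^{p-1}(\|x\|^p+\|y\|^p+\|z\|^p)$ then gives
\[
\sup_{r\le t}\|X_r\|^p\le 3^{p-1}\Bigl(\|X_0\|^p+t^{p-1}\int_0^t K^p(1+\|X_s\|)^p\,ds+\|\Lambda_0\|^p\sup_{r\le t}\|W_r\|^p\Bigr).
\]
Here Hölder's inequality is used on the drift integral. The diffusion coefficient is the constant matrix $\Lambda_0$, so the BDG inequality, or simply Gaussian moments of the Brownian supremum, gives $\E\sup_{r\le t}\|W_r\|^p\le C_p t^{p/2}$.

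\textbf{Step 3: Gronwall.} Set $\phi(t)=\E\sup_{r\le t}\|X_r\|^p$ and bound $\|X_s\|^p\le \sup_{u\le s}\|X_u\|^p$. For $t\le T$ this yields
\[
\phi(t)\le C_{p,T}\bigl(1+\E\|X_0\|^p\bigr)+C_{p,T}\int_0^t\phi(s)\,ds .
\]
Gronwall's lemma then gives $\phi(t)\le C_{p,T}e^{C_{p,T}T}(1+\E\|X_0\|^p)$, which is the claim after renaming the constant.

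\textbf{Main obstacle.} The delicate point is that Gronwall's lemma requires $\phi$ to be finite a priori. I will handle this by localization. Introduce stopping times $\tau_N=\inf\{t:\|X_t\|\ge N\}$ and run Steps 2--3 for the stopped process $X_{\cdot\wedge\tau_N}$. The corresponding $\phi_N$ is bounded by $N^p+\|X_0\|^p$, so it is finite whenever $\E\|X_0\|^p<\infty$, and the resulting bound is uniform in $N$. Since (A1) guarantees a non-explosive strong solution, $\tau_N\uparrow\infty$ almost surely, and Fatou's lemma (monotone convergence) lets $N\to\infty$.

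Assumptions (A2)--(A3) are not needed for this finite-horizon estimate. They only ensure that $\E\|X_0\|^p<\infty$, for instance under a stationary start, so that the right-hand side is finite.
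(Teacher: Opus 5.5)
For $p\ge 2$ your argument is the paper's proof: the integral form of the SDE, the $3^{p-1}$ splitting, Jensen/H\"older on the drift integral under linear growth, a BDG (or Gaussian) bound $C_p t^{p/2}$ for the noise term, and Gronwall. Two of your additions improve on the paper's write-up and are correct. First, you derive $\|a(x,\bm\beta_0)\|\le K(1+\|x\|)$ from (A1); the paper only asserts this under the name ``polynomial growth'', and genuine polynomial growth as in $\mathscr P$ would not close under Gronwall. Second, your localization by $\tau_N$ supplies the a priori finiteness of $\phi$ that the paper's use of Gronwall silently needs. The bound $\phi_N\le N^p+\mathbb{E}\|X_0\|^p$ holds, and the passage $N\to\infty$ is monotone because $\sup_{r\le t}\|X_{r\wedge\tau_N}\|=\sup_{r\le t\wedge\tau_N}\|X_r\|$ is nondecreasing in $N$.

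The step that fails is your reduction of $p<2$ to $p=2$. Jensen gives $\mathbb{E}\|X_0\|^p\le(\mathbb{E}\|X_0\|^2)^{p/2}$, which is the opposite of the inequality $(1+\mathbb{E}\|X_0\|^2)^{p/2}\le C(1+\mathbb{E}\|X_0\|^p)$ you invoke, and that inequality is false. For $M\ge1$, take $X_0=Me_1$ with probability $M^{-p}$ and $X_0=0$ otherwise. Then $\mathbb{E}\|X_0\|^p=1$, while $\mathbb{E}\|X_0\|^2=M^{2-p}\to\infty$. Worse, $X_0$ may have $\mathbb{E}\|X_0\|^p<\infty=\mathbb{E}\|X_0\|^2$, so the $L^2$ bound need not even be finite.

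The paper never treats $p<2$: its BDG step and Corollary~\ref{cor:conditional-moment} are stated for $p\ge 2$. So this only matters if you want the lemma for all $p$. For $1\le p<2$, rerun your Steps 2--3 directly, since the splitting and H\"older need only $p\ge1$, and $\mathbb{E}\sup_{r\le t}\|W_r\|^p\le C_p t^{p/2}$ holds for every $p>0$. For every $p>0$ you can instead exploit the additive noise and apply Gronwall pathwise. For $r\le t$ we have $\|X_r\|\le \|X_0\|+Kt+\|\Lambda_0\|\sup_{u\le t}\|W_u\|+K\int_0^r\|X_s\|\,ds$. This gives $\sup_{r\le t}\|X_r\|\le\bigl(\|X_0\|+Kt+\|\Lambda_0\|\sup_{u\le t}\|W_u\|\bigr)e^{Kt}$, and taking $p$th moments yields the lemma with no localization at all.
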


\begin{proof}
Before proving this it is worth for mentioning the univariate proof of this lemma given in \cite{SchillingPartzsch2014} Chapter 19. We follow the similar steps. From the integral representation
\[
X_u
=
X_0
+
\int_0^u b(s,X_s)\,ds
+
\int_0^u \Lambda\,dW_s ,
\]
the triangle inequality in $\mathbb{R}^d$ together with the elementary bound
$(a+b+c)^p \le 3^{p-1}(a^p+b^p+c^p)$ for $a,b,c\ge0$ yields
\[
\|X_u\|^p
\le
3^{p-1}\|X_0\|^p
+
3^{p-1}
\Big\|
\int_0^u b(s,X_s)\,ds
\Big\|^p
+
3^{p-1}
\Big\|
\int_0^u \Lambda\,dW_s
\Big\|^p .
\]

By Jensen’s inequality applied to the probability measure $ds/u$ on $[0,u]$,
\[
\Big\|
\int_0^u b(s,X_s)\,ds
\Big\|^p
\le
u^{p-1}
\int_0^u \|b(s,X_s)\|^p\,ds .
\]
Using the polynomial growth condition $\|b(s,x)\|\le K(1+\|x\|)$, we obtain
\[
\Big\|
\int_0^u b(s,X_s)\,ds
\Big\|^p
\le
K^p u^{p-1}
\int_0^u (1+\|X_s\|)^p\,ds .
\]
Since $(1+\|x\|)^p\le 2^{p-1}(1+\|x\|^p)$, it follows that
\[
\Big\|
\int_0^u b(s,X_s)\,ds
\Big\|^p
\le
C u^{p-1}
\int_0^u
\left(1+\sup_{r\le s}\|X_r\|^p\right)\,ds .
\]

Let
\[
M_u := \int_0^u \Lambda\,dW_s .
\]
Then $M_u$ is a continuous $\mathbb{R}^d$-valued martingale with quadratic
variation
\[
\langle M\rangle_u = u\,\Lambda\Lambda^\top .
\]
By the Burkholder--Davis--Gundy inequality, for any $p\ge2$,
\[
\mathbb{E}
\left[
\sup_{0\le r\le u}
\|M_r\|^p
\right]
\le
C_p
\mathbb{E}
\left[
\left(\operatorname{tr}\langle M\rangle_u\right)^{p/2}
\right]
=
C_p\,u^{p/2}\,\|\Lambda\|_{\mathrm{F}}^{p},
\]
where $\|\cdot\|_{\mathrm{F}}$ denotes the Frobenius norm.

Combining the above bounds and taking expectations, we obtain
\[
\mathbb{E}
\left[
\sup_{r\le u}\|X_r\|^p
\right]
\le
C_1
+
C_2
\int_0^u
\left(
1+
\mathbb{E}
\left[
\sup_{r\le s}\|X_r\|^p
\right]
\right) ds .
\]
Setting
\[
w(u)
:=
\mathbb{E}
\left[
\sup_{r\le u}\|X_r\|^p
\right],
\]
Gronwall’s lemma implies that
\begin{align*}
    \E\thrd{\sup_{0\le u\le t }\|X_u\|^p}\le C_3 (1+\E\|X_0\|^p)
\end{align*}
which completes the proof.
\end{proof}
\begin{cor}
\label{cor:conditional-moment}
Under the assumptions of Lemma~\ref{lem:moment-bound}, let
$0\le t_1<t_2\le T$ and let $\mathcal F_t$ denote the natural filtration of
$(X_t)_{t\ge0}$.
Then, for any $p\ge2$, there exists a constant $C_p>0$ such that
\[
\mathbb{E}
\left(
\sup_{t_1\le u\le t_2}\|X_u\|^p
\;\middle|\;
\mathcal F_{t_1}
\right)
\le
C_p
\left(
1+\|X_{t_1}\|^p
\right)
\quad
\text{a.s.}
\]
\end{cor}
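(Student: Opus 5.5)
The plan is to deduce the corollary from Lemma~\ref{lem:moment-bound} using the Markov property of the strong solution of \eqref{eq:gen_multivariate_diffusion_gen}. By (A1), the drift $a(\cdot,\bm\beta_0)$ is globally Lipschitz and the diffusion coefficient $\Lambda_0$ is constant. Hence the SDE has a pathwise unique strong solution, and this solution is a time-homogeneous Markov process with respect to $(\mathcal F_t)$. Consequently, conditionally on $\mathcal F_{t_1}$, the shifted process $Y_s := X_{t_1+s}$, $s\in[0,t_2-t_1]$, has the law of the solution started at the deterministic point $x=X_{t_1}$, driven by the Brownian motion $\tilde W_s = W_{t_1+s}-W_{t_1}$, which is independent of $\mathcal F_{t_1}$.

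The key steps are as follows.

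\textbf{Step 1.} Write
\[
Y_s = x + \int_0^s a(Y_r,\bm\beta_0)\,dr + \Lambda_0 \tilde W_s .
\]
Then define
\[
\varphi(x) := \mathbb E\Bigl[\sup_{0\le s\le t_2-t_1}\|Y^x_s\|^p\Bigr].
\]
By the Markov property (equivalently, by the freezing lemma applied to the measurable flow map $(x,\tilde W)\mapsto Y^x$, with $\tilde W$ independent of $\mathcal F_{t_1}$),
\[
\mathbb E\Bigl(\sup_{t_1\le u\le t_2}\|X_u\|^p \,\Big|\, \mathcal F_{t_1}\Bigr) = \varphi(X_{t_1}) \quad \text{a.s.}
\]

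\textbf{Step 2.} Apply Lemma~\ref{lem:moment-bound} to $Y^x$ with deterministic initial value $x$ and horizon $T$. This gives
\[
\varphi(x) \le C_{p,T}(1+\|x\|^p).
\]
The constant depends only on $p$, $T$, the linear growth constant of $a$, which follows from (A1) via $\|a(x,\bm\beta_0)\|\le \|a(0,\bm\beta_0)\|+C\|x\|$, and $\|\Lambda_0\|_F$. Since $t_2-t_1\le T$, the constant can be taken uniform in $t_1,t_2$. Substituting $x=X_{t_1}$ yields the claim.

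Alternatively, and perhaps more directly, one can avoid the Markov machinery by redoing the proof of Lemma~\ref{lem:moment-bound} conditionally. The steps are the same: the bound $(a+b+c)^p$, Jensen's inequality for the drift integral, the BDG inequality for $\Lambda_0(W_u-W_{t_1})$, and Gronwall's lemma for
\[
w(u)=\mathbb E\Bigl(\sup_{t_1\le r\le u}\|X_r\|^p \,\Big|\, \mathcal F_{t_1}\Bigr).
\]
Here the BDG bound holds conditionally because the increments are independent of $\mathcal F_{t_1}$. The term $\|X_{t_1}\|^p$ is $\mathcal F_{t_1}$-measurable and comes out of the conditional expectation.

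The main obstacle is technical rather than computational. The conditional Gronwall argument requires $w(u)<\infty$ a.s. In the Markov route, one needs measurability of $x\mapsto\varphi(x)$ and a justification for substituting $X_{t_1}$. I would handle finiteness by localization: replace the supremum with its version stopped at $\tau_N=\inf\{u\ge t_1:\|X_u\|\ge N\}$, run Gronwall on the stopped quantity, which is bounded, obtain the bound with a constant independent of $N$, and let $N\to\infty$ by conditional monotone convergence.
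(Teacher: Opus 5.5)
Your proposal is correct and follows the route the paper intends. The paper gives no separate proof and treats the corollary as an immediate consequence of Lemma~\ref{lem:moment-bound} restarted at time $t_1$ from the value $X_{t_1}$, which is exactly your Markov-property/freezing-lemma argument, and your extra points fill in details the paper leaves implicit: the constant depends only on $t_2-t_1$, so it is uniform in $t_1$, and localization makes the Gronwall quantity finite.
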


\begin{lem}%[Moment bound for discretization remainder (multivariate)]
\label{lem:delta-moment-mv}
Suppose that Assumptions~\ref{assm:all_assumption} (A1)  and $a(\cdot,\cdot)\in\mathscr{P}$, then, for any integer $k\ge1$, there exists a constant $C_k>0$ such that
\[
\mathbb{E}_{{\bm\theta}_0}
\!\left(
\|\Delta_{n,i}\|^k
\;\middle|\;
\mathcal F_{t_{i-1}^n}
\right)
\le
C_k\,h_n^{3k/2}
\left(1+\|X_{t_{i-1}^n}\|\right)^k ,
\]
where $\|\cdot\|$ denotes the Euclidean norm and
$\mathcal F_{t_{i-1}^n}$ is sigma-field generated by $\{X_{t_j^n}:j\le i-1\}$.
\end{lem}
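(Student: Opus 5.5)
We need to bound $\mathbb E(\|\Delta_{n,i}\|^k\mid\mathcal F_{t_{i-1}^n})$, where $\Delta_{n,i}=\int_{t_{i-1}^n}^{t_i^n}\{a(X_s,\bm\beta_0)-a(X_{t_{i-1}^n},\bm\beta_0)\}\,ds$. We will combine the Lipschitz property (A1) with the increment moments of the diffusion over a window of length $h_n$. By the Markov property we condition on $X_{t_{i-1}^n}=x$. Jensen's (or Hölder's) inequality on the time integral, followed by (A1), gives
\[
\|\Delta_{n,i}\|^k\le h_n^{k-1}\int_{t_{i-1}^n}^{t_i^n}\|a(X_s,\bm\beta_0)-a(X_{t_{i-1}^n},\bm\beta_0)\|^k\,ds\le C^k h_n^{k-1}\int_{t_{i-1}^n}^{t_i^n}\|X_s-X_{t_{i-1}^n}\|^k\,ds .
\]
It therefore suffices to show that, for $s\in[t_{i-1}^n,t_i^n]$,
\[
\mathbb E\bigl(\|X_s-X_{t_{i-1}^n}\|^k\mid\mathcal F_{t_{i-1}^n}\bigr)\le C_k h_n^{k/2}(1+\|X_{t_{i-1}^n}\|)^k .
\]
Integrating over $s$ then yields $h_n^{k-1}\cdot h_n\cdot h_n^{k/2}=h_n^{3k/2}$, which is the claim.

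For the increment bound, write
\[
X_s-X_{t_{i-1}^n}=\int_{t_{i-1}^n}^s a(X_u,\bm\beta_0)\,du+\Lambda_0\bigl(W_s-W_{t_{i-1}^n}\bigr).
\]
\textbf{Martingale part.} The Gaussian increment contributes $C_k\|\Lambda_0\|^k(s-t_{i-1}^n)^{k/2}\le C_k h_n^{k/2}$, since it is independent of $\mathcal F_{t_{i-1}^n}$.

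\textbf{Drift part.} Jensen again gives the bound $h_n^{k-1}\int_{t_{i-1}^n}^{s}\|a(X_u,\bm\beta_0)\|^k\,du$. By (A1), $a(\cdot,\bm\beta_0)$ has linear growth: $\|a(x,\bm\beta_0)\|\le \|a(0,\bm\beta_0)\|+C\|x\|\le C'(1+\|x\|)$. Corollary~\ref{cor:conditional-moment} then bounds the conditional expectation by $C h_n^{k}(1+\|X_{t_{i-1}^n}\|^k)$. This is $O(h_n^{k/2})$ for $h_n\le1$.

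The bound for $k<2$ (Corollary~\ref{cor:conditional-moment} is stated for $p\ge2$) follows from the case $2k$ via conditional Jensen, $\mathbb E(\|\cdot\|^k\mid\cdot)\le \mathbb E(\|\cdot\|^{2k}\mid\cdot)^{1/2}$, since the resulting factor $(1+\|x\|)^{k}$ has the same form.

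The main obstacle is the conditional (rather than unconditional) use of the moment bound. Corollary~\ref{cor:conditional-moment} provides exactly this, with a constant uniform in $i$ and $n$ because the window length is at most $h_n\le1$ and the coefficients are time-homogeneous. This uniformity is what makes $C_k$ independent of $i,n$. The factor $(1+\|x\|)^k$ arises from combining $1+\|x\|^k\le(1+\|x\|)^k$. The polynomial-growth hypothesis $a\in\mathscr P$ would give a worse power; we therefore use the linear growth implied by (A1), which yields the stated exponent $k$.
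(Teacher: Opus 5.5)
Your proposal is correct and follows the paper's proof essentially step for step: Lipschitz continuity plus Jensen on the time integral reduce the claim to the conditional increment bound $\mathbb{E}(\|X_s-X_{t_{i-1}^n}\|^k\mid\mathcal F_{t_{i-1}^n})\le C h_n^{k/2}(1+\|X_{t_{i-1}^n}\|)^k$, and integrating that bound over $s$ gives $h_n^{3k/2}$. You justify the increment bound more carefully than the paper does, by splitting it into a Gaussian part and a drift part handled via Corollary~\ref{cor:conditional-moment}; your observation that the linear growth implied by (A1), rather than the polynomial-growth hypothesis $a\in\mathscr P$ the paper cites, is what yields the exponent $k$ in $(1+\|X_{t_{i-1}^n}\|)^k$ is a correct refinement.
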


\begin{proof}
By the global Lipschitz continuity of $a$, there exists $C>0$ such that
\[
\|a(X_s,{\bm\theta}_0)-a(X_{t_{i-1}^n},{\bm\theta}_0)\|
\le
C\,\|X_s-X_{t_{i-1}^n}\|,
\qquad s\in[t_{i-1}^n,t_i^n].
\]
Hence,
\[
\|\Delta_{n,i}\|
\le
C\int_{t_{i-1}^n}^{t_i^n}\|X_s-X_{t_{i-1}^n}\|\,ds .
\]
Applying Jensen’s inequality yields
\[
\|\Delta_{n,i}\|^k
\le
C\,h_n^{k-1}
\int_{t_{i-1}^n}^{t_i^n}
\|X_s-X_{t_{i-1}^n}\|^k\,ds .
\]
Taking conditional expectation with respect to $\mathcal F_{t_{i-1}^n}$ gives
\[
\mathbb{E}_{{\bm\theta}_0}
\!\left(
\|\Delta_{n,i}\|^k
\mid
\mathcal F_{t_{i-1}^n}
\right)
\le
C\,h_n^{k-1}
\int_{t_{i-1}^n}^{t_i^n}
\mathbb{E}_{{\bm\theta}_0}
\!\left(
\|X_s-X_{t_{i-1}^n}\|^k
\mid
\mathcal F_{t_{i-1}^n}
\right)
ds .
\]

Next, note that
\[
X_s-X_{t_{i-1}^n}
=
\int_{t_{i-1}^n}^s a(X_u,{\bm\theta}_0)\,du
+
\Lambda_0 (W_s-W_{t_{i-1}^n}).
\]
Using the polynomial growth of $a$ and standard moment bounds for diffusions using Lemma~\ref{lem:moment-bound} we obtain
\[
\mathbb{E}_{{\bm\theta}_0}
\!\left(
\|X_s-X_{t_{i-1}^n}\|^k
\mid
\mathcal F_{t_{i-1}^n}
\right)
\le
C\,h_n^{k/2}
\left(1+\|X_{t_{i-1}^n}\|\right)^k,
\qquad
s\in[t_{i-1}^n,t_i^n].
\]
Substituting this bound into the above expression yields
\[
\mathbb{E}_{{\bm\theta}_0}
\!\left(
\|\Delta_{n,i}\|^k
\mid
\mathcal F_{t_{i-1}^n}
\right)
\le
C\,h_n^{k-1}\cdot h_n\cdot h_n^{k/2}
\left(1+\|X_{t_{i-1}^n}\|\right)^k
=
C_k\,h_n^{3k/2}
\left(1+\|X_{t_{i-1}^n}\|\right)^k,
\]
which completes the proof.
\end{proof}

\begin{lem}
\label{lem:poly-negligible-mv}
Let $f:\mathbb{R}^d\times\Theta\to\mathbb{R}^p$ belong to $\mathscr P$.
Assume that Assumptions~\ref{assm:all_assumption} \emph{(A1)} and \emph{(A3)} hold,
and that $n h_n^{q}\to0$ for some $q>1$.
Let $P_j:\mathbb{R}^p\to\mathbb{R}$ be an arbitrary scalar polynomial function
of total degree $j$, that is,

\[
P_j(x)
=
\sum_{\nu\in\mathbb{N}_0^p:\,|\nu|\le j}
c_\nu\, x^\nu,
\qquad
x^\nu := \prod_{r=1}^p x_r^{\nu_r},
\quad
|\nu| := \sum_{r=1}^p \nu_r .
\]

Then, for any integers $j,k,l\ge0$ and any $m>-3l/2$,
\[
\sup_{{\bm\theta}\in\Theta}
\max_{1\le i\le n}
\left|
P_j\!\left(f(X_{t_{i-1}^n},{\bm\theta})\right)
\right|
\,
\left\|Z_{n,i}\right\|^{ k}
\,
\left\|\Delta_{n,i}\right\|^{l}
\,h_n^{m}
=
o_P(h_n^{r}),
\qquad
0\le r<\tfrac{3}{2}l+m .
\]
\end{lem}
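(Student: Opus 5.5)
The plan is a standard maximal-inequality argument: bound the maximum over $i$ by a sum, and control each summand through a high conditional moment. Set
\[
\xi_{n,i}:=\sup_{\bm\theta\in\Theta}\bigl|P_j\bigl(f(X_{t_{i-1}^n},\bm\theta)\bigr)\bigr|\,\|Z_{n,i}\|^{k}\,\|\Delta_{n,i}\|^{l}.
\]
Since $h_n^m$ is deterministic, it suffices to show that $h_n^{-r}h_n^{m}\max_{i\le n}\xi_{n,i}\to0$ in probability for every fixed $r<\tfrac32 l+m$.

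\textbf{Step 1: a deterministic polynomial envelope, uniform in $\bm\theta$.} Because $f\in\mathscr P$ with a constant independent of $\bm\theta$, each coordinate satisfies $|f_r(x,\bm\theta)|\le C(1+\|x\|)^{C}$. Since $P_j$ has finitely many coefficients and total degree at most $j$,
\[
\sup_{\bm\theta}|P_j(f(x,\bm\theta))|\le C'(1+\|x\|)^{jC}.
\]
This removes the supremum over $\bm\theta$ with no equicontinuity argument, which is why the uniformity is essentially free.

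\textbf{Step 2: conditional moments of order $s$.} Fix an integer $s\ge1$. Condition on $\mathcal F_{t_{i-1}^n}$ and use Cauchy--Schwarz to separate $\|Z_{n,i}\|^{ks}$ from $\|\Delta_{n,i}\|^{ls}$. The increment $Z_{n,i}$ is $\mathcal N_d(0,I_d)$ and independent of $\mathcal F_{t_{i-1}^n}$, so its $2ks$-th moment is a constant. Lemma~\ref{lem:delta-moment-mv} with exponent $2ls$ gives $\mathbb E(\|\Delta_{n,i}\|^{2ls}\mid\mathcal F_{t_{i-1}^n})\le C h_n^{3ls}(1+\|X_{t_{i-1}^n}\|)^{2ls}$. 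Combining these with Step~1 and then taking expectations using the uniform moment bound (A3) yields
\[
\mathbb E\,\xi_{n,i}^{s}\le C_s\, h_n^{3ls/2}\,\sup_t\mathbb E(1+\|X_t\|)^{C''s}\le C_s' h_n^{3ls/2},
\]
with $C_s'$ independent of $i$ and $n$.

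\textbf{Step 3: union/Markov bound and choice of $s$.} For any $\eta>0$,
\[
\mathbb P\Bigl(h_n^{m-r}\max_i\xi_{n,i}>\eta\Bigr)\le \sum_{i=1}^n\frac{h_n^{(m-r)s}\,\mathbb E\xi_{n,i}^s}{\eta^s}\le \frac{C_s'}{\eta^s}\; n\,h_n^{s(\frac32 l+m-r)}.
\]
Put $\delta:=\tfrac32 l+m-r>0$. Write $n h_n^{s\delta}=(n h_n^{q})\,h_n^{s\delta-q}$. Choosing $s$ so large that $s\delta\ge q$ makes the right side tend to zero, since $n h_n^q\to0$ and $h_n\to0$. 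The hypothesis $m>-3l/2$ guarantees that the admissible range of $r$ is nonempty, in particular that $r=0$ is allowed.

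\textbf{Main obstacle.} The delicate point is Step~2, namely obtaining the $h_n^{3ls/2}$ rate for arbitrarily high conditional moments of $\Delta_{n,i}$ uniformly in $i$. This relies on Lemma~\ref{lem:delta-moment-mv} holding for every integer order, together with Corollary~\ref{cor:conditional-moment} and (A3) to absorb the polynomial weights in $\|X_{t_{i-1}^n}\|$ into bounded constants. If the constants in those lemmas grew with $T=nh_n$, I would instead use the stationarity-type bound (A3) directly, which is uniform in $t$. The rest of the argument is routine bookkeeping.
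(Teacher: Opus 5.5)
Your proposal is correct and follows the same route as the paper: a $\bm\theta$-free polynomial envelope for $P_j\circ f$, high moments of $\|\Delta_{n,i}\|$ controlled via Lemma~\ref{lem:delta-moment-mv} and (A3), and then Markov's inequality plus a union bound over $i$, with the moment order chosen large enough that $n h_n^{s(\frac32 l+m-r)}\to0$. Your Cauchy--Schwarz separation of $\|Z_{n,i}\|^{ks}$ from $\|\Delta_{n,i}\|^{ls}$, with the lemma applied at the integer order $2ls$, is slightly more careful than the paper's proof, which handles the joint moment of these two dependent factors loosely.
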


\begin{proof}
Let $P_j:\mathbb{R}^p\to\mathbb{R}$ be a scalar polynomial of total degree $j$.
Since $f\in\mathscr P$, there exist constants $C>0$ and $\kappa\ge0$ such that
\[
\sup_{{\bm\theta}\in\Theta}
\left|
P_j\!\left(f(x,{\bm\theta})\right)
\right|
\le
C(1+\|x\|)^{\kappa},
\qquad x\in\mathbb{R}^d .
\]
By the polynomial growth of $P_j\circ f$ ,
%and the identity $\|u^{\otimes k}\|=\|u\|^{k}$, 
we obtain
\[
\left|
P_j\!\left(f(X_{t_{i-1}^n},{\bm\theta})\right)
\right|
\,
\left\|Z_{n,i}\right\|^{k}
\,
\left\|\Delta_{n,i}\right\|^{l} 
\le
C(1+\|X_{t_{i-1}^n}\|)^{\kappa}
\|Z_{n,i}\|^{k}
\|\Delta_{n,i}\|^{l}.
\]
Fix $p_1>1$ such that $p_1(m+3l/2-r)>q$.
Since $\sup_i\mathbb E\|Z_{n,i}\|^{p_1}<\infty$ and, by
Lemma~\ref{lem:delta-moment-mv},
\[
\mathbb E\!\left(
\|\Delta_{n,i}\|^{p_1}
\mid
\mathcal F_{t_{i-1}^n}
\right)
\le
C h_n^{3p_1/2}
(1+\|X_{t_{i-1}^n}\|)^{p_1},
\]
it follows that
\[
\mathbb E
\!\left(
\left|
P_j\!\left(f(X_{t_{i-1}^n},{\bm\theta})\right)
\right|^{p_1}
\,
\|Z_{n,i}\|^{k p_1}
\,
\|\Delta_{n,i}\|^{l p_1}
\right)
\le
C h_n^{3l p_1/2}
\mathbb E(1+\|X_{t_{i-1}^n}\|)^{p_1(\kappa+l)}.
\]
Under Assumption~\emph{(A3)}, the expectation on the right-hand side is
uniformly bounded in $i$. Consequently,
\[
n
\max_{1\le i\le n}
\mathbb E
\!\left(
\left|
P_j\!\left(f(X_{t_{i-1}^n},{\bm\theta})\right)
\right|^{p_1}
\,
\|Z_{n,i}\|^{k p_1}
\,
\|\Delta_{n,i}\|^{l p_1}
\right)
\le
C n h_n^{3l p_1/2}.
\]
Applying Markov’s inequality together with the union bound, we obtain
\[
\mathbb P
\!\left(
\sup_{{\bm\theta}\in\Theta}
\max_{1\le i\le n}
\left|
P_j\!\left(f(X_{t_{i-1}^n},{\bm\theta})\right)
\right|
\,
\|Z_{n,i}\|^{k}
\,
\|\Delta_{n,i}\|^{l}
\,h_n^{m}
>
h_n^{r}
\right)
\le
C n h_n^{p_1(m+3l/2-r)}.
\]
Since $r<m+3l/2$ and $n h_n^{q}\to0$, the right-hand side converges to zero,
which completes the proof.
\end{proof}

\begin{lem}
\label{lem:ergodic_poly_mv}
Assume that Assumptions~\ref{assm:all_assumption} \textnormal{A1--A3} hold. Let
\[
f : \mathbb{R}^d \times \Theta \to \mathbb{R}
\]
be a scalar-valued function. Suppose that $h_n \to 0$ and $n h_n \to \infty$, and
that $f$ is continuously differentiable with respect to both $x$ and ${\bm\theta}$.
Assume further that $f$ and all its partial derivatives belong to the class
$\mathscr{P}$. Then,
\[
\sup_{{\bm\theta}\in \Theta}
\left|
\frac{1}{n}\sum_{i=1}^n f\!\left(X_{t_{i-1}^n},{\bm\theta}\right)
-
\int f(x,{\bm\theta})\,\mu_0(dx)
\right|
\;\xrightarrow[n\to\infty]{\mathbb P_{{\bm\theta}_0}}\; 0,
\]
uniformly in ${\bm\theta}$.
\end{lem}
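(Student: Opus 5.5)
The plan is to prove Lemma~\ref{lem:ergodic_poly_mv} in two stages. First we get pointwise convergence for each fixed ${\bm\theta}$. Then we upgrade to uniformity over the compact set $\Theta$ with a stochastic equicontinuity argument that uses the derivative of $f$ in ${\bm\theta}$.

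\textbf{Pointwise convergence.} Fix ${\bm\theta}$ and compare the Riemann-type sum $\frac1n\sum_{i=1}^n f(X_{t_{i-1}^n},{\bm\theta})$ with the time average $\frac{1}{nh_n}\int_0^{nh_n} f(X_s,{\bm\theta})\,ds$.

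The difference between them is
\[
\frac{1}{nh_n}\sum_{i=1}^n\int_{t_{i-1}^n}^{t_i^n}\bigl\{f(X_{t_{i-1}^n},{\bm\theta})-f(X_s,{\bm\theta})\bigr\}\,ds .
\]
By the mean value theorem in $x$, $|f(X_s,{\bm\theta})-f(X_{t_{i-1}^n},{\bm\theta})|\le C(1+\|X_s\|+\|X_{t_{i-1}^n}\|)^{C}\|X_s-X_{t_{i-1}^n}\|$, since $\partial_x f\in\mathscr P$. Cauchy--Schwarz, Corollary~\ref{cor:conditional-moment}, the increment bound $\mathbb E(\|X_s-X_{t_{i-1}^n}\|^2\mid\mathcal F_{t_{i-1}^n})\le Ch_n(1+\|X_{t_{i-1}^n}\|)^2$ from the proof of Lemma~\ref{lem:delta-moment-mv}, and (A3) then bound the expected absolute difference by $C\sqrt{h_n}\to0$.

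The time average converges to $\int f(x,{\bm\theta})\,\mu_0(dx)$ by the ergodic theorem (A2), because $nh_n\to\infty$ and $f(\cdot,{\bm\theta})\in L^1(\mu_0)$ by polynomial growth and the moments in (A2). If the process is not started in stationarity, (A2)--(A3) still give convergence in probability of the ergodic average from an arbitrary initial law, which is the standard form of the ergodic theorem for diffusions.

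\textbf{Uniformity.} Both $G_n({\bm\theta}):=\frac1n\sum_{i=1}^n f(X_{t_{i-1}^n},{\bm\theta})$ and $G({\bm\theta}):=\int f\,d\mu_0$ are Lipschitz in ${\bm\theta}$ with random or deterministic constants:
\[
|G_n({\bm\theta})-G_n({\bm\theta}')|\le \|{\bm\theta}-{\bm\theta}'\|\,\frac1n\sum_{i=1}^n C(1+\|X_{t_{i-1}^n}\|)^{C},
\]
using $\partial_{\bm\theta}f\in\mathscr P$ and convexity of $\Theta$ (or working locally on balls). The random constant is $O_P(1)$ by (A3) and Markov's inequality, and the analogous bound for $G$ holds by (A2).

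Cover $\Theta$ by finitely many balls of radius $\delta$ and apply pointwise convergence at the centers. This gives $\sup_{\bm\theta}|G_n-G|\le \max_{\text{centers}}|G_n-G|+\delta\,O_P(1)$, and letting $\delta\downarrow0$ after $n\to\infty$ finishes the proof.

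\textbf{Main obstacle.} The delicate step is the ergodic theorem along the discrete grid with a mesh that shrinks to zero. I would handle it as above, by reducing to the continuous-time average, which avoids needing mixing rates. The only care needed is that the polynomial weights in the Riemann approximation error are integrable, and (A3) supplies exactly that.
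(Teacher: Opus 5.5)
Your proposal is correct and takes essentially the same route as the paper. The paper bounds the gap between the Riemann sum and the continuous-time average using the mean value theorem in $x$ plus Cauchy--Schwarz. It then defers the ergodic-theorem and uniformity-in-${\bm\theta}$ steps to Lemma~8 of \citet{Kessler1997} and Lemma~1 of \citet{Uchida2010Contrast}, which is exactly the pointwise-ergodic-plus-Lipschitz-in-${\bm\theta}$ (finite covering) argument you write out explicitly.
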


\begin{proof}
Fix ${\bm\theta}\in\Theta$. By the multivariate mean value theorem,
\[
f(X_s,{\bm\theta})-f(X_{t_{i-1}^n},{\bm\theta})
=
\int_0^1
\Bigl(
\nabla_x f\!\left(
X_{t_{i-1}^n}+u\left(X_s-X_{t_{i-1}^n}\right),{\bm\theta}
\right)
\Bigr)^\top
\left(X_s-X_{t_{i-1}^n}\right)\,du .
\]
Consequently,
\[
\left|
f(X_s,{\bm\theta})-f(X_{t_{i-1}^n},{\bm\theta})
\right|
\le
\|X_s-X_{t_{i-1}^n}\|
\int_0^1
\left\|
\nabla_x f\!\left(
X_{t_{i-1}^n}+u\left(X_s-X_{t_{i-1}^n}\right),{\bm\theta}
\right)
\right\|\,du .
\]
Applying the Cauchy--Schwarz inequality yields
\begin{align}
\mathbb{E}_{{\bm\theta}_0}
\left|
f(X_s,{\bm\theta})-f(X_{t_{i-1}^n},{\bm\theta})
\right|
&\le
\Bigl(
\mathbb{E}_{{\bm\theta}_0}\|X_s-X_{t_{i-1}^n}\|^2
\Bigr)^{1/2}
\nonumber\\
&\quad\times
\Biggl(
\mathbb{E}_{{\bm\theta}_0}\int_0^1
\left\|
\nabla_x f\!\left(
X_{t_{i-1}^n}+u\left(X_s-X_{t_{i-1}^n}\right),{\bm\theta}
\right)
\right\|^2\,du
\Biggr)^{1/2}.
\label{eq:kessler_trick_mv}
\end{align}

The right-hand side is uniformly bounded in ${\bm\theta}$ by Assumptions~\ref{assm:all_assumption} \textnormal{(A1--A3)} and the polynomial growth condition on $f$ and its derivatives. The conclusion then follows by arguments analogous to those in Lemma~8 of \citet{Kessler1997} and Lemma~1 of \citet{Uchida2010Contrast}.
\end{proof}

\begin{lem}\label{lem:gaussian_tilt_multivariate}
Suppose that Assumptions~\ref{assm:all_assumption} (A1)--(A3) hold.
Let $f:\mathbb{R}^d\times\Theta\to\mathbb{R}\in\mathscr{P}$ be differentiable
with respect to $x$ and ${\bm\theta}$, with derivatives belonging to $\mathscr{P}$.
Let $A({\bm\theta})$ and $B({\bm\theta})$ be symmetric positive definite matrices,
independent of $\{X_t\}$.
Define
\[
\mathscr{Q}=I_d+\Lambda_0\Sigma^{-1}\Lambda_0 .
\]
Then, uniformly in ${\bm\theta}$,
\begin{align}
\frac{1}{n}\sum_{i=1}^n
f(X_{t_{i-1}^n},{\bm\theta})\,
\exp\!\left(
-\frac{\alpha}{2}Z_{n,i}^\top\Lambda_0\Sigma^{-1}\Lambda_0 Z_{n,i}
\right)
&\xrightarrow{P}
|\mathscr{Q}|^{-1/2}
\int f(x,{\bm\theta})\,d\mu_0(x),
\label{eq:lim0}\\[0.4em]
\frac{1}{n}\sum_{i=1}^n
f(X_{t_{i-1}^n},{\bm\theta})\,
Z_{n,i}^\top A Z_{n,i}\,
\exp\!\left(
-\frac{\alpha}{2}Z_{n,i}^\top\Lambda_0\Sigma^{-1}\Lambda_0 Z_{n,i}
\right)
&\xrightarrow{P}
|\mathscr{Q}|^{-1/2}
\operatorname{tr}(A\mathscr{Q}^{-1})
\int f(x,{\bm\theta})\,d\mu_0(x),
\label{eq:lim1}
\end{align}
\begin{align}
\begin{aligned}
    &\quad\frac{1}{n}\sum_{i=1}^n
f(X_{t_{i-1}^n},{\bm\theta})\,
(Z_{n,i}^\top A Z_{n,i})(Z_{n,i}^\top B Z_{n,i})\,
\exp\!\left(
-\frac{\alpha}{2}Z_{n,i}^\top\Lambda_0\Sigma^{-1}\Lambda_0 Z_{n,i}
\right)
\\\xrightarrow{P}&|\mathscr{Q}|^{-1/2}\times
\Big[
2\operatorname{tr}(A\mathscr{Q}^{-1}B\mathscr{Q}^{-1})
+
\operatorname{tr}(A\mathscr{Q}^{-1})
\operatorname{tr}(B\mathscr{Q}^{-1})
\Big]
\times\int f(x,{\bm\theta})\,d\mu_0(x).
\end{aligned}\label{eq:lim2}
\end{align}
\end{lem}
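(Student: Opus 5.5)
The plan is to split each average into a predictable (conditional-mean) part and a martingale-difference part, using that $Z_{n,i}$ is independent of $\mathcal F_{t_{i-1}^n}$ with $Z_{n,i}\sim\mathcal N_d(0,I_d)$. Write $\mathcal M=\Lambda_0\Sigma^{-1}\Lambda_0$ and $w(z)=\exp(-\tfrac{\alpha}{2}z^\top\mathcal M z)$. For a generic weight $g(z)\in\{1,\ z^\top Az,\ (z^\top Az)(z^\top Bz)\}$ define $\xi_i=f(X_{t_{i-1}^n},\bm\theta)\,g(Z_{n,i})\,w(Z_{n,i})$. Then
\[
\frac1n\sum_{i=1}^n\xi_i=\frac1n\sum_{i=1}^n f(X_{t_{i-1}^n},\bm\theta)\,c_g(\bm\theta)+\frac1n\sum_{i=1}^n\bigl(\xi_i-\mathbb E[\xi_i\mid\mathcal F_{t_{i-1}^n}]\bigr),
\]
where $c_g(\bm\theta)=\mathbb E[g(Z)w(Z)]$ is deterministic. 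I read $\mathscr Q$ as $I_d+\alpha\Lambda_0\Sigma^{-1}\Lambda_0$, since the $\alpha$ must appear for the Gaussian integral to match the exponent.

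\textbf{Step 1: the constants $c_g$.} Completing the square gives $\phi(z)w(z)=|\mathscr Q|^{-1/2}\,\phi_{\mathscr Q^{-1}}(z)$, where $\phi_{\mathscr Q^{-1}}$ is the $\mathcal N(0,\mathscr Q^{-1})$ density. Hence $c_1=|\mathscr Q|^{-1/2}$. The second constant is $c_{z^\top Az}=|\mathscr Q|^{-1/2}\operatorname{tr}(A\mathscr Q^{-1})$. For the quartic weight, Isserlis' theorem for $Y\sim\mathcal N(0,\mathscr Q^{-1})$ gives
\[
\mathbb E[(Y^\top AY)(Y^\top BY)]=\operatorname{tr}(A\mathscr Q^{-1})\operatorname{tr}(B\mathscr Q^{-1})+2\operatorname{tr}(A\mathscr Q^{-1}B\mathscr Q^{-1}).
\]
This yields the three constants in \eqref{eq:lim0}--\eqref{eq:lim2}. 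Each $c_g$ is continuous in $\Sigma$, and $\Sigma$ ranges over a compact set of positive definite matrices.

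\textbf{Step 2: the predictable part.} By Lemma~\ref{lem:ergodic_poly_mv}, $\frac1n\sum_i f(X_{t_{i-1}^n},\bm\theta)\to\int f\,d\mu_0$ uniformly in $\bm\theta$. Since $c_g$ is bounded on $\Theta$, the first term converges uniformly to $c_g(\bm\theta)\int f\,d\mu_0$.

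\textbf{Step 3: the martingale part.} For fixed $\bm\theta$ the summands $\xi_i-\mathbb E[\xi_i\mid\mathcal F_{t_{i-1}^n}]$ are orthogonal. Because $w\le1$ and $g$ is a polynomial in $Z$, the second moment of the average is bounded by
\[
\frac{C}{n^2}\sum_{i=1}^n\mathbb E\bigl[f(X_{t_{i-1}^n},\bm\theta)^2\bigr]\le \frac{C}{n}\sup_t\mathbb E(1+\|X_t\|)^{2\kappa}=O(n^{-1}),
\]
using $f\in\mathscr P$ and (A3). So the martingale part is $o_P(1)$ pointwise.

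\textbf{Step 4: uniformity in $\bm\theta$, the main obstacle.} I would upgrade pointwise convergence through stochastic equicontinuity. The map $\bm\theta\mapsto\xi_i$ is differentiable. Its derivative is bounded by $C(1+\|X_{t_{i-1}^n}\|)^{\kappa'}(1+\|Z_{n,i}\|)^{k}$: the $\bm\theta$-derivatives of $f$ lie in $\mathscr P$, and the $\Sigma$-derivative of $w$ produces $z^\top\Sigma^{-1}S\Sigma^{-1}z\,w(z)$, which is polynomially bounded uniformly over the compact parameter set. Therefore
\[
\frac1n\sum_i\sup_{\bm\theta}\|\partial_{\bm\theta}\xi_i\|=O_P(1),
\]
since its expectation is bounded by (A3) and Gaussian moments. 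The same bound holds for the conditional-expectation terms. This gives a random Lipschitz constant that is $O_P(1)$. Covering compact $\Theta$ by a finite $\delta$-grid, applying the pointwise result on the grid, and letting $\delta\to0$ then gives uniform convergence.
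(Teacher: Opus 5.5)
Your proof is correct, and you are right to read $\mathscr Q$ as $I_d+\alpha\Lambda_0\Sigma^{-1}\Lambda_0$. As printed, without the $\alpha$, the lemma holds only for $\alpha=1$; the Supplementary's $Q=I_d+\alpha M$ and the function $\Psi(\Sigma)$ both carry the $\alpha$.

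The core of your argument is the paper's. You condition on $\mathcal F_{t_{i-1}^n}$, evaluate the tilted Gaussian moments by completing the square and Isserlis' formula, and pass the predictable part to the limit with Lemma~\ref{lem:ergodic_poly_mv}. You differ from the paper in two places.
\begin{itemize}
\item You explicitly remove the centred part $\frac1n\sum_i(\xi_i-\mathbb E[\xi_i\mid\mathcal F_{t_{i-1}^n}])$ by an $L^2$ bound. The paper goes straight from the conditional means to the limit and leaves this step implicit in its appeal to Lemma~4 of Lee and Song.
\item For uniformity, you use a pathwise random Lipschitz constant $\frac1n\sum_i\sup_{\bm\theta}\|\partial_{\bm\theta}\xi_i\|=O_P(1)$ together with a finite grid. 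The paper instead invokes the moment-increment (Kolmogorov-type) criterion of Theorem~20 in Ibragimov--Has'minskii.
\end{itemize}
Your route is more elementary and self-contained. It works precisely because the normalization here is $1/n$, so the envelope average is bounded in probability. The Ibragimov--Has'minskii route is the one that survives normalizations such as $1/(n\sqrt{h_n})$, as in Lemma~\ref{lem:linear_form_uniform}. There a pathwise Lipschitz constant diverges, and one must exploit cancellation in moments instead.

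One point needs tightening. Your argument uses regularity of $A(\bm\theta)$ and $B(\bm\theta)$ that the statement does not supply.
\begin{itemize}
\item Step~2 needs boundedness over $\Theta$, so that $\sup_{\bm\theta}|c_g(\bm\theta)|<\infty$.
\item Step~4 needs differentiability: your derivative bound covers $f$ and $w$ but not $\partial_{\bm\theta}A$ or $\partial_{\bm\theta}B$.
\end{itemize}
There are two fixes.
\begin{itemize}
\item Add this as a hypothesis. It holds in every use of the lemma in the paper, where $A$ and $B$ are smooth functions of $\Sigma$.
\item Avoid it by bilinearity. Write $Z^\top AZ=\sum_{j,k}A_{jk}Z_jZ_k$, and expand the quartic weight likewise. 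Prove the uniform statement for the finitely many fixed weights $Z_jZ_k$ and $Z_jZ_kZ_lZ_m$; Isserlis' formula gives their limits entrywise. Then only $\sup_{\bm\theta}(\|A(\bm\theta)\|+\|B(\bm\theta)\|)<\infty$ is needed.
\end{itemize}
Your grid argument also tacitly needs the deterministic limit to be continuous in $\bm\theta$. This follows from (A2) and the $\mathscr P$-bounds on $\partial_{\bm\theta}f$, but it should be stated.
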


\begin{proof}
We prove \eqref{eq:lim2}; %the proofs of \eqref{eq:lim0} and \eqref{eq:lim1}
 We need to calculate the conditional expectation give the sigma field $\mathcal{F}_{t_{i-1}^n}$. Since $Z_{n,i}$ is independent of $\mathcal{F}_{t_{i-1}^n}$,
\begin{align*}
\mathbb{E}_0\left(h_i({\bm\theta})\middle|\mathcal{F}_{t_{i-1}^n}\right) = &\mathbb{E}_0\!\left[
f(X_{t_{i-1}^n},{\bm\theta})
(Z_{n,i}^\top A Z_{n,i})(Z_{n,i}^\top B Z_{n,i})
\exp\!\left(
-\frac{\alpha}{2}Z_{n,i}^\top\Lambda_0\Sigma^{-1}\Lambda_0 Z_{n,i}
\right)
\;\middle|\;
\mathcal{F}_{t_{i-1}^n}
\right]
\\
&\qquad=
f(X_{t_{i-1}^n},{\bm\theta})\,
\mathbb{E}\!\left[
(Z^\top A Z)(Z^\top B Z)
\exp\!\left(
-\frac{\alpha}{2}Z^\top\Lambda_0\Sigma^{-1}\Lambda_0 Z
\right)
\right],
\end{align*}
where $Z\sim N(0,I_d)$.

Let $\mathscr{Q}=I_d+\Lambda_0\Sigma^{-1}\Lambda_0$.
Then the exponential tilting corresponds to a Gaussian distribution
$N(0,\mathscr{Q}^{-1})$, and
\[
\mathbb{E}\!\left[
\exp\!\left(
-\frac{\alpha}{2}Z^\top\Lambda_0\Sigma^{-1}\Lambda_0 Z
\right)
\right]
=|\mathscr{Q}|^{-1/2}.
\]
Using standard Gaussian moment identities, we obtain
\begin{align*}
&\mathbb{E}\!\left[
(Z^\top A Z)(Z^\top B Z)
\exp\!\left(
-\frac{\alpha}{2}Z^\top\Lambda_0\Sigma^{-1}\Lambda_0 Z
\right)
\right]
\\
&\qquad=
|\mathscr{Q}|^{-1/2}
\Big[
2\operatorname{tr}(A\mathscr{Q}^{-1}B\mathscr{Q}^{-1})
+
\operatorname{tr}(A\mathscr{Q}^{-1})
\operatorname{tr}(B\mathscr{Q}^{-1})
\Big].
\end{align*}
By Lemma \ref{lem:ergodic_poly_mv},
\[
\frac{1}{n}\sum_{i=1}^n f(X_{t_{i-1}^n},{\bm\theta})
\xrightarrow{P}
\int f(x,{\bm\theta})\,d\mu_0(x),
\]
Hence,
\begin{align*}
    \frac{1}{n}\sum_{i=1}^{n}h_i({\bm\theta})\xrightarrow{P}& |\mathscr{Q}|^{-1/2}\times
\Big[
2\operatorname{tr}(A\mathscr{Q}^{-1}B\mathscr{Q}^{-1})
+
\operatorname{tr}(A\mathscr{Q}^{-1})
\operatorname{tr}(B\mathscr{Q}^{-1})
\Big]
\nonumber
\times\int f(x,{\bm\theta})\,d\mu_0(x).
\end{align*}
Hence by using compactness of the parameter space and using Theorem 20 in \cite{IbragimovHasminskii1981}, we get required result exactly following the proof of Lemma 4 of \cite{LeeSong2013}. 
To proof the equations~\ref{eq:lim0} and \ref{eq:lim1} are similar to the prove of above, after deriving the expectations of quadratic forms.
\end{proof}

\begin{lem}\label{lem:linear_form_uniform}
Suppose that Assumptions~\ref{assm:all_assumption} \textnormal{(A1)--(A3)} hold.
Let $f:\mathbb{R}^d\times\Theta\to\mathbb{R}^d$ be a vector-valued function such that
$f(\cdot,{\bm\theta})$ and its derivatives with respect to $x$ and ${\bm\theta}$
belong to the polynomial growth class $\mathscr{P}$.
Then, as $n\to\infty$,
\begin{equation}\label{eq:linear_form_op}
\frac{1}{n\sqrt{h_n}}
\sum_{i=1}^n
f\!\left(X_{t_{i-1}^n},{\bm\theta}\right)^{\!\top}
Z_{n,i}
\exp\!\left(
-\frac{\alpha}{2}
Z_{n,i}^\top \Lambda_0\Sigma^{-1}\Lambda_0 Z_{n,i}
\right)
=
o_P(1),
\end{equation}
uniformly in ${\bm\theta}\in\Theta$.
\end{lem}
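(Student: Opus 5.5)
The plan is to treat the sum in \eqref{eq:linear_form_op} as a martingale transform and prove uniformity by a tightness argument. Write
\[
\xi_{n,i}(\bm\theta)
:=
f\!\left(X_{t_{i-1}^n},\bm\theta\right)^{\!\top}
Z_{n,i}
\exp\!\left(-\tfrac{\alpha}{2} Z_{n,i}^\top \mathcal M(\Sigma) Z_{n,i}\right),
\qquad
\mathcal M(\Sigma)=\Lambda_0\Sigma^{-1}\Lambda_0 .
\]
Two facts drive the proof. First, $Z_{n,i}$ is independent of $\mathcal F_{t_{i-1}^n}$. Second, $Z\mapsto Z\exp(-\tfrac{\alpha}{2}Z^\top\mathcal M Z)$ is odd while the Gaussian law is symmetric. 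Together these give $\mathbb E_0[\xi_{n,i}(\bm\theta)\mid\mathcal F_{t_{i-1}^n}]=0$ for every $\bm\theta$. Hence $S_n(\bm\theta)=\sum_{i\le n}\xi_{n,i}(\bm\theta)$ is a sum of martingale differences with respect to $(\mathcal F_{t_i^n})_i$.

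\textbf{Pointwise step.} By orthogonality of martingale increments,
\[
\mathbb E_0\Bigl[\Bigl(\tfrac{1}{n\sqrt{h_n}}S_n(\bm\theta)\Bigr)^2\Bigr]
=\frac{1}{n^2h_n}\sum_{i=1}^n \mathbb E_0\bigl[\xi_{n,i}(\bm\theta)^2\bigr].
\]
Conditioning on $\mathcal F_{t_{i-1}^n}$ and applying the Gaussian tilting computation behind \eqref{eq:lim1}, with $\alpha$ replaced by $2\alpha$ and $A=ff^\top$, the conditional second moment equals
\[
f^\top\,\mathbb E\!\left[ZZ^\top e^{-\alpha Z^\top\mathcal M Z}\right] f
\;\le\; C\,\|f(X_{t_{i-1}^n},\bm\theta)\|^2 .
\]
Since $f\in\mathscr P$ and (A3) holds, $\sup_i\mathbb E_0\|f(X_{t_{i-1}^n},\bm\theta)\|^2\le C$ uniformly in $\bm\theta$. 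Therefore the second moment is $O\bigl((nh_n)^{-1}\bigr)\to0$, and Chebyshev gives the pointwise $o_P(1)$ statement. The same bound holds uniformly over $\Sigma$ in the compact $\Theta$, because there $\mathcal M(\Sigma)$ stays positive definite and bounded.

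\textbf{Uniformity.} I will follow the route of Lemma~\ref{lem:gaussian_tilt_multivariate}: Theorem~20 of \cite{IbragimovHasminskii1981}, as in Lemma~4 of \cite{LeeSong2013}, with $\Theta$ compact. It suffices to bound the increments of $G_n(\bm\theta)=(n\sqrt{h_n})^{-1}S_n(\bm\theta)$ in $L^2$,
\[
\mathbb E_0\bigl|G_n(\bm\theta)-G_n(\bm\theta')\bigr|^2\le C\,\|\bm\theta-\bm\theta'\|^2 ,
\]
and to have a bound for $\mathbb E_0|G_n(\bm\theta)|^2$. In fact it is simpler to bound the derivative directly. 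The derivative $\partial_{\bm\theta}\xi_{n,i}$ is again a martingale difference. It consists of $\partial_{\bm\theta}f$ times the same odd factor, plus $f^\top Z$ times $-\tfrac{\alpha}{2}Z^\top(\partial_{\bm\theta}\mathcal M)Z\,e^{-\frac{\alpha}{2}Z^\top\mathcal M Z}$, and the latter is still odd in $Z$. So the same computation yields $\mathbb E_0\sup_{\bm\theta}|\partial_{\bm\theta}G_n|^2$-type bounds. Here I use $\partial_{\bm\theta}f\in\mathscr P$ together with Sobolev embedding: take $p$-th moments with $p>\dim\Theta$ and Burkholder's inequality for $\mathbb E|G_n|^p$ and $\mathbb E|\partial G_n|^p$. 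This gives
\[
\mathbb E_0\sup_{\bm\theta}|G_n(\bm\theta)|^p\le C\,(nh_n)^{-p/2}\to0 .
\]

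\textbf{Main obstacle.} The main obstacle is this uniformity step. It requires moment bounds of order $p>\dim\Theta$ for martingale sums via Burkholder's inequality, with constants uniform in $\bm\theta$. These come from the polynomial growth of $f$ and $\partial f$, (A3), and the uniform positive definiteness of $\Sigma$ on the compact $\Theta$; the remaining steps are routine Gaussian moment computations.
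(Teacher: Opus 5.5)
Your proposal is correct and is essentially the paper's argument. Both rest on three ingredients:
\begin{itemize}
\item martingale differences, from the independence of $Z_{n,i}$ and the oddness of $z\mapsto z\,e^{-\frac{\alpha}{2}z^\top\Lambda_0\Sigma^{-1}\Lambda_0 z}$;
\item a conditional second-moment bound of order $(nh_n)^{-1}$, which gives the pointwise statement;
\item uniformity from $p$-th moment bounds ($p>\dim\Theta$) on $\partial_{\bm\theta}$ of the martingale sum.
\end{itemize}
The paper feeds these moment bounds into Theorem~20 of \cite{IbragimovHasminskii1981}, whereas you feed them into a Sobolev embedding; this is an equivalent packaging, though it tacitly needs $\Theta$ to have a regular boundary. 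Your explicit use of Burkholder's inequality is exactly what produces the factor $(nh_n)^{-\dim\Theta}$ in the paper's increment bound. The Cauchy--Jensen estimate the paper writes there ignores the martingale structure and would only give $h_n^{-\dim\Theta}$.
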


\begin{proof}
Define
\[
h_i({\bm\theta})
=
\frac{1}{n\sqrt{h_n}}\,
f\!\left(X_{t_{i-1}^n},{\bm\theta}\right)^{\!\top}
Z_{n,i}
\exp\!\left(
-\frac{\alpha}{2}
Z_{n,i}^\top \Lambda_0\Sigma^{-1}\Lambda_0 Z_{n,i}
\right).
\]
Let $\mathcal F_{t_{i}^n}$  sigma-field generated by $\{X_{t_j^n},\, j\le i\}$.
Since $Z_{n,i}$ is independent of $\mathcal F_{t_{i-1}^n}$ and centered, we have
\[
\mathbb E_0\!\left[h_i({\bm\theta})\mid\mathcal F_{t_{i-1}^n}\right]=0,
\qquad i=1,\dots,n,
\]
so that $\{h_i({\bm\theta}),\mathcal F_{t_{i}^n}\}$ is a martingale difference array.

Moreover,
\begin{align*}
\sum_{i=1}^n
\mathbb E_0\!\left[h_i^2({\bm\theta})\mid\mathcal F_{t_{i-1}^n}\right]
&=
\frac{1}{n^2 h_n}
\sum_{i=1}^n
\mathbb E_0\!\left[
Z_{n,i}^\top f\!\left(X_{t_{i-1}^n},{\bm\theta}\right) f\!\left(X_{t_{i-1}^n},{\bm\theta}\right)^\top Z_{n,i}
\exp\!\left(
-\alpha Z_{n,i}^\top\Lambda_0\Sigma^{-1}\Lambda_0 Z_{n,i}
\right)\Bigg|\mathcal{F}_{t_{i-1}^n}
\right] \\
&=
\frac{|\widetilde{\mathscr Q}|^{-1/2}}{n^2 h_n}
\sum_{i=1}^n
f^\top\!\left(X_{t_{i-1}^n},{\bm\theta}\right)
f\!\left(X_{t_{i-1}^n},{\bm\theta}\right),
\end{align*}
where
\[
\widetilde{\mathscr Q}
=
I_d+2\Lambda_0\Sigma^{-1}\Lambda_0 .
\]
By ergodicity and the polynomial growth of $f$, the right-hand side is
$o_P(1)$, implying \eqref{eq:linear_form_op} for each fixed ${\bm\theta}$.

To establish uniform convergence, we verify conditions (28)–(29) of
Theorem~20 in \cite{IbragimovHasminskii1981}.
First, differentiating $h_i({\bm\theta})$ with respect to ${\bm\theta}$ yields
\begin{align*}
\|\nabla_{\bm\theta} h_i({\bm\theta})\|
&\le
\frac{1}{n\sqrt{h_n}}
\left\|\nabla_{\bm\theta} f(X_{t_{i-1}^n},{\bm\theta})\right\|
\|Z_{n,i}\|
\exp\!\left(
-\frac{\alpha}{2}
Z_{n,i}^\top\Lambda_0\Sigma^{-1}\Lambda_0 Z_{n,i}
\right) \\
&\le
\frac{C_\alpha}{n\sqrt{h_n}}
\left(\|Z_{n,i}\|+\|Z_{n,i}\|^3\right)
\left(1+\|X_{t_{i-1}^n}\|\right)^C,
\end{align*}
where the last inequality follows from Assumptions~(A1)--(A3) and Gaussian
moment bounds. Hence,
\begin{equation}\label{eq:grad_bound_linear}
\sup_{{\bm\theta}\in\Theta}\|\nabla_{\bm\theta} h_i({\bm\theta})\|
\le
\frac{C_\alpha}{n\sqrt{h_n}}
\left(\|Z_{n,i}\|+\|Z_{n,i}\|^3\right)
\left(1+\|X_{t_{i-1}^n}\|\right)^C .
\end{equation}

Let $\mathscr D=\dim(\Theta)$.
By the mean value theorem, for each $i$ there exists a point
\[
{\bm\theta}^\ast
\in
\left\{
\lambda{\bm\theta}_1+(1-\lambda){\bm\theta}_2:\ \lambda\in(0,1)
\right\}
\subset\Theta
\]
such that
\[
h_i({\bm\theta}_1)-h_i({\bm\theta}_2)
=
\nabla_{\bm\theta} h_i({\bm\theta}^\ast)\,({\bm\theta}_1-{\bm\theta}_2).
\]
Hence, by Cauchy's inequality and Jensen's inequality,
\begin{align*}
\mathbb E_0
\left|
\sum_{i=1}^n\left(h_i({\bm\theta}_1)-h_i({\bm\theta}_2)\right)
\right|^{2\mathscr D}
&\le
|{\bm\theta}_1-{\bm\theta}_2|^{2\mathscr D}
\,
\mathbb E_0
\left|
\sum_{i=1}^n \nabla_{\bm\theta} h_i({\bm\theta}^\ast)
\right|^{2\mathscr D} \\[0.3em]
&\le
\frac{C_\alpha}{n^{\mathscr D} h_n^{\mathscr D}}
|{\bm\theta}_1-{\bm\theta}_2|^{2\mathscr D}
\,
\mathbb E_0
\left[
\frac{1}{n}\sum_{i=1}^n
\left(\|Z_{n,i}\|+\|Z_{n,i}\|^3\right)^2
\left(1+\|X_{t_{i-1}^n}\|\right)^{2C}
\right]^{\mathscr D} \\[0.3em]
&\le
\frac{C}{n^{\mathscr D} h_n^{\mathscr D}}
|{\bm\theta}_1-{\bm\theta}_2|^{2\mathscr D},
\end{align*}

where the last inequality follows from the existence of all Gaussian
moments and the polynomial moment bounds implied by
Assumptions~\textnormal{(A1)--(A3)}.
Thus, condition (29) of Theorem~20 in \cite{IbragimovHasminskii1981} holds,
and the proof is complete.
\end{proof}
\begin{lem}\label{lem:app_sigma_min}
Let $\alpha>0$, $\Sigma \in \mathbb S_{++}^d$ denote the space of
$d\times d$ symmetric positive definite matrices, and let
$\Lambda_0 := \Sigma_0^{1/2} \in \mathbb R^{d\times d}$ be the
symmetric positive definite square root of $\Sigma_0$.
Consider the function $\Psi(\Sigma)$ defined in
equation~\eqref{eqn:Psi_Sigma}.
Then $\Psi(\Sigma)$ admits a unique minimizer given by
\[
\Sigma^\star = \Lambda_0 \Lambda_0 = \Sigma_0 .
\]
\end{lem}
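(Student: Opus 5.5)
The plan is to recognise $\Psi(\Sigma)$ as an affine transformation of a genuine density power divergence between two centred Gaussian densities, and then use the fact that $d_\alpha(g,f)\ge 0$ with equality iff $f=g$ a.e. Let $g=\phi(\cdot;0,\Sigma_0)$ and $f_\Sigma=\phi(\cdot;0,\Sigma)$ denote the $\mathcal N_d(0,\Sigma_0)$ and $\mathcal N_d(0,\Sigma)$ densities. The key computations are the two Gaussian integrals
\[
\int f_\Sigma^{1+\alpha}(z)\,dz=(2\pi)^{-d\alpha/2}(1+\alpha)^{-d/2}|\Sigma|^{-\alpha/2},
\]
\[
\int g(z)f_\Sigma^{\alpha}(z)\,dz=(2\pi)^{-d\alpha/2}|\Sigma|^{-\alpha/2}\bigl|I_d+\alpha\Lambda_0\Sigma^{-1}\Lambda_0\bigr|^{-1/2}.
\]
For the second, I will substitute $z=\Lambda_0 w$ with $w\sim\mathcal N_d(0,I_d)$. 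The integral then becomes $(2\pi)^{-d\alpha/2}|\Sigma|^{-\alpha/2}\,\mathbb E\exp\{-\tfrac{\alpha}{2}w^\top\Lambda_0\Sigma^{-1}\Lambda_0w\}$. This expectation is the same Gaussian moment generating identity already used in Lemma~\ref{lem:gaussian_tilt_multivariate}. Comparing with \eqref{eqn:Psi_Sigma} gives
\[
\Psi(\Sigma)=(2\pi)^{d\alpha/2}\Bigl[d_\alpha(g,f_\Sigma)-\tfrac1\alpha\int g^{1+\alpha}(z)\,dz\Bigr],
\]
where the subtracted term does not depend on $\Sigma$.

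Next I will show that $d_\alpha(g,f)\ge0$, with equality iff $f=g$ a.e. This holds pointwise: for fixed $a\ge0$, the map $b\mapsto b^{1+\alpha}-(1+\tfrac1\alpha)ab^{\alpha}+\tfrac1\alpha a^{1+\alpha}$ on $b\ge0$ has derivative $(1+\alpha)b^{\alpha-1}(b-a)$. Its unique minimum is therefore at $b=a$, where it vanishes. Equivalently, this is Young's inequality $ab^\alpha\le \tfrac{1}{1+\alpha}a^{1+\alpha}+\tfrac{\alpha}{1+\alpha}b^{1+\alpha}$, which is strict unless $a=b$. Integrating the pointwise bound shows $d_\alpha(g,f_\Sigma)\ge0$. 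Equality forces $f_\Sigma=g$ Lebesgue-a.e., and by continuity of both densities this means everywhere.

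Two centred Gaussian densities coincide iff their covariance matrices coincide. For instance, evaluating at $z=0$ gives $|\Sigma|=|\Sigma_0|$, and then comparing the quadratic forms in the exponents gives $\Sigma^{-1}=\Sigma_0^{-1}$. Hence $\Psi(\Sigma)\ge\Psi(\Sigma_0)$ for all $\Sigma\in\mathbb S^d_{++}$, with equality iff $\Sigma=\Sigma_0=\Lambda_0\Lambda_0$. This gives both existence and uniqueness of the minimizer.

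The only step needing care is the Gaussian integral for $\int g f_\Sigma^\alpha$, in particular getting the determinant factor $|I_d+\alpha\Lambda_0\Sigma^{-1}\Lambda_0|^{-1/2}$ exactly as it appears in \eqref{eqn:Psi_Sigma}. I will handle it by completing the square after the substitution $z=\Lambda_0w$, using that $\Lambda_0$ is symmetric. Everything else is elementary.
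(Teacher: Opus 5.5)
Your proof is correct, but it takes a different route from the paper.

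\textbf{The paper's route.} The paper substitutes $\Sigma=\Lambda_0 M\Lambda_0$. This writes $\Psi$, up to the positive factor $|\Lambda_0|^{-\alpha}$, as a function $\Phi$ of the eigenvalues $m_1,\dots,m_d$ of $M$, and the paper then solves the first-order conditions. Stationarity in $m_i$ depends on $i$ only through $1/(m_i+\alpha)$, so every critical point has $M=cI_d$. A one-variable computation then gives $c=1$ with $\Phi''(1)>0$. This is a critical-point analysis on the open, unbounded set $(0,\infty)^d$. To conclude global and unique minimality, one still needs the infimum to be attained in the interior. For instance, $\Phi>0$ as soon as some $m_i$ is small, and $\Phi\to 0$ when eigenvalues blow up, while $\Phi(I_d)=-(1+\alpha)^{-d/2}/\alpha<0$. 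The paper leaves this step implicit.

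\textbf{What your route buys.} Your identification $\Psi(\Sigma)-\Psi(\Sigma_0)=(2\pi)^{d\alpha/2}d_\alpha(g,f_\Sigma)$, combined with the pointwise Young inequality, gives global minimality and uniqueness in one stroke. It needs no calculus and no existence or boundary argument. It also explains the result conceptually, as Fisher consistency of the minimum DPD functional for the centred Gaussian scale family.

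\textbf{What the paper's route buys.} The paper's computation is purely finite-dimensional and self-contained. It produces explicit stationarity equations, which are closer in spirit to the curvature computations used later in the normality proofs.

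\textbf{A small caution.} For the tilting identity, rely on your direct completing-the-square computation, or on the supplementary identity with $Q=I_d+\alpha M$. Do not cite Lemma~\ref{lem:gaussian_tilt_multivariate} verbatim: its $\mathscr{Q}=I_d+\Lambda_0\Sigma^{-1}\Lambda_0$ omits the factor $\alpha$.
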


\begin{proof}
Define the change of variables
\[
M := \Lambda_0^{-\top}\Sigma\Lambda_0^{-1},
\qquad
\Sigma = \Lambda_0 M \Lambda_0 .
\]
Since $\Lambda_0$ is invertible, this is a bijection on $\mathbb S_{++}^d$.
Moreover,
\[
|\Sigma| = |\Lambda_0|^2 |M|,
\qquad
\Lambda_0\Sigma^{-1}\Lambda_0 = M^{-1}.
\]
Hence, minimizing $\Psi(\Sigma)$ is equivalent to minimizing
\[
\Phi(M)
=
\frac{1}{(1+\alpha)^{d/2}}|M|^{-\alpha/2}
-
\left(1+\frac{1}{\alpha}\right)
|M|^{-\alpha/2}
|I_d+\alpha M^{-1}|^{-1/2}.
\]

Let $m_1,\dots,m_d>0$ be the eigenvalues of $M$. Then
\[
|M|=\prod_{i=1}^d m_i,
\qquad
|I_d+\alpha M^{-1}|=\prod_{i=1}^d\left(1+\frac{\alpha}{m_i}\right),
\]
and
\[
\Phi(M)
=
\prod_{i=1}^d m_i^{-\alpha/2}
\left[
\frac{1}{(1+\alpha)^{d/2}}
-
\left(1+\frac{1}{\alpha}\right)
\prod_{i=1}^d\left(1+\frac{\alpha}{m_i}\right)^{-1/2}
\right].
\]

Define
\[
P=\prod_{j=1}^d m_j^{-\alpha/2},
\qquad
Q=\prod_{j=1}^d\left(1+\frac{\alpha}{m_j}\right)^{-1/2}.
\]
Then
\[
\Phi=\frac{P}{(1+\alpha)^{d/2}}-\left(1+\frac{1}{\alpha}\right)PQ.
\]

For each $i$,
\[
\frac{\partial P}{\partial m_i}=-\frac{\alpha}{2m_i}P,
\qquad
\frac{\partial Q}{\partial m_i}
=
\frac{\alpha}{2m_i^2(1+\alpha/m_i)}Q.
\]
Thus
\[
\frac{\partial\Phi}{\partial m_i}
=
-\frac{\alpha}{2m_i}\frac{P}{(1+\alpha)^{d/2}}
-
\left(1+\frac{1}{\alpha}\right)
\left[
-\frac{\alpha}{2m_i}PQ
+
\frac{\alpha}{2m_i^2(1+\alpha/m_i)}PQ
\right].
\]

Setting $\partial\Phi/\partial m_i=0$ and dividing by
$\frac{\alpha}{2m_i}P>0$, we obtain
\begin{align}
   -\frac{1}{(1+\alpha)^{d/2}}
+
\left(1+\frac{1}{\alpha}\right)
\left[
Q-\frac{Q}{m_i(1+\alpha/m_i)}
\right]
=0.\label{eqn:psi_inbetween} 
\end{align}

Equation \eqref{eqn:psi_inbetween} must hold for every $i$.
Since the left-hand side depends on $i$ only through $m_i$,
and the function
\[
m \longmapsto \frac{1}{m(1+\alpha/m)}=\frac{1}{m+\alpha}
\]
is strictly decreasing on $(0,\infty)$, the equality can hold for all $i$
only if
\[
m_1=m_2=\cdots=m_d=c.
\]

Hence $M=cI_d$. Substituting into $\Phi$,
\[
\Phi(c)
=
c^{-d\alpha/2}
\left[
(1+\alpha)^{-d/2}
-
\left(1+\frac{1}{\alpha}\right)
\left(1+\frac{\alpha}{c}\right)^{-d/2}
\right].
\]
A direct differentiation yields $\Phi'(c)=0$ if and only if $c=1$,
and $\Phi''(1)>0$. Therefore $M^\star=I_d$.

Finally,
\[
\Sigma^\star=\Lambda_0 M^\star\Lambda_0
=\Lambda_0\Lambda_0,
\]
which completes the proof.
\end{proof}
\begin{lem}\label{lem:quad_form_mean}
Suppose that Assumption~\ref{assm:all_assumption} (A1)--(A3) hold.
Let $A(x,{\bm\theta})$ be a $d\times d$ matrix-valued function whose entries,
as well as their derivatives with respect to $x$ and ${\bm\theta}$,
belong to the class $\mathscr{P}$.
Define
\[
\mathscr{Q} = I_d + \Lambda_0 \Sigma^{-1} \Lambda_0 .
\]
Then, uniformly in ${\bm\theta}$,
\begin{equation}\label{eq:lim4}
\frac{1}{n}\sum_{i=1}^n
Z_{n,i}^\top A(X_{t_{i-1}^n},{\bm\theta}) Z_{n,i}\,
\exp\!\left(
-\frac{\alpha}{2}Z_{n,i}^\top\Lambda_0\Sigma^{-1}\Lambda_0 Z_{n,i}
\right)
\xrightarrow{P}
|\mathscr{Q}|^{-1/2}
\int \operatorname{tr}\!\big(A(x,{\bm\theta})\mathscr{Q}^{-1}\big)\,d\mu_0(x).
\end{equation}
\end{lem}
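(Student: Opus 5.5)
The argument mirrors the proof of Lemma~\ref{lem:gaussian_tilt_multivariate}. The only new feature is that the matrix $A$ now depends on $X_{t_{i-1}^n}$. Conditioning on $\mathcal F_{t_{i-1}^n}$ removes this dependence, because $Z_{n,i}$ is independent of $\mathcal F_{t_{i-1}^n}$ while $A(X_{t_{i-1}^n},\bm\theta)$ is measurable with respect to it. Set
\[
g_i(\bm\theta)=Z_{n,i}^\top A(X_{t_{i-1}^n},\bm\theta) Z_{n,i}\exp\!\left(-\tfrac{\alpha}{2}Z_{n,i}^\top\Lambda_0\Sigma^{-1}\Lambda_0 Z_{n,i}\right).
\]
Computing the Gaussian tilt explicitly (as in the proof of Lemma~\ref{lem:gaussian_tilt_multivariate}, with $\mathscr Q$ playing the role of the precision matrix of the tilted law) gives
\[
\mathbb E_0\!\left[g_i(\bm\theta)\mid\mathcal F_{t_{i-1}^n}\right]
=|\mathscr Q|^{-1/2}\operatorname{tr}\!\big(A(X_{t_{i-1}^n},\bm\theta)\mathscr Q^{-1}\big).
\]
This uses $\mathbb E_N[Z^\top A Z]=\operatorname{tr}(A\,\mathrm{Cov})$ under the tilted normal law, together with the normalising constant $|\mathscr Q|^{-1/2}$ of the tilt.

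\textbf{Step 1: split into a mean part and a martingale part.} Write $\frac1n\sum_i g_i = \frac1n\sum_i \mathbb E_0[g_i\mid\mathcal F_{t_{i-1}^n}] + \frac1n\sum_i \xi_i$, where $\xi_i=g_i-\mathbb E_0[g_i\mid\mathcal F_{t_{i-1}^n}]$ are martingale differences.

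\textbf{Step 2: the mean part.} The function $x\mapsto |\mathscr Q|^{-1/2}\operatorname{tr}(A(x,\bm\theta)\mathscr Q^{-1})$ is a linear combination of entries of $A$ with coefficients that are continuous in $\Sigma$ and bounded on the compact set $\Theta$. It therefore belongs to $\mathscr P$, and so do its derivatives. Lemma~\ref{lem:ergodic_poly_mv} then gives uniform convergence of the first average to $|\mathscr Q|^{-1/2}\int\operatorname{tr}(A(x,\bm\theta)\mathscr Q^{-1})\,d\mu_0(x)$.

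\textbf{Step 3: the martingale part, pointwise.} The conditional variance satisfies
\[
\mathbb E_0[\xi_i^2\mid\mathcal F_{t_{i-1}^n}]\le \mathbb E_0[g_i^2\mid\mathcal F_{t_{i-1}^n}]\le C\,\|Z\|^4\text{-moment}\times (1+\|X_{t_{i-1}^n}\|)^{C},
\]
which has bounded expectation by (A3). Hence $\mathbb E_0\big(\frac1n\sum_i\xi_i\big)^2=O(n^{-1})\to0$ for each fixed $\bm\theta$.

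\textbf{Step 4: uniformity in $\bm\theta$ (the main obstacle).} I would follow the proof of Lemma~\ref{lem:linear_form_uniform} and verify the conditions of Theorem~20 in \cite{IbragimovHasminskii1981}. Differentiating $g_i$ in $\bm\theta$ produces two kinds of terms: $\nabla_{\bm\theta}A$, which lies in $\mathscr P$ by assumption, and derivatives of $\Sigma^{-1}$ in the exponent, which bring down further quadratic forms in $Z_{n,i}$. The exponential factor is at most $1$, so
\[
\sup_{\bm\theta}\|\nabla_{\bm\theta}g_i\|\le C(\|Z_{n,i}\|^2+\|Z_{n,i}\|^4)(1+\|X_{t_{i-1}^n}\|)^C .
\]
Using the mean value theorem together with Jensen's inequality then yields
\[
\mathbb E_0\Big|\tfrac1n\sum_i(\xi_i(\bm\theta_1)-\xi_i(\bm\theta_2))\Big|^{2\mathscr D}\le C|\bm\theta_1-\bm\theta_2|^{2\mathscr D},
\]
where $\mathscr D=\dim\Theta$, by (A3) and Gaussian moments. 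Combined with the pointwise $L^2$ convergence from Step 3, this gives tightness in $C(\Theta)$ and hence uniform convergence to zero. The delicate point is keeping all constants uniform over the compact set $\Theta$. This requires $\Sigma$ to stay in a compact subset of $\mathbb S^d_{++}$, so that $\Sigma^{-1}$ and $|\mathscr Q|^{-1/2}$ are bounded, and I would state this implicitly as part of the compactness of $\Theta$. Combining Steps 2–4 proves \eqref{eq:lim4}.
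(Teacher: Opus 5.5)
Your argument follows the paper's route. You condition on $\mathcal F_{t_{i-1}^n}$ to get the tilted Gaussian mean, control the conditional second moment, and bound $\sup_{\bm\theta}\|\nabla_{\bm\theta}g_i\|$ by $C(1+\|Z_{n,i}\|^2)^2(1+\|X_{t_{i-1}^n}\|)^C$. Uniformity then comes from Theorem~20 of \cite{IbragimovHasminskii1981}, as in Lemma~4 of \cite{LeeSong2013}. Your split into a conditional-mean part (handled by Lemma~\ref{lem:ergodic_poly_mv}) and a martingale part with $O(n^{-1})$ variance, with tightness from your increment bound, is just a more explicit write-up of the same argument.

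There is, however, one step that is false as written: the tilt identity. For $Z\sim\mathcal N_d(0,I_d)$ and symmetric $M\succeq 0$,
\[
\mathbb E\Bigl[Z^\top A Z\,\exp\bigl(-\tfrac{\alpha}{2}Z^\top M Z\bigr)\Bigr]
=|I_d+\alpha M|^{-1/2}\operatorname{tr}\bigl(A(I_d+\alpha M)^{-1}\bigr).
\]
With $M=\Lambda_0\Sigma^{-1}\Lambda_0$, the tilted law is therefore $\mathcal N_d\bigl(0,(I_d+\alpha\Lambda_0\Sigma^{-1}\Lambda_0)^{-1}\bigr)$, not $\mathcal N_d(0,\mathscr Q^{-1})$. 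Consequently your formula for $\mathbb E_0[g_i(\bm\theta)\mid\mathcal F_{t_{i-1}^n}]$, with normalising constant $|\mathscr Q|^{-1/2}$, is correct only when $\alpha=1$. The same holds for \eqref{eq:lim4} as literally stated.

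This slip is inherited from the statement (and from Lemma~\ref{lem:gaussian_tilt_multivariate}), and the paper's own proof repeats it. Elsewhere the paper uses the correct matrix: $Q=I_d+\alpha M$ in \eqref{eq:exp_quad_identity}, and $|I_d+\alpha\Lambda_0\Sigma^{-1}\Lambda_0|^{-1/2}$ in \eqref{eqn:Psi_Sigma}. At $\Sigma=\Sigma_0$ the latter gives the constant $(1+\alpha)^{-d/2}$ used in the later proofs. Since you explicitly appealed to the normalising constant of the tilt, that is where the error should have been caught.

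Replace $\mathscr Q$ by $I_d+\alpha\Lambda_0\Sigma^{-1}\Lambda_0$, and use $I_d+2\alpha\Lambda_0\Sigma^{-1}\Lambda_0$ for the squared summands. With those changes your Steps~1--4 go through unchanged.

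A minor point: $|\mathscr Q|^{-1/2}\le 1$ and $\|\mathscr Q^{-1}\|\le 1$ hold automatically because $\mathscr Q\succeq I_d$. What compactness of $\Theta$ inside $\mathbb S^d_{++}$ is actually needed for is the boundedness of $\Sigma^{-1}$ and $\partial\Sigma^{-1}/\partial\sigma_{rs}$ in your gradient bound.
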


\begin{proof}
Define
\[
h_{n,i}
=
\frac{1}{n}
Z_{n,i}^\top A(X_{t_{i-1}^n},{\bm\theta}) Z_{n,i}\,
\exp\!\left(
-\frac{\alpha}{2}Z_{n,i}^\top\Lambda_0\Sigma^{-1}\Lambda_0 Z_{n,i}
\right).
\]

Since $Z_{n,i}$ is independent of $\mathcal{F}_{t_{i-1}^n}$, it follows that
\[
\mathbb{E}_0\!\left[
h_{n,i}
\;\middle|\;
\mathcal{F}_{t_{i-1}^n}
\right]
=
|\mathscr{Q}|^{-1/2}
\operatorname{tr}\!\big(
A(X_{t_{i-1}^n},{\bm\theta})\mathscr{Q}^{-1}
\big).
\]

Moreover, letting
\[
\widetilde{\mathscr{Q}} = I_d + 2\Lambda_0\Sigma^{-1}\Lambda_0,
\]
a direct Gaussian calculation yields
\begin{align*}
\mathbb{E}_0\!\left[
\sum_{i=1}^n h_{n,i}^2
\;\middle|\;
\mathcal{F}_{t_{i-1}^n}
\right]
&=
\frac{1}{n^2}
\sum_{i=1}^n
|\widetilde{\mathscr{Q}}|^{-1/2}
\\
&\quad\times
\Big[
2\operatorname{tr}\!\big(
A(X_{t_{i-1}^n},{\bm\theta})
\widetilde{\mathscr{Q}}^{-1}
A(X_{t_{i-1}^n},{\bm\theta})
\widetilde{\mathscr{Q}}^{-1}
\big)
\\
&\qquad\quad
+
\operatorname{tr}\!\big(
A(X_{t_{i-1}^n},{\bm\theta})
\widetilde{\mathscr{Q}}^{-1}
\big)^2
\Big].
\end{align*}

By the polynomial growth of $A(\cdot,{\bm\theta})$ and the ergodicity of
$\{X_t\}$ under Assumptions~\ref{assm:all_assumption}~(A1)--(A3),
the above conditional second moment is of order $O(1)$.

Furthermore, since the derivatives of $A(x,{\bm\theta})$ belong to $\mathscr{P}$,
there exists a constant $C_\alpha>0$ such that
\begin{equation*}\label{eq:grad_bound}
\sup_{{\bm\theta}}
\big|
\nabla h_{n,i}({\bm\theta})
\big|
\le
\frac{C_\alpha}{n}
\big(1+\|Z_{n,i}\|^2\big)^2
\big(1+\|X_{t_{i-1}^n}\|\big)^C .
\end{equation*}

Therefore, using the same argument as in Lemma~4 of
\cite{LeeSong2013}, the moment conditions required for uniform convergence
in Theorem~20 of \cite{IbragimovHasminskii1981} are satisfied.
This completes the proof.
\end{proof}

% \begin{lem}
%     Under similar setup Assumptions~\ref{assm:all_assumption} (A1)--(A3), we have following  uniform convergence over ${\bm\theta}$, and let $f:R^d\times \Theta\to R^d $ and along with $\|f\|$ and also it's derivative also in the polynimial class $\mathscr{P}$ then, \begin{align}
%         \frac{1}{n\sqrt{h_n}} \sum_{i=1}^n f\left(X_{t_{i-1}^n}, \eta\right)^\top Z_{n,i} \mathrm{e}^{-\frac{\alpha}{2}  Z_{n,i}^\top \Lambda_0\Sigma^{-1}\Lambda_0}=o_P(1)
%     \end{align}
%     uniformly over ${\bm\theta}$.
% \end{lem}

\begin{lem}\label{lem:u_v_deriv_conver_p}
Suppose that Assumption~\ref{assm:all_assumption} and conditions \emph{(A1)--(A7)} are satisfied, and assume that
\[
n h_n^2 \to 0 .
\]
Then, for each index $s$ and each pair $(u,v)$, the following convergences in probability hold:
\begin{align}
\frac{1}{\sqrt{nh_n}}\sum_{i=1}^n
\left.\frac{\partial V_{n,i}^{\alpha}}{\partial \beta_s}\right|_{({\bm\beta_0},\Sigma_0)}
-
\frac{1}{\sqrt{nh_n}}\sum_{i=1}^n
\left.\frac{\partial U_{n,i}^{\alpha}}{\partial \beta_s}\right|_{({\bm\beta_0},\Sigma_0)}
&\xrightarrow{{P}} 0,
\label{eqn:lemm_U_V_P_conv_beta}
\\
\frac{1}{\sqrt{n}}\sum_{i=1}^n
\left.\frac{\partial V_{n,i}^{\alpha}}{\partial \sigma_{uv}}\right|_{({\bm\beta_0},\Sigma_0)}
-
\frac{1}{\sqrt{n}}\sum_{i=1}^n
\left.\frac{\partial U_{n,i}^{\alpha}}{\partial \sigma_{uv}}\right|_{({\bm\beta_0},\sigma_0)}
&\xrightarrow{{P}} 0 .
\label{eqn:lemm_U_V_P_conv_Sigma}
\end{align}
\end{lem}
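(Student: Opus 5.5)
The plan is to compare $\partial V_{n,i}^\alpha$ and $\partial U_{n,i}^\alpha$ at $\bm\theta_0$ directly. At the true parameter $D_i(\bm\beta_0)=0$, so
\[
V_{n,i}^\alpha(\bm\theta_0)-U_{n,i}^\alpha(\bm\theta_0)
=-\Bigl(1+\tfrac1\alpha\Bigr)|\Sigma_0|^{-\alpha/2}
e^{-\frac{\alpha}{2}\|Z_{n,i}\|^2}\bigl(e^{-K_i(\bm\theta_0)}-1\bigr).
\]
At $\bm\theta_0$ the function $K_i$ reduces to $\frac{\alpha}{\sqrt{h_n}}Z_{n,i}^\top\Lambda_0^{-1}\Delta_{n,i}+\frac{\alpha}{2h_n}\Delta_{n,i}^\top\Sigma_0^{-1}\Delta_{n,i}$. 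The same structure carries over to the derivatives, which we write out from the explicit formulas \eqref{eqn:1st_deriv_beta} and \eqref{eqn:1st_deriv_sigma}.

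\textbf{Drift score.} The $\beta_s$-derivative of $V$ at $\bm\theta_0$ equals $-(1+\alpha)|\Sigma_0|^{-\alpha/2}\,(\partial_{\beta_s}a_{i-1})^\top\Sigma_0^{-1}R_i\,\exp(-\frac{\alpha}{2h_n}R_i^\top\Sigma_0^{-1}R_i)$, where $R_i=\sqrt{h_n}\Lambda_0Z_{n,i}+\Delta_{n,i}$. The $U$-version is the same expression without $\Delta_{n,i}$. We split the difference into two pieces:
\begin{itemize}
\item[(a)] $(\partial_{\beta_s}a_{i-1})^\top\Sigma_0^{-1}\Delta_{n,i}\,e^{-\frac{\alpha}{2}\|Z_{n,i}\|^2}e^{-K_i}$;
\item[(b)] $\sqrt{h_n}\,(\partial_{\beta_s}a_{i-1})^\top\Sigma_0^{-1}\Lambda_0Z_{n,i}\,e^{-\frac{\alpha}{2}\|Z_{n,i}\|^2}\bigl(e^{-K_i}-1\bigr)$.
\end{itemize}
For (a), since $e^{-K_i}\le e^{|K_i|}$ and $\sup\max|K_i|=o_P(1)$, Lemma~\ref{lem:delta-moment-mv} with (A3) gives $\mathbb E|\text{(a)}|\le Ch_n^{3/2}$. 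After dividing the sum by $\sqrt{nh_n}$ the bound is $O(\sqrt{n}\,h_n)=O(\sqrt{nh_n^2})\to0$ by Assumption~\ref{assm:h_n^2}. For (b), we bound $|e^{-K_i}-1|\le|K_i|e^{|K_i|}$. The term $\frac{1}{2h_n}\Delta^\top\Sigma_0^{-1}\Delta$ contributes $O(h_n^{2})$ per summand, which is negligible. The linear term $\frac{\alpha}{\sqrt{h_n}}Z^\top\Lambda_0^{-1}\Delta$, multiplied by $\sqrt{h_n}Z$, gives summands of order $h_n^{3/2}$ in $L^1$, again $O(\sqrt{nh_n^2})$. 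To make the expectation bounds rigorous despite the random factor $e^{|K_i|}$, we intersect with the event $\{\max_i|K_i|\le1\}$, whose probability tends to one by Lemma~\ref{lem:poly-negligible-mv}. On this event the factor is bounded by $e$, and Markov's inequality then yields convergence in probability.

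\textbf{Diffusion score.} For the $\sigma_{uv}$-derivative the normalization is only $1/\sqrt n$, so a crude $L^1$ bound gives $\sqrt n\,h_n$ per term; this is again $o(1)$ by $nh_n^2\to0$. The derivative contains pieces $|\Sigma|^{-\alpha/2}\operatorname{tr}(\Sigma^{-1}S_{uv})\times(\cdot)$ and $h_n^{-1}R_i^\top\Sigma^{-1}S_{uv}\Sigma^{-1}R_i\times\exp(\cdot)$. Expanding $R_i$, the difference from the $U$-version consists of:
\begin{itemize}
\item[(i)] the factor $e^{-K_i}-1$ multiplied by polynomials in $Z_{n,i}$;
\item[(ii)] cross terms $h_n^{-1/2}Z^\top\Lambda_0\Sigma_0^{-1}S_{uv}\Sigma_0^{-1}\Delta$;
\item[(iii)] terms $h_n^{-1}\Delta^\top(\cdot)\Delta$.
\end{itemize}

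\textbf{Main obstacle.} Terms (i) and (ii) are only $O_{L^1}(h_n)$ per summand, and summing $n$ of them with weight $1/\sqrt n$ gives $\sqrt n h_n$. This is fine here, but it shows there is no slack; the per-summand rate $h_n$ itself must be established carefully. The difficulty is that $Z_{n,i}$ and $\Delta_{n,i}$ are dependent, so we cannot use a martingale-difference argument to gain a further $\sqrt n$. Our plan is to rely only on the conditional moment bound of Lemma~\ref{lem:delta-moment-mv}, applied through Cauchy--Schwarz,
\[
\mathbb E\bigl[\|Z_{n,i}\|^k\|\Delta_{n,i}\|\mid\mathcal F_{t_{i-1}^n}\bigr]\le C h_n^{3/2}(1+\|X_{t_{i-1}^n}\|)^{C},
\]
so that (ii) is $O(h_n)$. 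If a finer argument were needed, we would decompose $\Delta_{n,i}$ into $\int_{t_{i-1}}^{t_i}\nabla a\,\Lambda_0(W_s-W_{t_{i-1}})\,ds$ plus an $O(h_n^2)$ part, as in \cite{LeeSong2013}. That decomposition exhibits the leading cross term as an $O(h_n)$ quantity with explicit conditional mean, which can be controlled by Lemma~\ref{lem:ergodic_poly_mv}. Finally, the uniform moment bounds from (A3) make all constants independent of $i$, which completes both \eqref{eqn:lemm_U_V_P_conv_beta} and \eqref{eqn:lemm_U_V_P_conv_Sigma}.
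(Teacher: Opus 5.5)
Your proposal is correct and follows essentially the same route as the paper. You split the score difference into a $\Delta_{n,i}$-term and an $(e^{-K_i}-1)$-term, bound each summand in $L^1$ by $O(h_n^{3/2})$ in the drift case and $O(h_n)$ in the diffusion case via Lemma~\ref{lem:delta-moment-mv} and Cauchy--Schwarz, and conclude that the total is $O(\sqrt{n}\,h_n)\to 0$ under $nh_n^2\to0$. The differences are only technical: you take unconditional expectations on the event $\{\max_i|K_i|\le 1\}$ and use Markov's inequality, where the paper takes conditional expectations and invokes Lemma~9 of \cite{GenonCatalotJacod1993}. You also actually carry out the diffusion-score case, which the paper omits.
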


\begin{proof}
The proof follows the same line of argument as Lemma~6 in \cite{LeeSong2013}.
We establish~\eqref{eqn:lemm_U_V_P_conv_beta}.

From the expressions of the first derivatives of $V_{n,i}^{\alpha}$ and $U_{n,i}^{\alpha}$ given in
\eqref{eqn:1st_deriv_beta} and \eqref{eqn:1st_deriv_u_beta}, respectively, define
\begin{align*}
V_{n,i}'(\beta_s;{\bm\beta_0},\Sigma_0)
&:=
\left.\frac{\partial V_{n,i}^{\alpha}}{\partial \beta_s}\right|_{({\bm\beta_0},\Sigma_0)}
\\
&=
-(1+\alpha)|\Sigma_0|^{-\alpha/2}
\exp\!\left(-\frac{\alpha}{2}Q_i^0\right)
\left(\frac{\partial a_{i-1}({\bm\beta_0})}{\partial{\beta_s}}\right)^{\top}
\Sigma_0^{-1}
\left(\sqrt{h_n}\Lambda_0 Z_{n,i}+\Delta_{n,i}\right),
\\
U_{n,i}'(\beta_s;{\bm\beta_0},\Sigma_0)
&:=
\left.\frac{\partial U_{n,i}^{\alpha}}{\partial \beta_s}\right|_{({\bm\beta_0},\Sigma_0)}
\\
&=
-(1+\alpha)\sqrt{h_n}
|\Sigma_0|^{-\alpha/2}
\exp\!\left(-\frac{\alpha}{2}Z_{n,i}^{\top}Z_{n,i}\right)
\left(\frac{\partial a_{i-1}({\bm\beta_0})}{\partial{\beta_s}}\right)^{\top}
\Lambda_0^{-\top} Z_{n,i},
\end{align*}
where
\[
Q_i^0
=
\frac{1}{h_n}
R_i({\bm\beta_0})^{\top}
\Sigma_0^{-1}
R_i({\bm\beta_0}).
\]

Set
\[
K_{i,0}
:=
\frac{\alpha}{2}Q_i^0
-
\frac{\alpha}{2}Z_{n,i}^{\top}Z_{n,i} .
\]
Then
\begin{align*}
&\left|
V_{n,i}'(\beta_s;{\bm\beta_0},\Sigma_0)
-
U_{n,i}'(\beta_s;{\bm\beta_0},\Sigma_0)
\right|\\
\leq\;&
(1+\alpha)|\Sigma_0|^{-\alpha/2}
\left|
\fst{\frac{\partial a_{i-1}({\bm\beta_0})}{\partial{\beta_s}}}^{\top}
\Sigma_0^{-1}
\Delta_{n,i}
\right|
\left|\exp\fst{K_{i,0}}\right|
\exp\!\left(-\frac{\alpha}{2}Z_{n,i}^{\top}Z_{n,i}\right)
\\
&+
\sqrt{h_n}\,
\frac{\alpha(1+\alpha)}{2}
|\Sigma_0|^{-\alpha/2}
\left|
\fst{\frac{\partial a_{i-1}({\bm\beta_0})}{\partial{\beta_s}}}^{\top}
\Lambda_0^{-1}Z_{n,i}
\right|
\left|
K_{i,0}\exp\fst{\xi_i}
\right|
\exp\!\left(-\frac{\alpha}{2}Z_{n,i}^{\top}Z_{n,i}\right),
\end{align*}
where $|\xi_i|<|K_{i,0}|$.

Using the polynomial growth of ${\frac{\partial a_{i-1}({\bm\beta_0})}{\partial{\beta_s}}}$ and the bound on $\xi_i$, we obtain
\begin{align*}
&\frac{1}{\sqrt{nh_n}}
\sum_{i=1}^n
\left|
V_{n,i}'(\beta_s;{\bm\beta_0},\Sigma_0)
-
U_{n,i}'(\beta_s;{\bm\beta_0},\Sigma_0)
\right|\\
\leq\;&
\exp\fst{\max_i|K_{i,0}|}
\frac{C_\alpha}{\sqrt{nh_n}}
\sum_{i=1}^n
\left(1+\|X_{t_{i-1}^n}\|\right)^C
\left(
\|Z_{n,i}\||K_{i,0}|\sqrt{h_n}
+
\|\Delta_{n,i}\|
\right).
\end{align*}

Taking conditional expectations given $\mathcal{F}_{t_{i-1}^n}$ and applying the Cauchy--Schwarz inequality yield
\begin{align*}
&\frac{1}{\sqrt{nh_n}}
\sum_{i=1}^n
\E\!\left[
\left|
V_{n,i}'(\beta_s;{\bm\beta_0},\Sigma_0)
-
U_{n,i}'(\beta_s;{\bm\beta_0},\Sigma_0)
\right|
\Big|
\mathcal{F}_{t_{i-1}^n}
\right]\\
\leq&
\frac{C_\alpha}{\sqrt{nh_n}}
\sum_{i=1}^n
\left(1+\|X_{t_{i-1}^n}\|\right)^C
\left(
\sqrt{\E\|Z_{n,i}\|^2
\E(|K_{i,0}|^2|\mathcal{F}_{t_{i-1}^n})}
\sqrt{h_n}
+
\E(\|\Delta_{n,i}\||\mathcal{F}_{t_{i-1}^n})
\right).
\end{align*}

Since
\[
\E\!\left(|K_{i,0}|^2\mid\mathcal{F}_{t_{i-1}^n}\right)
\leq
C\left(1+\|X_{t_{i-1}^n}\|\right)^C h_n^2,
\qquad
\exp\fst{\max_i|K_{i,0}|}=O_{\mathbb{P}}(1),
\]
it follows that the above term is of order $O_{\mathbb{P}}(\sqrt{n}h_n)$.

Similarly, using
\[
\E\!\left(|K_{i,0}|^4\mid\mathcal{F}_{t_{i-1}^n}\right)
\leq
C\left(1+\|X_{t_{i-1}^n}\|\right)^C h_n^4,
\]
we obtain
\begin{align*}
\frac{1}{nh_n}
\sum_{i=1}^n
\E\!\left[
\left|
V_{n,i}'(\beta_s;{\bm\beta_0},\Sigma_0)
-
U_{n,i}'(\beta_s;{\bm\beta_0},\Sigma_0)
\right|^2
\Big|
\mathcal{F}_{t_{i-1}^n}
\right]
\leq
\frac{1}{nh_n}
\sum_{i=1}^n
\left(1+\|X_{t_{i-1}^n}\|\right)^C
h_n^3
= O_{\mathbb{P}}(h_n^2).
\end{align*}

Therefore, by Lemma~9 of \cite{GenonCatalotJacod1993},
\eqref{eqn:lemm_U_V_P_conv_beta} follows.
The proof of~\eqref{eqn:lemm_U_V_P_conv_Sigma} proceeds along identical lines and is omitted.
\end{proof}

\begin{lem}
\label{lem:PD_M}
Let $\Sigma_0 \in \mathbb{R}^{d \times d}$ be a symmetric positive definite matrix and
set $\Omega = \Sigma_0^{-1}$. For $\alpha > -1$, define the matrix
\[
M = \bigl( M_{(kl),(rs)} \bigr)_{1 \le k \le l \le d,\; 1 \le r \le s \le d},
\]
of dimension $\frac{d(d+1)}{2} \times \frac{d(d+1)}{2}$, with entries
\begin{equation}
\label{eq:M_def}
M_{(kl),(rs)}
=
\frac{\alpha^2}{4(1+\alpha)}
\,\operatorname{tr}(A_{kl})\,\operatorname{tr}(A_{rs})
+
\frac{1}{2(1+\alpha)}
\,\operatorname{tr}(A_{kl}A_{rs}),
\end{equation}
where
\[
A_{kl} = \Omega S_{kl}, \qquad
S_{kl} =
\begin{cases}
E_{kk}, & \text{if } k = l, \\[0.3em]
E_{kl} + E_{lk}, & \text{if } k \neq l,
\end{cases}
\]
and $E_{ij}$ denotes the $d \times d$ matrix with a one in the $(i,j)$-th entry and
zeros elsewhere. Then the matrix $M$ is symmetric positive definite and, in
particular, invertible.
\end{lem}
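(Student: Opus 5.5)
The plan is to identify $M$ as the Gram matrix of a positive definite quadratic form on the space of symmetric $d\times d$ matrices, using the family $\{S_{kl}\}_{1\le k\le l\le d}$ as a basis. Symmetry of $M$ is immediate. Indeed $\operatorname{tr}(A_{kl})\operatorname{tr}(A_{rs})$ is symmetric in $(kl)\leftrightarrow(rs)$. By cyclicity of the trace, $\operatorname{tr}(A_{kl}A_{rs})=\operatorname{tr}(\Omega S_{kl}\Omega S_{rs})=\operatorname{tr}(\Omega S_{rs}\Omega S_{kl})=\operatorname{tr}(A_{rs}A_{kl})$.

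For positive definiteness, take an arbitrary vector $c=(c_{kl})_{1\le k\le l\le d}\in\mathbb R^{d(d+1)/2}$ and set $S(c):=\sum_{k\le l}c_{kl}S_{kl}$, a symmetric matrix. The map $c\mapsto S(c)$ is a linear bijection onto the symmetric matrices. Its inverse reads off the diagonal entries, $c_{kk}=S(c)_{kk}$, and the off-diagonal entries, $c_{kl}=S(c)_{kl}$ for $k<l$. In particular $S(c)=0$ forces $c=0$. Bilinearity of the trace then gives
\[
c^\top M c=\frac{\alpha^2}{4(1+\alpha)}\bigl(\operatorname{tr}(\Omega S(c))\bigr)^2+\frac{1}{2(1+\alpha)}\operatorname{tr}\bigl(\Omega S(c)\Omega S(c)\bigr).
\]

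The first term is nonnegative because $1+\alpha>0$. For the second term, write $\Omega^{1/2}$ for the symmetric positive definite square root of $\Omega=\Sigma_0^{-1}$ and put $T:=\Omega^{1/2}S(c)\Omega^{1/2}$, which is symmetric. Then $\operatorname{tr}(\Omega S\Omega S)=\operatorname{tr}(T^2)=\|T\|_{\mathrm F}^2\ge 0$. Equality holds only if $T=0$, that is only if $S(c)=0$ (since $\Omega^{1/2}$ is invertible), and hence only if $c=0$. Therefore $c^\top Mc>0$ for every $c\neq0$. So $M$ is symmetric positive definite, and in particular invertible.

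There is no real obstacle in this argument. The only point needing care is checking that the $S_{kl}$ are linearly independent, including the asymmetric normalisation $E_{kk}$ versus $E_{kl}+E_{lk}$. That normalisation does not affect the argument, since any basis of the symmetric matrices suffices.
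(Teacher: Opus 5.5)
Your proof is correct and follows essentially the same route as the paper. Both write $H=\sum_{k\le l}c_{kl}S_{kl}$, expand $c^\top M c=\frac{\alpha^2}{4(1+\alpha)}[\operatorname{tr}(\Omega H)]^2+\frac{1}{2(1+\alpha)}\|\Omega^{1/2}H\Omega^{1/2}\|_F^2$, and conclude using that the $S_{kl}$ form a basis of the symmetric matrices. You are also slightly more careful than the paper: it asserts that both coefficients are strictly positive, which fails at $\alpha=0$, whereas you correctly only need the first coefficient to be nonnegative, and you verify symmetry of $M$ and linear independence of the $S_{kl}$ explicitly.
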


\begin{proof}
Let $\mathcal S_d$ denote the vector space of $d\times d$ symmetric matrices.
The collection $\{S_{kl}:1\le k\le l\le d\}$ forms a basis of $\mathcal S_d$.
For any coefficient vector $x=(x_{kl})_{k\le l}$, define
\[
H:=\sum_{k\le l}x_{kl}S_{kl}\in\mathcal S_d.
\]

Using $A_{kl}=\Omega S_{kl}$ and linearity of matrix multiplication,
\[
\sum_{k\le l}x_{kl}A_{kl}
=
\Omega\sum_{k\le l}x_{kl}S_{kl}
=
\Omega H.
\]

Consider the quadratic form associated with $M$:
\begin{align}
x^\top M x
&=
\sum_{k\le l}\sum_{r\le s}
x_{kl}x_{rs}
\Bigg[
\frac{\alpha^2}{4(1+\alpha)}
\operatorname{tr}(A_{kl})\operatorname{tr}(A_{rs})
+
\frac{1}{2(1+\alpha)}
\operatorname{tr}(A_{kl}A_{rs})
\Bigg].
\end{align}

We treat the two terms separately. By bilinearity,
\begin{align}
\sum_{k\le l}\sum_{r\le s}
x_{kl}x_{rs}\operatorname{tr}(A_{kl})\operatorname{tr}(A_{rs})
% &
=\left(\sum_{k\le l}x_{kl}\operatorname{tr}(A_{kl})\right)^2 
% \notag\\
% &
=\left(
\operatorname{tr}\Bigl(\Omega\sum_{k\le l}x_{kl}S_{kl}\Bigr)
\right)^2 
% \notag\\
% &
=\left[\operatorname{tr}(\Omega H)\right]^2.
\end{align}

Similarly, using linearity of the trace,
\begin{align}
\sum_{k\le l}\sum_{r\le s}
x_{kl}x_{rs}\operatorname{tr}(A_{kl}A_{rs})
% &
=
\operatorname{tr}\Bigl(
\Omega\Bigl(\sum_{k\le l}x_{kl}S_{kl}\Bigr)
\Omega\Bigl(\sum_{r\le s}x_{rs}S_{rs}\Bigr)
\Bigr) 
% \notag\\
% &
=
\operatorname{tr}(\Omega H\Omega H).
\end{align}

Combining the two parts yields
\begin{equation}
\label{eq:quad_form}
x^\top M x
=
\frac{\alpha^2}{4(1+\alpha)}\left[\operatorname{tr}(\Omega H)\right]^2
+
\frac{1}{2(1+\alpha)}\operatorname{tr}(\Omega H\Omega H).
\end{equation}

Since $\Omega\succ0$, there exists $\Omega^{1/2}$ such that
\[
\operatorname{tr}(\Omega H\Omega H)
=
\operatorname{tr}\left((\Omega^{1/2}H\Omega^{1/2})^2\right)
=
\|\Omega^{1/2}H\Omega^{1/2}\|_F^2.
\]
Hence $\operatorname{tr}(\Omega H\Omega H)\ge0$, with equality if and only if
$H=0$. As $\alpha>-1$, both coefficients in~\eqref{eq:quad_form} are strictly
positive, implying
\[
x^\top M x>0
\quad\text{for all }x\neq0.
\]

Therefore $M$ is symmetric positive definite and, in particular, invertible.
\end{proof}

%\exp\fst{max_i|K_{i,0}|}

\setcounter{section}{0}

% --- leave appendix-style numbering ---
\renewcommand{\thesection}{S.\arabic{section}}
\renewcommand{\thesubsection}{S.\arabic{section}.\arabic{subsection}}
\renewcommand{\theequation}{S\arabic{section}.\arabic{subsection}.\arabic{equation}}

\setcounter{equation}{0}

\section*{Supplementary}
\addcontentsline{toc}{section}{Supplementary}

\section{Calculation of Derivatives and Expectations}\label{App:Deriv_Expect}
\subsection{Matrix notations and Expectations}

Throughout, $\Sigma$ denotes a symmetric positive definite $d\times d$ matrix.
For $1\le r,s\le d$, let $e_r$ denote the $r$th canonical basis vector in
$\mathbb R^d$ and define
\[
E_{rs}:=e_r e_s^\top .
\]
To account for the symmetry of $\Sigma$, we introduce the matrices
\[
S_{rs}
:=
\begin{cases}
E_{rr}, & r=s,\\[0.2em]
E_{rs}+E_{sr}, & r\neq s.
\end{cases}
\]

With this notation, the derivative of the determinant term is given by
\begin{equation}\label{eq:det-derivative}
\frac{\partial |\Sigma|^{-1}}{\partial \sigma_{rs}}
=
-\,|\Sigma|^{-1}
\operatorname{tr}\!\left(
\Sigma^{-1} S_{rs}
\right).
\end{equation}

Similarly, for $x\in\mathbb R^d$, the derivative of the quadratic form satisfies
\begin{equation}\label{eq:quadform-derivative}
\frac{\partial}{\partial \sigma_{rs}}
\left(
x^\top \Sigma^{-1} x
\right)
=
-\,x^\top
\Sigma^{-1} S_{rs} \Sigma^{-1}
x .
\end{equation}

For further details on matrix differential calculus for symmetric matrices, see,
for example, \citet{Anderson2003}, \citet{MardiaKentBibby1979} or \citet{PetersenPedersen2012}.

\subsection*{Expectation calculations}

Let $Z\sim\mathcal N_d(0,I_d)$ and let $M,B,C\in\mathbb R^{d\times d}$ be symmetric
p.d. matrices such that
\[
Q:=I_d+\alpha M
\]
is symmetric positive definite. Throughout this subsection, expectations are taken
with respect to the law of $Z$.

\medskip

\noindent\textbf{Quadratic form.}
A direct Gaussian calculation yields
\begin{equation}\label{eq:exp_quad_identity}
\mathbb E\!\left[
Z^\top B Z\,
\exp\!\left(-\frac{\alpha}{2}Z^\top M Z\right)
\right]
=
|Q|^{-1/2}
\operatorname{tr}\!\left(
B Q^{-1}
\right).
\end{equation}
Indeed, writing
\[
\mathbb E\!\left[
Z^\top B Z\,
\exp\fst{-\frac{\alpha}{2}Z^\top M Z}
\right]
=
(2\pi)^{-d/2}
\int_{\mathbb R^d}
z^\top B z\,
\exp\fst{-\frac12 z^\top Q z}\,dz,
\]
and using $z^\top B z=\operatorname{tr}(zz^\top B)$ together with the standard
identity
\[
\int_{\mathbb R^d}
zz^\top
\exp\fst{-\frac12 z^\top Q z}\,dz
=
(2\pi)^{d/2}|Q|^{-1/2}Q^{-1},
\]
gives \eqref{eq:exp_quad_identity}.

\medskip

\noindent\textbf{Product of quadratic forms.}
Similarly, for symmetric matrices $B$ and $C$,
\begin{equation}\label{eq:exp_quad_quad_identity}
\begin{aligned}
\mathbb E\!\left[
(Z^\top B Z)(Z^\top C Z)\,
\exp\!\left(-\frac{\alpha}{2}Z^\top M Z\right)
\right]
&=
|Q|^{-1/2}
\Big\{
\operatorname{tr}(B Q^{-1})\operatorname{tr}(C Q^{-1})
\\
&\qquad
+2\,\operatorname{tr}(B Q^{-1} C Q^{-1})
\Big\}.
\end{aligned}
\end{equation}
This follows from Isserlis’ formula (see \cite{Isserlis1918}) applied to a centered Gaussian vector with
covariance matrix $Q^{-1}$.

\subsection{Derivatives of \texorpdfstring{$V_{n,i}^\alpha$}{V n i alpha}}

\begin{align*}
&\begin{aligned}
Q_i(\bm\beta, \Sigma) & :=\frac{1}{h_n}R_i(\bm\beta)^{\top} \Sigma^{-1} R_i(\bm\beta) 
% c_\alpha & :=(1+\alpha)^{-d / 2} \\
% f(\Sigma) & :=|\Sigma|^{-\alpha / 2} \\
% g_i(\bm\beta, \Sigma) & :=\exp \left(-\frac{\alpha}{2 h_n} Q_i(\bm\beta, \Sigma)\right)
% \end{aligned}\\
% &\text { Then }\\
% &V_{n, i}^\alpha=c_\alpha f(\Sigma)-\left(1+\frac{1}{\alpha}\right) f(\Sigma) g_i(\bm\beta, \Sigma)
\end{aligned}.
\end{align*}

\begin{equation}\label{eqn:1st_deriv_beta}
\frac{\partial V_{n, i}^\alpha}{\partial \beta_s}=\left(1+\frac{1}{\alpha}\right)|\Sigma|^{-\alpha / 2} \exp \left(-\frac{\alpha}{2 } Q_i\right) \frac{\alpha}{h_n}\left(\frac{\partial R_i}{\partial \beta_s}\right)^{\top} \Sigma^{-1} R_i
\end{equation}

\begin{equation}\label{eqn:2nd_deriv_beta}
\begin{aligned}
\frac{\partial^2 V_{n, i}^\alpha}{\partial \beta_s \partial \beta_r}
&=
\left(1+\frac{1}{\alpha}\right)|\Sigma|^{-\alpha / 2} \exp \left(-\frac{\alpha}{2 } Q_i\right) 
% \\
% & 
\times\left[\frac{\alpha}{h_n}\left(\frac{\partial^2 R_i}{\partial \beta_s \partial \beta_r}\right)^{\top} \Sigma^{-1} R_i+\frac{\alpha}{h_n}\left(\frac{\partial R_i}{\partial \beta_s}\right)^{\top} \Sigma^{-1} \frac{\partial R_i}{\partial \beta_r}\right. \\
& \qquad\left.-\frac{\alpha^2}{h_n^2}\left(\frac{\partial R_i}{\partial \beta_s}\right)^{\top} \Sigma^{-1} R_i\left(\frac{\partial R_i}{\partial \beta_r}\right)^{\top} \Sigma^{-1} R_i\right]
\end{aligned}
\end{equation}

\begin{align}\label{eqn:1st_deriv_sigma}
\begin{aligned}
\frac{\partial V_{n, i}^\alpha}{\partial \sigma_{r s}}= & -\frac{\alpha}{2(1+\alpha)^{d / 2}}|\Sigma|^{-\alpha / 2} \operatorname{tr}\left(\Sigma^{-1} S_{r s}\right) 
% \\
% & 
+\left(1+\frac{1}{\alpha}\right)|\Sigma|^{-\alpha / 2} \exp \left(-\frac{\alpha}{2} Q_i\right) \\
& \times\left[\frac{\alpha}{2} \operatorname{tr}\left(\Sigma^{-1} S_{r s}\right)-\frac{\alpha}{2 h_n} R_i^{\top} \Sigma^{-1} S_{r s} \Sigma^{-1} R_i\right]
\end{aligned}
\end{align}

\begin{align}\label{eqn:1st_deriv_cross}
\begin{aligned}
\frac{\partial^2 V_{n,i}^\alpha}{\partial \beta_u\,\partial \sigma_{rs}}
&=
\left(1+\frac{1}{\alpha}\right)
|\Sigma|^{-\alpha/2}
\exp\!\left(-\frac{\alpha}{2}Q_i\right)
% \\
% &\quad
\times
\Bigg[
-\frac{\alpha}{h_n}\,
\big({\frac{\partial}{\partial\beta_u}} R_i\big)^\top
\Sigma^{-1}R_i
\left\{
\frac{\alpha}{2}\operatorname{tr}(\Sigma^{-1}S_{rs})
-
\frac{\alpha}{2h_n}
R_i^\top\Sigma^{-1}S_{rs}\Sigma^{-1}R_i
\right\}
\\
&\qquad
-\frac{\alpha}{h_n}
\big({\frac{\partial}{\partial\beta_u}} R_i\big)^\top
\Sigma^{-1}S_{rs}\Sigma^{-1}R_i
\Bigg],
\end{aligned}
\end{align}

\begin{align}\label{eqn:2nd_deriv_sigma}
\begin{aligned}
\frac{\partial^2 V_{n, i}^\alpha}{\partial \sigma_{k l} \partial \sigma_{r s}}= & \frac{\alpha}{2(1+\alpha)^{d / 2}}|\Sigma|^{-\alpha / 2}\left[\frac{\alpha}{2} \operatorname{tr}\left(\Sigma^{-1} S_{k l}\right) \operatorname{tr}\left(\Sigma^{-1} S_{r s}\right)+\operatorname{tr}\left(\Sigma^{-1} S_{k l} \Sigma^{-1} S_{r s}\right)\right] 
\\
&
+\mathcal{C}_i\left[\left\{\left(-\frac{\alpha}{2} \operatorname{tr}\left(\Sigma^{-1} S_{k l}\right)+\frac{\alpha}{2h_n} R_i^{\top} \Sigma^{-1} S_{k l} \Sigma^{-1} R_i\right)\right.\right. \\
& \left.\times\left(\frac{\alpha}{2} \operatorname{tr}\left(\Sigma^{-1} S_{r s}\right)-\frac{\alpha}{2 h_n} R_i^{\top} \Sigma^{-1} S_{r s} \Sigma^{-1} R_i\right)\right\} 
% \\
% & 
\left.+\left(-\frac{\alpha}{2} \operatorname{tr}\left(\Sigma^{-1} S_{k l} \Sigma^{-1} S_{r s}\right)+\frac{\alpha}{2 h_n} R_i^{\top} B_{rs,kl} R_i\right)\right]%\\
\end{aligned}
\end{align}
where, \[\mathcal{C}_i  =  \left(1+\frac{1}{\alpha}\right)
|\Sigma|^{-\alpha/2}
\exp\!\left(-\frac{\alpha}{2}Q_i\right), \quad B_{rs,kl} = \Sigma^{-1}\left(S_{k l} \Sigma^{-1} S_{r s}+S_{r s} \Sigma^{-1} S_{k l}\right) \Sigma^{-1},\]

% \quad A_{kl} = 
%  \Sigma^{-1} S_{kl} 
\subsection{Derivatives of \texorpdfstring{$U_{n,i}^\alpha$}{U n i alpha}}

$U_{n,i}^\alpha$ is defined as
\begin{align*}
U_{n,i}^\alpha
=&\,
\frac{1}{(1+\alpha)^{d/2}}|\Sigma|^{-\alpha/2}
\\
&\quad
-
\left(1+\frac{1}{\alpha}\right)
|\Sigma|^{-\alpha/2}
\exp\!\left[
-\frac{\alpha}{2h_n}
\left(
D_i(\bm\beta)+\sqrt{h_n}\Lambda_0Z_{n,i}
\right)^\top
\Sigma^{-1}
\left(
D_i(\bm\beta)+\sqrt{h_n}\Lambda_0Z_{n,i}
\right)
\right].
\end{align*}

and \[
Y_i := D_i(\bm\beta)+\sqrt{h_n}\,\Lambda_0 Z_{n,i},
\qquad
S_i := Y_i^\top \Sigma^{-1} Y_i.
\]

\begin{equation}\label{eqn:1st_deriv_u_beta}
\frac{\partial U_{n, i}^\alpha}{\partial \beta_u}=-\alpha\left(1+\frac{1}{\alpha}\right)|\Sigma|^{-\alpha / 2} \exp \left(-\frac{\alpha}{2 h_n} S_i\right)\left({\frac{\partial}{\partial\beta_u}} a_{i-1}(\bm\beta)\right)^{\top} \Sigma^{-1} Y_i
\end{equation}

\begin{equation}\label{eqn:1st_deriv_true_u_beta}
\left.\frac{\partial U_{n, i}^\alpha}{\partial \beta_u}\right|_{\left({\bm\beta_0}, \Sigma_0\right)}=-(1+\alpha) \sqrt{h_n}\left|\Sigma_0\right|^{-\alpha / 2} \exp \left(-\frac{\alpha}{2} Z_{n,i}^{\top} Z_{n,i}\right)\left({\frac{\partial}{\partial\beta_u}} a_{i-1}\left({\bm\beta_0}\right)\right)^{\top} \Lambda_0^{-\top} Z_{n,i}
\end{equation}

\begin{equation}\label{eqn:1st_deriv_true_u_beta_sqr}
\begin{aligned}
\frac{\partial U_{n, i}^\alpha}{\partial \beta_u} \big|_{({\bm\beta_0},\Sigma_0)}\frac{\partial U_{n, i}^\alpha}{\partial \beta_v}\big|_{({\bm\beta_0},\Sigma_0)}= & (1+\alpha)^2 h_n\left|\Sigma_0\right|^{-\alpha} \exp \left(-\alpha Z_{n,i}^{\top} Z_{n,i}\right) \\
& \times\left({\frac{\partial}{\partial\beta_u}} a_{i-1}\left({\bm\beta_0}\right)\right)^{\top} \Lambda_0^{-\top} Z_{n,i} Z_{n,i}^{\top} \Lambda_0^{-1}\left({\frac{\partial}{\partial\beta_v}} a_{i-1}\left({\bm\beta_0}\right)\right)
\end{aligned}
\end{equation}

\begin{align}\label{eqn:1st_deriv_u_sigma}
    \begin{aligned}
\frac{\partial U_{n,i}^\alpha}{\partial \sigma_{rs}}
=&\;
-\frac{\alpha}{2(1+\alpha)^{d/2}}
|\Sigma|^{-\alpha/2} \operatorname{tr}(\Sigma^{-1}S_{rs})
\\
&\qquad
+
\left(1+\frac{1}{\alpha}\right)
|\Sigma|^{-\alpha/2}
\exp\!\left(-\frac{\alpha}{2h_n} S_i \right)
% \\
% &\quad
\times
\left[
\frac{\alpha}{2}\operatorname{tr}(\Sigma^{-1}S_{rs})
-
\frac{\alpha}{2h_n} Y_i^\top \Sigma^{-1} S_{rs} \Sigma^{-1} Y_i
\right].
\end{aligned}
\end{align}

\begin{align}\label{eqn:1st_deriv_true_u_sigma}
    \begin{aligned}
\frac{\partial U_{n,i}^\alpha}{\partial {\sigma_{kl}}}  \big|_{({\bm\beta_0},\Sigma_0)}
=&\;
-\frac{\alpha}{2(1+\alpha)^{d/2}}
|\Sigma_0|^{-\alpha/2}
\operatorname{tr}(\Sigma_0^{-1}S_{kl})
\\
&\quad
+
\left(1+\frac{1}{\alpha}\right)
|\Sigma_0|^{-\alpha/2}
\exp\!\left(-\frac{\alpha}{2} Z_{n,i}^\top Z_{n,i}\right)
\left[
\frac{\alpha}{2}\operatorname{tr}(\Sigma^{-1}S_{kl})
-
\frac{\alpha}{2} Z_{n,i}^\top \Lambda_0^{-\top} S_{kl} \Lambda_0^{-1} Z_{n,i}
\right].
\end{aligned}
\end{align}

\begin{align}\label{eqn:1st_deriv_true_u_cross}
\begin{aligned}
    &\left.\frac{\partial U_{n,i}^\alpha}{\partial \beta_u}\right|_{({\bm\beta_0},\Sigma_0)}
\left.\frac{\partial U_{n,i}^\alpha}{\partial \sigma_{kl}}\right|_{({\bm\beta_0},\Sigma_0)}
\\
&=
\sqrt{h_n}\,
(1+\alpha)\,
|\Sigma_0|^{-\alpha}
\exp\!\left(-\frac{\alpha}{2} Z_{n,i}^\top Z_{n,i}\right)
\left({\frac{\partial}{\partial\beta_u}} a_{i-1}({\bm\beta_0})\right)^\top
\Lambda_0^{-\top} Z_{n,i}
\\
&\quad\times
\Bigg[
\frac{\alpha}{2(1+\alpha)^{d/2}}
\operatorname{tr}\!\left(\Sigma_0^{-1} S_{kl}\right)
% \\
% &\qquad
-
\left(1+\frac{1}{\alpha}\right)
\exp\!\left(-\frac{\alpha}{2} Z_{n,i}^\top Z_{n,i}\right)
\left\{
\frac{\alpha}{2}
\operatorname{tr}\!\left(\Sigma_0^{-1} S_{kl}\right)
-
\frac{\alpha}{2}
Z_{n,i}^\top \Lambda_0^{-\top} S_{kl} \Lambda_0^{-1} Z_{n,i}
\right\}
\Bigg].
\end{aligned}
\end{align}

\begin{align}\label{eqn:1st_deriv_true_u_sigma_sqr}
\begin{aligned}
    &\left.\frac{\partial U_{n,i}^\alpha}{\partial \sigma_{kl}}\right|_{({\bm\beta_0},\Sigma_0)}
\left.\frac{\partial U_{n,i}^\alpha}{\partial \sigma_{rs}}\right|_{({\bm\beta_0},\Sigma_0)}\\=&\quad
\frac{\alpha^2}{4(1+\alpha)^d}
|\Sigma_0|^{-\alpha}
\operatorname{tr}\!\left(\Sigma_0^{-1} S_{kl}\right)
\operatorname{tr}\!\left(\Sigma_0^{-1} S_{rs}\right)
\\
&\quad
-
\frac{\alpha}{2(1+\alpha)^{d/2}}
|\Sigma_0|^{-\alpha}
\fst{\frac{1+\alpha}{\alpha}}\exp\!\left(-\frac{\alpha}{2} Z_{n,i}^\top Z_{n,i}\right)
\operatorname{tr}\!\left(\Sigma_0^{-1} S_{kl}\right)
% \\
% &\qquad
\times
\left\{
\frac{\alpha}{2}
\operatorname{tr}\!\left(\Sigma_0^{-1} S_{rs}\right)
-
\frac{\alpha}{2}
Z_{n,i}^\top \Lambda_0^{-\top} S_{rs} \Lambda_0^{-1} Z_{n,i}
\right\}
\\
&\quad
-
\frac{\alpha}{2(1+\alpha)^{d/2}}
|\Sigma_0|^{-\alpha}\fst{\frac{1+\alpha}{\alpha}}
\exp\!\left(-\frac{\alpha}{2} Z_{n,i}^\top Z_{n,i}\right)
\operatorname{tr}\!\left(\Sigma_0^{-1} S_{rs}\right)
% \\
% &\qquad
\times
\left\{
\frac{\alpha}{2}
\operatorname{tr}\!\left(\Sigma_0^{-1} S_{kl}\right)
-
\frac{\alpha}{2}
Z_{n,i}^\top \Lambda_0^{-\top} S_{kl} \Lambda_0^{-1} Z_{n,i}
\right\}
\\
&\quad
+
\left(1+\frac{1}{\alpha}\right)^2
|\Sigma_0|^{-\alpha}
\exp\!\left(-\alpha Z_{n,i}^\top Z_{n,i}\right)
% \\
% &\qquad
\times
\left\{
\frac{\alpha}{2}
\operatorname{tr}\!\left(\Sigma_0^{-1} S_{kl}\right)
-
\frac{\alpha}{2}
Z_{n,i}^\top \Lambda_0^{-\top} S_{kl} \Lambda_0^{-1} Z_{n,i}
\right\}
\\
&\qquad\qquad\qquad\times
\left\{
\frac{\alpha}{2}
\operatorname{tr}\!\left(\Sigma_0^{-1} S_{rs}\right)
-
\frac{\alpha}{2}
Z_{n,i}^\top \Lambda_0^{-\top} S_{rs} \Lambda_0^{-1} Z_{n,i}
\right\}.
\end{aligned}
\end{align}

\section{Some Related Equations}
\renewcommand{\theequation}{S\arabic{section}.\arabic{equation}}
\begin{align}
    J_{0,i} =-\alpha \sqrt{h_n}\left(\frac{\partial^2 a_{i-1}}{\partial \beta_r\partial\beta_s}\right)^{\top} \Sigma^{-1} \Lambda_0 Z_{n,i}\phantom{\alpha \sqrt{h_n}\left(\partial_{\beta_r \beta_s}^2\right.}\label{eqn:J_0}
\end{align}
\begin{align}
\begin{aligned}
J_{1,i}= & -\alpha h_n\left(\frac{\partial^2 a_{i-1}}{\partial \beta_r\partial\beta_s}\right)^{\top} \Sigma^{-1} D_i(\bm\beta) 
% \\
% & 
+\alpha h_n\left(\frac{\partial a_{i-1}}{\partial \beta _r}\right)^{\top} \Sigma^{-1}\left(\frac{\partial a_{i-1}}{\partial \beta _s}\right) \\
& -\alpha^2 h_n\left(\frac{\partial a_{i-1}}{\partial \beta _r}\right)^{\top} \Sigma^{-1} \Lambda_0 Z_{n,i}\left(\frac{\partial a_{i-1}}{\partial \beta _s}\right)^{\top} \Sigma^{-1} \Lambda_0 Z_{n,i} \\
\end{aligned}\label{eqn:J_1}
\end{align}

\begin{align}
\begin{aligned}
J_{2,i}= & -\alpha\left(\frac{\partial^2 a_{i-1}}{\partial \beta_r\partial\beta_s}\right)^{\top} \Sigma^{-1} \Delta_{n, i} 
% \\
% &
-\alpha^2 h_n^{3 / 2}\left[\left(\frac{\partial a_{i-1}}{\partial \beta _r}\right)^{\top} \Sigma^{-1} \Lambda_0 Z_{n,i}\left(\frac{\partial a_{i-1}}{\partial \beta _s}\right)^{\top} \Sigma^{-1} D_i(\bm\beta)\right.
\\
& 
\left.+\left(\frac{\partial a_{i-1}}{\partial \beta _r}\right)^{\top} \Sigma^{-1} D_i(\bm\beta)\left(\frac{\partial a_{i-1}}{\partial \beta _s}\right)^{\top} \Sigma^{-1} \Lambda_0 Z_{n,i}\right] 
% \\
% & 
-\alpha^2 h_n^2\left(\frac{\partial a_{i-1}}{\partial \beta _r}\right)^{\top} \Sigma^{-1} D_i(\bm\beta)\left(\frac{\partial a_{i-1}}{\partial \beta _s}\right)^{\top} \Sigma^{-1} D_i(\bm\beta) \\
& -\alpha^2 h_n\left[\left(\frac{\partial a_{i-1}}{\partial \beta _r}\right)^{\top} \Sigma^{-1} \Lambda_0 Z_{n,i}\left(\frac{\partial a_{i-1}}{\partial \beta _s}\right)^{\top} \Sigma^{-1} \Delta_{n, i}\right. 
% \\
% & 
\left.+\left(\frac{\partial a_{i-1}}{\partial \beta _r}\right)^{\top} \Sigma^{-1} \Delta_{n, i}\left(\frac{\partial a_{i-1}}{\partial \beta _s}\right)^{\top} \Sigma^{-1} \Lambda_0 Z_{n,i}\right] \\
& -\alpha^2 h_n^2\left[\left(\frac{\partial a_{i-1}}{\partial \beta _r}\right)^{\top} \Sigma^{-1} D_i(\bm\beta)\left(\frac{\partial a_{i-1}}{\partial \beta _s}\right)^{\top} \Sigma^{-1} \Delta_{n, i}\right. 
% \\
% & 
\left.+\left(\frac{\partial a_{i-1}}{\partial \beta _r}\right)^{\top} \Sigma^{-1} \Delta_{n, i}\left(\frac{\partial a_{i-1}}{\partial \beta _s}\right)^{\top} \Sigma^{-1} D_i(\bm\beta)\right] \\
& -\alpha^2\left(\frac{\partial a_{i-1}}{\partial \beta _r}\right)^{\top} \Sigma^{-1} \Delta_{n, i}\left(\frac{\partial a_{i-1}}{\partial \beta _s}\right)^{\top} \Sigma^{-1} \Delta_{n, i} .
\end{aligned}\label{eqn:J_2}
\end{align}

% \begin{align*}
%     \operatorname{tr}\left(A_{k l}\right)-\frac{1}{h_n} R_i^{\top} \Sigma^{-1} S_{k l} \Sigma^{-1} R_i=\operatorname{tr}\left(A_{k l}\right)-Z_{n, i}^{\top} \Lambda_0^{\top} \Sigma^{-1} S_{k l} \Sigma^{-1} \Lambda_0 Z_{n, i} \\ -2 \sqrt{h_n} Z_{n, i}^{\top} \Lambda_0^{\top} \Sigma^{-1} S_{k l} \Sigma^{-1} D_i \\ -2 h_n^{-1 / 2} Z_{n, i}^{\top} \Lambda_0^{\top} \Sigma^{-1} S_{k l} \Sigma^{-1} \Delta_{n, i} \\ -h_n D_i^{\top} \Sigma^{-1} S_{k l} \Sigma^{-1} D_i \\ -2 D_i^{\top} \Sigma^{-1} S_{k l} \Sigma^{-1} \Delta_{n, i} \\ -h_n^{-1} \Delta_{n, i}^{\top} \Sigma^{-1} S_{k l} \Sigma^{-1} \Delta_{n, i}
% \end{align*}

%%%%%%%%%%%%%%%%%%%%%%%%%%%%%%%%%%%%%%%%%%%%%%%%%%%%%%%%%%%%
% Decomposition convention
%%%%%%%%%%%%%%%%%%%%%%%%%%%%%%%%%%%%%%%%%%%%%%%%%%%%%%%%%%%%
We define
\begin{equation}
\mathcal{J}'_{m,i}
=
\mathcal{J}^{(0)}_{m,i}
-\mathcal{C}_i \frac{\alpha^2}{4}\,\mathcal{J}^{(1)}_{m,i}
+\mathcal{C}_i \frac{\alpha}{2}\,\mathcal{J}^{(2)}_{m,i},
\qquad m = 0,1,2,3,4,
\label{eqn:all_J_sigma}
\end{equation}
and \[\mathcal{C}_i  =  \left(1+\frac{1}{\alpha}\right)
|\Sigma|^{-\alpha/2}
\exp\!\left(-\frac{\alpha}{2}Q_i\right),\]where
\begin{itemize}
\item $\mathcal{J}^{(0)}_{m,i}$ denotes the term independent of $\mathcal{C}_i$;
\item $\mathcal{J}^{(1)}_{m,i}$ denotes the component multiplied by $-\mathcal{C}_i \alpha^{2}/4$;
\item $\mathcal{J}^{(2)}_{m,i}$ denotes the component multiplied by $+\mathcal{C}_i \alpha/2$.
\end{itemize}

%%%%%%%%%%%%%%%%%%%%%%%%%%%%%%%%%%%%%%%%%%%%%%%%%%%%%%%%%%%%
% J0
%%%%%%%%%%%%%%%%%%%%%%%%%%%%%%%%%%%%%%%%%%%%%%%%%%%%%%%%%%%%
\begin{align}
\mathcal{J}^{(0)}_{0,i}
&=
\frac{\alpha}{2(1+\alpha)^{d/2}}|\Sigma|^{-\alpha/2}
\left[
\frac{\alpha}{2}\operatorname{tr}(A_{kl})\operatorname{tr}(A_{rs})
+
\operatorname{tr}(A_{kl}A_{rs})
\right],
\label{eqn:J0i-0}
\\[0.5em]
\mathcal{J}^{(1)}_{0,i}
&=\Bigg\{
\operatorname{tr}(A_{kl})\operatorname{tr}(A_{rs})
-\operatorname{tr}(A_{kl})
Z_{n,i}^\top\Lambda_0^\top\Sigma^{-1}S_{rs}\Sigma^{-1}\Lambda_0 Z_{n,i}
\nonumber
% \\
% &
% \qquad
-\operatorname{tr}(A_{rs})
Z_{n,i}^\top\Lambda_0^\top\Sigma^{-1}S_{kl}\Sigma^{-1}\Lambda_0 Z_{n,i}
\nonumber\\
&\qquad
+
\big(Z_{n,i}^\top\Lambda_0^\top\Sigma^{-1}S_{kl}\Sigma^{-1}\Lambda_0 Z_{n,i}\big)
\big(Z_{n,i}^\top\Lambda_0^\top\Sigma^{-1}S_{rs}\Sigma^{-1}\Lambda_0 Z_{n,i}\big)
\Bigg\},
\label{eqn:J0i-1}
\\[0.5em]
\mathcal{J}^{(2)}_{0,i}
&=
% \left\{
-\operatorname{tr}(A_{kl}A_{rs})
+
Z_{n,i}^\top\Lambda_0^\top B_{rs,kl}\Lambda_0 Z_{n,i}
% \right\}.
\label{eqn:J0i-2}
\end{align}

%%%%%%%%%%%%%%%%%%%%%%%%%%%%%%%%%%%%%%%%%%%%%%%%%%%%%%%%%%%%
% J1
%%%%%%%%%%%%%%%%%%%%%%%%%%%%%%%%%%%%%%%%%%%%%%%%%%%%%%%%%%%%
\begin{align}
\mathcal{J}^{(1)}_{1,i}
&=
\Bigg\{
-2\sqrt{h_n}\operatorname{tr}(A_{kl})
Z_{n,i}^\top\Lambda_0^\top\Sigma^{-1}S_{rs}\Sigma^{-1}D_i
\nonumber
% \\
% &\qquad
-2\sqrt{h_n}\operatorname{tr}(A_{rs})
Z_{n,i}^\top\Lambda_0^\top\Sigma^{-1}S_{kl}\Sigma^{-1}D_i
\nonumber\\
&\qquad
+2\sqrt{h_n}
\big(Z_{n,i}^\top\Lambda_0^\top\Sigma^{-1}S_{kl}\Sigma^{-1}\Lambda_0 Z_{n,i}\big)
\big(Z_{n,i}^\top\Lambda_0^\top\Sigma^{-1}S_{rs}\Sigma^{-1}D_i\big)
\nonumber\\
&\qquad
+2\sqrt{h_n}
\big(Z_{n,i}^\top\Lambda_0^\top\Sigma^{-1}S_{rs}\Sigma^{-1}\Lambda_0 Z_{n,i}\big)
\big(Z_{n,i}^\top\Lambda_0^\top\Sigma^{-1}S_{kl}\Sigma^{-1}D_i\big)
\Bigg\},
\label{eqn:J1i-1}
\\[0.5em]
\mathcal{J}^{(2)}_{1,i}
&=
% \left\{
2\sqrt{h_n}\,
Z_{n,i}^\top\Lambda_0^\top B_{rs,kl}D_i.
% \right\}.
\label{eqn:J1i-2}
\end{align}

%%%%%%%%%%%%%%%%%%%%%%%%%%%%%%%%%%%%%%%%%%%%%%%%%%%%%%%%%%%%
% J2
%%%%%%%%%%%%%%%%%%%%%%%%%%%%%%%%%%%%%%%%%%%%%%%%%%%%%%%%%%%%
\begin{align}
\mathcal{J}^{(1)}_{2,i}
&=
\Bigg\{
-2h_n^{-1/2}\operatorname{tr}(A_{kl})
Z_{n,i}^\top\Lambda_0^\top\Sigma^{-1}S_{rs}\Sigma^{-1}\Delta_{n,i}
\nonumber
% \\
% &\qquad
-2h_n^{-1/2}\operatorname{tr}(A_{rs})
Z_{n,i}^\top\Lambda_0^\top\Sigma^{-1}S_{kl}\Sigma^{-1}\Delta_{n,i}
\nonumber\\
&\qquad
+2h_n^{-1/2}
\big(Z_{n,i}^\top\Lambda_0^\top\Sigma^{-1}S_{kl}\Sigma^{-1}\Lambda_0 Z_{n,i}\big)
\big(Z_{n,i}^\top\Lambda_0^\top\Sigma^{-1}S_{rs}\Sigma^{-1}\Delta_{n,i}\big)
\nonumber\\
&\qquad
+2h_n^{-1/2}
\big(Z_{n,i}^\top\Lambda_0^\top\Sigma^{-1}S_{rs}\Sigma^{-1}\Lambda_0 Z_{n,i}\big)
\big(Z_{n,i}^\top\Lambda_0^\top\Sigma^{-1}S_{kl}\Sigma^{-1}\Delta_{n,i}\big)
\Bigg\},
\label{eqn:J2i-1}
\\[0.5em]
\mathcal{J}^{(2)}_{2,i}
&=
% \left\{
2h_n^{-1/2}
Z_{n,i}^\top\Lambda_0^\top B_{rs,kl}\Delta_{n,i}.
% \right\}.
\label{eqn:J2i-2}
\end{align}

%%%%%%%%%%%%%%%%%%%%%%%%%%%%%%%%%%%%%%%%%%%%%%%%%%%%%%%%%%%%
% J3
%%%%%%%%%%%%%%%%%%%%%%%%%%%%%%%%%%%%%%%%%%%%%%%%%%%%%%%%%%%%
\begin{align}
\mathcal{J}^{(1)}_{3,i}
&=
\Bigg\{
-\operatorname{tr}(A_{kl})h_n
D_i^\top\Sigma^{-1}S_{rs}\Sigma^{-1}D_i
-\operatorname{tr}(A_{rs})h_n
D_i^\top\Sigma^{-1}S_{kl}\Sigma^{-1}D_i
\nonumber\\
&\qquad
+h_n
\big(Z_{n,i}^\top\Lambda_0^\top\Sigma^{-1}S_{kl}\Sigma^{-1}\Lambda_0 Z_{n,i}\big)
\big(D_i^\top\Sigma^{-1}S_{rs}\Sigma^{-1}D_i\big)
\nonumber\\
&\qquad
+h_n
\big(Z_{n,i}^\top\Lambda_0^\top\Sigma^{-1}S_{rs}\Sigma^{-1}\Lambda_0 Z_{n,i}\big)
\big(D_i^\top\Sigma^{-1}S_{kl}\Sigma^{-1}D_i\big)
\Bigg\},
\label{eqn:J3i-1}
\\[0.5em]
\mathcal{J}^{(2)}_{3,i}
&=
% \left\{
h_n D_i^\top B_{rs,kl}D_i.
% \right\}.
\label{eqn:J3i-2}
\end{align}

%%%%%%%%%%%%%%%%%%%%%%%%%%%%%%%%%%%%%%%%%%%%%%%%%%%%%%%%%%%%
% J4
%%%%%%%%%%%%%%%%%%%%%%%%%%%%%%%%%%%%%%%%%%%%%%%%%%%%%%%%%%%%
\begin{align}
\mathcal{J}^{(1)}_{4,i}
&=
\Bigg\{
-\operatorname{tr}(A_{kl})h_n^{-1}
\Delta_{n,i}^\top\Sigma^{-1}S_{rs}\Sigma^{-1}\Delta_{n,i}
-\operatorname{tr}(A_{rs})h_n^{-1}
\Delta_{n,i}^\top\Sigma^{-1}S_{kl}\Sigma^{-1}\Delta_{n,i}
\nonumber\\
&\qquad
+h_n^{-1}
\big(Z_{n,i}^\top\Lambda_0^\top\Sigma^{-1}S_{kl}\Sigma^{-1}\Lambda_0 Z_{n,i}\big)
\big(\Delta_{n,i}^\top\Sigma^{-1}S_{rs}\Sigma^{-1}\Delta_{n,i}\big)
\nonumber\\
&\qquad
+h_n^{-1}
\big(Z_{n,i}^\top\Lambda_0^\top\Sigma^{-1}S_{rs}\Sigma^{-1}\Lambda_0 Z_{n,i}\big)
\big(\Delta_{n,i}^\top\Sigma^{-1}S_{kl}\Sigma^{-1}\Delta_{n,i}\big)
\Bigg\},
\label{eqn:J4i-1}
\\[0.5em]
\mathcal{J}^{(2)}_{4,i}
&=
% \left\{
2D_i^\top B_{rs,kl}\Delta_{n,i}
+h_n^{-1}\Delta_{n,i}^\top B_{rs,kl}\Delta_{n,i}.
% \right\}.
\label{eqn:J4i-2}
\end{align}

%%%%%%%%%%%%%%%%%%%%%%%%%%%%%%%%%%%%%%%%%%%%%%%%%%%%%%%%%%%%
% Remainder of \Sigma
%%%%%%%%%%%%%%%%%%%%%%%%%%%%%%%%%%%%%%%%%%%%%%%%%%%%%%%%%%%%

\begin{align}
\begin{aligned}
    \mathscr R_{\ref{thm:asymp-sigma},i}=& -\frac{\alpha^2}{4}\left(1+\frac{1}{\alpha}\right)
|\Sigma|^{-\alpha/2}\fst{- K_i({\bm\theta}) \exp\fst{\upsilon_i}}\mathcal{J}_{0,i}^{(1)} +\frac{\alpha}{2}\left(1+\frac{1}{\alpha}\right)
|\Sigma|^{-\alpha/2}\fst{- K_i({\bm\theta}) \exp\fst{\upsilon_i}}\mathcal{J}_{0,i}^{(2)}\\
&\qquad\qquad +\mathcal{J}_{1,i}'+ \mathcal{J}_{2,i}'+ \mathcal{J}_{3,i}' +\mathcal{J}_{4,i}'
\end{aligned}\label{eqn:remainder_thm_sigma_asymp}
\end{align}

\begin{align}
\begin{aligned}\label{eqn:expansion_cross_deriv}
\frac{\partial^2 V_{n,i}^\alpha}{\partial \beta_u\,\partial \sigma_{rs}}
&=
\left(1+\frac{1}{\alpha}\right)
|\Sigma|^{-\alpha/2}
\exp\!\left(
-\frac{\alpha}{2}
Q_i
\right)
\\
&\quad\times
\Bigg[
\alpha
\Bigg(
\sqrt{h_n}\,
\fst{\frac{\partial a_{i-1}}{\partial\beta_u}}^\top
\Sigma^{-1}\Lambda_0 Z_{n,i}
+
h_n\,
\fst{\frac{\partial a_{i-1}}{\partial\beta_u}}^\top
\Sigma^{-1}D_i(\bm\beta)
+
\fst{\frac{\partial a_{i-1}}{\partial\beta_u}}^\top
\Sigma^{-1}\Delta_{n,i}
\Bigg)
\\
&\qquad\times
\Bigg\{
\frac{\alpha}{2}
\operatorname{tr}(\Sigma^{-1}S_{rs})
-
\frac{\alpha}{2h_n}
\Big(
h_n\,
Z_{n,i}^\top\Lambda_0^\top
\Sigma^{-1}S_{rs}\Sigma^{-1}\Lambda_0 Z_{n,i}
\\
&\qquad\qquad
+2\sqrt{h_n}\,
Z_{n,i}^\top\Lambda_0^\top
\Sigma^{-1}S_{rs}\Sigma^{-1}\Delta_{n,i}
+2h_n^{3/2}\,
Z_{n,i}^\top\Lambda_0^\top
\Sigma^{-1}S_{rs}\Sigma^{-1}D_i(\bm\beta)
\\
&\qquad\qquad
+\Delta_{n,i}^\top
\Sigma^{-1}S_{rs}\Sigma^{-1}\Delta_{n,i}
+2h_n\,
\Delta_{n,i}^\top
\Sigma^{-1}S_{rs}\Sigma^{-1}D_i(\bm\beta)
\\
&\qquad\qquad
+h_n^2\,
D_i(\bm\beta)^\top
\Sigma^{-1}S_{rs}\Sigma^{-1}D_i(\bm\beta)
\Big)
\Bigg\}
\\
&\qquad
+\alpha
\Bigg(
\sqrt{h_n}\,
\fst{\frac{\partial a_{i-1}}{\partial\beta_u}}^\top
\Sigma^{-1}S_{rs}\Sigma^{-1}\Lambda_0 Z_{n,i}
+
h_n\,
\fst{\frac{\partial a_{i-1}}{\partial\beta_u}}^\top
\Sigma^{-1}S_{rs}\Sigma^{-1}D_i(\bm\beta)
\\
&\qquad\qquad
+
\fst{\frac{\partial a_{i-1}}{\partial\beta_u}}^\top
\Sigma^{-1}S_{rs}\Sigma^{-1}\Delta_{n,i}
\Bigg)
\Bigg].
\end{aligned}
\end{align}

\begin{align}\begin{aligned}
   \mathscr R_{\ref{thm:asymp-indep},i}
&=
\left(1+\frac{1}{\alpha}\right)
|\Sigma|^{-\alpha/2}
\exp\!\left(
-\frac{\alpha}{2h_n}
R_i^\top\Sigma^{-1}R_i
\right)
\\
&\quad\times
\alpha
\Bigg[
h_n\,
\fst{\frac{\partial a_{i-1}}{\partial\beta_u}}^\top
\Sigma^{-1}D_i(\bm\beta)
\left\{
\frac{\alpha}{2}
\operatorname{tr}(\Sigma^{-1}S_{rs})
-
\frac{\alpha}{2}
Z_{n,i}^\top\Lambda_0^\top
\Sigma^{-1}S_{rs}\Sigma^{-1}\Lambda_0 Z_{n,i}
\right\}
\\
&\qquad
+\frac{\alpha}{2}\,
\fst{\frac{\partial a_{i-1}}{\partial\beta_u}}^\top
\Sigma^{-1}\Lambda_0 Z_{n,i}
\Big(
-2\sqrt{h_n}\,
Z_{n,i}^\top\Lambda_0^\top
\Sigma^{-1}S_{rs}\Sigma^{-1}D_i(\bm\beta)
\Big)
% \\
% &\qquad
+
h_n\,
\fst{\frac{\partial a_{i-1}}{\partial\beta_u}}^\top
\Sigma^{-1}S_{rs}\Sigma^{-1}D_i(\bm\beta)
\\
&\qquad
+
\fst{\frac{\partial a_{i-1}}{\partial\beta_u}}^\top
\Sigma^{-1}\Delta_{n,i}
\left\{
\frac{\alpha}{2}
\operatorname{tr}(\Sigma^{-1}S_{rs})
-
\frac{\alpha}{2}
Z_{n,i}^\top\Lambda_0^\top
\Sigma^{-1}S_{rs}\Sigma^{-1}\Lambda_0 Z_{n,i}
\right\}
\\
&\qquad
+
\fst{\frac{\partial a_{i-1}}{\partial\beta_u}}^\top
\Sigma^{-1}S_{rs}\Sigma^{-1}\Delta_{n,i}
\\
&\qquad
-
\frac{\alpha}{2h_n}\,
\fst{\frac{\partial a_{i-1}}{\partial\beta_u}}^\top
\Sigma^{-1}\Lambda_0 Z_{n,i}
\Big(
\Delta_{n,i}^\top
\Sigma^{-1}S_{rs}\Sigma^{-1}\Delta_{n,i}
\\
&\qquad\qquad
+2h_n\,
\Delta_{n,i}^\top
\Sigma^{-1}S_{rs}\Sigma^{-1}D_i(\bm\beta)
+h_n^2\,
D_i(\bm\beta)^\top
\Sigma^{-1}S_{rs}\Sigma^{-1}D_i(\bm\beta)
\Big)
\Bigg]\\
&+\left(1+\frac{1}{\alpha}\right)
|\Sigma|^{-\alpha/2}
\exp\!\left(
-\varrho_i
\right)(-K_i({\bm\theta}))
\\
&\quad\times
\alpha\,\sqrt{h_n}
\Bigg[
\fst{\frac{\partial a_{i-1}}{\partial\beta_u}}^\top
\Sigma^{-1}\Lambda_0 Z_{n,i}
\left\{
\frac{\alpha}{2}
\operatorname{tr}(\Sigma^{-1}S_{rs})
-
\frac{\alpha}{2}
Z_{n,i}^\top\Lambda_0^\top
\Sigma^{-1}S_{rs}\Sigma^{-1}\Lambda_0 Z_{n,i}
\right\}
\\
&\qquad
+
\fst{\frac{\partial a_{i-1}}{\partial\beta_u}}^\top
\Sigma^{-1}S_{rs}\Sigma^{-1}\Lambda_0 Z_{n,i}
\Bigg].
\end{aligned}\label{eqn:asymp_indep_rem}
\end{align}

\end{document}